\documentclass[a4paper,12pt]{amsart}
\usepackage[round]{natbib}
\usepackage{url} % Break URL and DOI in bibliography across lines
\usepackage[utf8]{inputenc}
\usepackage[all]{xy}
\usepackage{amssymb}
\usepackage{amsmath}
\usepackage{mathtools}
\usepackage{upgreek} %Alternate greek letters
\usepackage{prodint} % Product integral
\usepackage{csquotes}
\usepackage{subfig}
\usepackage{booktabs} % For tables
\usepackage[shortlabels]{enumitem}
\usepackage{graphicx}
\graphicspath{{Figures/}} % Location of the graphics files
\usepackage{amsaddr}
\usepackage{color}
\usepackage{appendix}
%For figures
\usepackage{mwe}
\usepackage{subfig}
\usepackage{tikz} % DAG
\usetikzlibrary{arrows.meta,shapes} % DAG
%% SWIGS: https://sites.stat.washington.edu/tsr/website/documents/2018/tikz-for-swigs/swig-examples.pdf
\usepackage{pgf,tikz}
\usetikzlibrary{arrows,shapes.arrows,shapes.geometric,shapes.multipart,decorations.pathmorphing,positioning,shapes.swigs,}

\usepackage{multirow}
\setlength{\tabcolsep}{20pt}

\usepackage{soul}
\usepackage{xcolor}
\usepackage[left=1in, right=1in, top = 1.3in, bottom = 1.3in]{geometry}
\usepackage{empheq}

\usepackage{setspace}
%\doublespacing
\setstretch{1.8}

\theoremstyle{definition} % Set theorem environments to roman
\newcommand\independent{\protect\mathpalette{\protect\independenT}{\perp}}
\def\independenT#1#2{\mathrel{\rlap{$#1#2$}\mkern2mu{#1#2}}}
\newtheorem{lemma}{Lemma}

\newtheorem{theorem}{Theorem}
\newtheorem{proposition}{Proposition}
\newtheorem{definition}{Definition}

\title[]{Causal inference with recurrent and competing events}
% through the lens of counterfactuals

\author{Matias Janvin$^1$, Jessica G. Young$^{2,3,4}$, Pål C. Ryalen$^5$, Mats J. Stensrud$^1$}
 \address{$^1$Institute of Mathematics, École Polytechnique Fédérale de Lausanne, Switzerland}
 \address{$^2$Department of Epidemiology, Harvard T.H. Chan School of Public Health, USA}
 \address{$^3$CAUSALab, Harvard T.H. Chan School of Public Health, USA}
 \address{$^4$Department of Population Medicine, Harvard Medical School and Harvard Pilgrim Health Care Institute, USA}
\address{$^5$Department of Biostatistics, University of Oslo, Norway}

\date{\today}

\MakeOuterQuote{"}\EnableQuotes
\DeclareUnicodeCharacter{00A0}{ }

\allowdisplaybreaks % Allow page breaks in align environment

 % Increase vertical spacing between rows in tables

\begin{document}
\maketitle

\clearpage

\begin{abstract}
    Many research questions concern treatment effects on outcomes that can recur several times in the same individual. For example, medical researchers are interested in treatment effects on hospitalizations in heart failure patients and sports injuries in athletes. Competing events, such as death, complicate causal inference in studies of recurrent events because once a competing event occurs, an individual cannot have more recurrent events. Several statistical estimands have been studied in recurrent event settings, with and without competing events. However, the causal interpretations of these estimands, and the conditions that are required to identify these estimands from observed data, have yet to be formalized. Here we use a formal framework for causal inference to formulate several causal estimands in recurrent event settings, with and without competing events.  We clarify when commonly used classical statistical estimands can be interpreted as causal quantities from the causal mediation literature, such as (controlled) direct effects and total effects. Furthermore, we show that recent results on interventionist mediation estimands allow us to define new causal estimands with recurrent and competing events that may be of particular clinical relevance in many subject matter settings.  We use causal directed acyclic graphs and single world intervention graphs to illustrate how to reason about identification conditions for the various causal estimands based on subject matter knowledge. Furthermore, using results on counting processes, we show that our causal estimands and their identification conditions, which are articulated in discrete time, converge to classical continuous time counterparts in the limit of fine discretizations of time. We propose estimators and study their properties. Finally, we use the proposed estimators to compute the effect of blood pressure lowering treatment on the recurrence of acute kidney injury using data from the Systolic Blood Pressure Intervention Trial (SPRINT).

\end{abstract}

\section{Introduction\label{sec:introduction}}

Practitioners and researchers are often interested in treatment effects on outcomes that can recur in the same individual over time. Such outcomes include hospitalizations in heart failure patients~\citep{anker_time_2012}, fractures in breast cancer patients with skeletal metastases~\citep{chen_tests_2004} and rejection episodes in recipients of kidney transplants~\citep{cook_marginal_1997}. However, in many studies of recurrent events, individuals also experience competing events, such as death.   These events may substantially complicate causal inference.  

For example, in the Systolic Blood Pressure Intervention Trial~\citep{sprint2015randomized}, investigators found that intensive blood pressure lowering therapy increased the expected number of acute kidney injury episodes (a possible harmful side effect of blood pressure treatment) compared to standard blood pressure treatment. However, individuals on intensive blood pressure therapy had a lower incidence of all-cause mortality.

In this example, all-cause mortality is a \textsl{competing event} for the outcome of interest (number of recurrences of acute kidney injury) because once an individual dies they cannot subsequently experience the recurrent event.\footnote{What we define as a \textsl{competing event} is often called a \textsl{terminating} event in the recurrent events literature.}  Due to the randomized design these findings indeed have a ``causal'' interpretation: the results support an average harmful effect of intensive blood pressure treatment on the number of acute kidney injury recurrences. However, analogous to previous arguments in the case where the outcome of interest is an incident (rather than a recurrent) event subject to competing events \citep{young_causal_2020,stensrud_separable_2020-1}, this ``protection'' is difficult to interpret in light of the finding that mortality risk is lowered by intensive blood pressure treatment.  The increased number of acute kidney injury episodes in the intensive treatment arm might \textsl{only} be due to the treatment effect on mortality.

Early works on competing events \citep{tsiatis_nonidentifiability_1975,gail_review_1975,prentice1978analysis} considered the problem of identifying the survival function of the event of interest under elimination of competing events, and concluded that this is often unfeasible as it requires strong independence assumptions between the competing events, as well as a clear conceptualization of how to eliminate a competing event without affecting the risk of the other events. Instead,~\citet{prentice1978analysis} advocated for the cause-specific hazard as an identifiable estimand in the competing event setting.  Although these early works are clearly based on an insightful intuition, they are not grounded in a formal framework for characterizing causal effects and their identifying conditions, which makes it difficult to interpret the effect estimates from these procedures and to assess recommendations regarding analytic choices. For example, it has been clarified that cause-specific hazards do not have a desirable causal effect interpretation \citep{robins_new_1986,young_causal_2020,aalen2015does,martinussen_subtleties_2020,hernan_hazards_2010, stensrud2020test,stensrud2020conditional}.

The importance of characterizing the causal interpretation of statistical estimands is increasingly acknowledged both within and outside of the academic causal inference community~\citep{european_medicines_agency_qualification_nodate}. In a series of articles by the Recurrent Event Qualification Opinion Consortium~\citep{schmidli_estimands_2021,wei_properties_2021,fritsch_efficiency_2021}, six candidate causal estimands were proposed in recurrent event settings with competing events, defined by counterfactual contrasts under different treatment scenarios in the following: 1) the expected number of events in the study population;  2) the expectation over a composite of the recurrent and competing events in the study population; 3) the expected number of events under an intervention which prevents the competing event from occurring in the study population; 4) the expected number of events in a subset of the study population consisting of the principal stratum of individuals that would survive regardless of treatment; 5) the ratio of the expected number of recurrences to the restricted mean survival by the end of follow-up in the study population, and 6) the ratio of the expectation over a composite of the recurrent and competing events to the restricted mean survival by the end of follow-up in the study population.

In addition to defining these various counterfactual estimands, ~\citet{schmidli_estimands_2021} considered some aspects of their differences in interpretation, as well as, for some of the estimands, approaches to statistical analysis. However, they did not consider assumptions needed to identify any of these counterfactual estimands in a given study with a function of the observed data. Once a causal estimand is chosen, this identification step is required to justify a choice of approach to statistical analysis. Furthermore, \cite{schmidli_estimands_2021} did not consider how underlying questions about treatment mechanism may be important to the choice of estimand in recurrent events studies when treatment has a causal effect on competing events, as illustrated in the example above. %

In this work, we formalize the interpretation, identification and estimation of various counterfactual estimands in recurrent event settings with competing events using counterfactual causal models ~\citep{robins_new_1986,pearl_causality:_2009,richardson_single_2013, robins_alternative_2011, robins2020interventionist}. Building on ideas in~\citet{young_causal_2020} and~\citet{stensrud_separable_2020-1} for the case where the outcome of interest is an incident event (e.g. diagnosis of prostate cancer), we show that several of these estimands in recurrent event settings can be interpreted as special cases of causal effects from the mediation literature -- total, controlled direct, and separable effects -- by conceptualizing the competing event as a time-varying ``mediator'' \citep{robins1992identifiability,robins_alternative_2011,robins2020interventionist}.  We give identification conditions and derive identification formulas for these estimands and demonstrate how single world intervention graphs (SWIGs) ~\citep{richardson_single_2013} can be used to reason about identification conditions with subject matter knowledge. Our results will also formalize the counterfactual interpretation of statistical estimands for recurrent events from the counting process literature \citep{cook_statistical_2007, andersen_modeling_2019}, which has not adopted a formal causal (counterfactual) framework for motivating results.

The article is organized as follows.  In Sec.~\ref{sec:observed_data}, we present the structure of the observed data, without the complication of losses to follow-up. In Sec.~\ref{sec:estimands_with_competing_events}, we define and describe several causal estimands for recurrent events in settings with competing events. In Sec.~\ref{sec:choosing_estimand} we give our own prescription for choosing an appropriate causal estimand for recurrent outcomes. In Sec.~\ref{sec:censoring}, we consider how to treat the censoring of events, including by loss to follow-up. In Sec.~\ref{sec:identification} we discuss identifiability conditions and give identification formulas for the proposed causal estimands. Furthermore, we demonstrate the convergence of discrete time estimands to continuous time estimands, and establish the correspondence between the discrete time identification conditions and the classical independent censoring assumption in event history analysis.\footnote{While these results are shown for recurrent event outcomes, they also apply to the more classical competing event setting described in~\citet{young_causal_2020}, which constitutes a special case of the current work.} In Sec.~\ref{sec:estimation_maintext}, we describe statistical methods for the proposed estimands, and establish conditions for their consistency. In Sec.~\ref{sec:examples_revisited}, we illustrate our results using a data example on acute kidney injury under blood pressure treatment. Finally, in Sec.~\ref{sec:discussion}, we provide a discussion.

\section{Factual data structure \label{sec:observed_data}}
Consider a randomized trial, like SPRINT, where $i\in \{1,\dots,n\}$ i.i.d.\ individuals with elevated risk of cardiovascular disease were randomized to standard versus intensive blood pressure lowering therapy $A\in\{0,1\}$ ($0$ indicates assignment to standard treatment, $1$ assignment to intensive treatment).  Because the individuals are i.i.d., we suppress the subscript $i$. Let $k\in\{0,\dots,K+1\}$ denote $K+2$ consecutive ordered intervals of time comprising the follow-up (e.g. days, weeks, months)  with time interval $k=0$ corresponding to the interval of treatment assignment (baseline) and $k=K+1$ corresponding to the last possible follow-up interval, beyond which no information has been recorded. Without loss of generality, we choose a timescale such that all intervals have a duration of 1 unit of time until Sec.~\ref{sec:correspondence}.

Let $Y_k\in \{0,1,2,\dots\}$ denote the cumulative count of acute kidney injury episodes by the end of interval $k$ and $D_k\in\{0,1\}$ an indicator of death by the end of interval $k$. Define $D_0\equiv Y_0\equiv 0$, that is, individuals are alive and have not yet experienced any post-treatment recurrent events at baseline. Let $L_0$ be a vector of baseline covariates measured before the treatment assignment $A$, capturing pre-treatment common causes of acute kidney injury and death. For $k>0$, let $L_k\in\mathcal{L}$ denote a vector of time-varying covariates measured in interval $k$, containing the most recent blood pressure measurements.\footnote{Our presentation focuses on  \textsl{intention-to-treat} effects by defining $A$ as an indicator of baseline assignment to a particular treatment arm.  Our results trivially extend to accommodate effects of adherence to a particular protocol at baseline by instead taking $A$ to be the actual treatment strategy followed at baseline and by including common causes of treatment adherence, acute kidney injury and death in $L_0$.  In either case, indicators of time-varying adherence to the protocol may be important to include in $L_k$, $k>0$, for the purposes of identification to be discussed in later sections.}

The history of a random variable through $k$ is denoted by an overbar (i.e. $\overline{Y}_{k} \equiv (Y_0,\dots,Y_k)$ and $\overline{L}_{k} \equiv (L_0,\dots,L_k)$) and future events are denoted by underbars (i.e. $\underline{D}_{k} \equiv (D_k,\dots,D_{K+1})$).

We assume no loss to follow-up until Sec.~\ref{sec:censoring}, and we assume that variables are temporally (and topologically\footnote{A topological order is a linear ordering of nodes in a graph from first to last.}) ordered as $D_k, Y_k,L_k$ within each follow-up interval. We adopt the notational convention that any variable with a negative time index occurring in a conditioning set is taken to be the empty set $\emptyset$ (e.g.\ $P(A=a|L_{-1},B) = P(A=a|B)$ for an event $B$).

An individual cannot experience recurrent events after a competing event, such as death, has occurred: if an individual experiences death at time $k^\dagger$, then $Y_{j}=Y_{k^\dagger-1}$ and $D_{j}=1$  for all $j \geq k^\dagger$. Thus, the type of outcome that is the focus of this manuscript is defined in the factual data after death occurs.  This is in contrast to what we will refer to as a 'truncation by death' setting, where the outcome of interest is undefined after an individual experiences the competing event \citep{young_causal_2020, young2021identified, stensrud2020conditional,stensrud_discussion_2021}. For example, when the outcome of interest is quality of life in cancer patients, this is only defined for individuals who are alive unless the investigator chooses to assign an arbitrary quality of life value to dead individuals.  Ultimately, the true distinction between a ``competing event'' setting, where outcomes cannot occur post-death, and a ``truncation by death'' setting comes down to the estimands that the investigator is willing to consider.  When the outcome is undefined after death then certain estimands will not be available that are available when such outcomes are defined \citep{young2021identified}.  We consider this further in Section 3.

In what follows, we will use causal directed acyclic graphs~\citep{pearl_causality:_2009} (DAGs) to represent underlying data generating models. We assume that the DAG represents a Finest Fully Randomized Causally Interpreted Structural Tree Graph (FFRCISTG) model~\citep{robins_new_1986,richardson_single_2013}. Furthermore, we will assume that statistical independencies in the data are faithful to the DAG (see Appendix~\ref{sec:app_correspondence_censoring} for the definition of faithfulness that we adopt here). An example of a DAG, encoding a set of possible assumptions on the data generating model for the trial in the data example described in Sec.~\ref{sec:introduction} is shown in Fig.~\ref{fig:DAG_observed_data}.

\begin{figure} 
    \centering
\begin{tikzpicture}
    \begin{scope}[every node/.style={thick,draw=none}]
    \node[name=Yk] at (2,1.5){$Y_k$};
    \node[name=Dk] at (2,-1.5){$D_k$};
    \node[name=Dk1] at (5.5,-1.5){$D_{k+1}$};
    \node[name=Yk1] at (5.5,1.5){$Y_{k+1}$};
    \node[name=A] at (-2,0){$A$};
    % \node[name=AY] at (0,1.5){$A_Y$};
    % \node[name=AD] at (0,-1.5){$A_D$};
    \node[name=L0] at (-4,0){$L_0$};
    \node[name=L0prime] at (1,0){$L_{k-1}$};
\end{scope}
\begin{scope}[>={Stealth[black]},
              every node/.style={fill=white,circle},
              every edge/.style={draw=black,very thick}]
% 	\path[->] (A) edge (AY);
    % \path [->] (A) edge (AD);
    %\path [->] (AY) edge (Yk);
    \path [->] (A) edge (Yk);
    \path [->] (A) edge (L0prime);
    \path [->] (L0prime) edge (Yk);
    \path [->] (L0prime) edge (Yk1);
    \path [->] (L0prime) edge (Dk);
    \path [->] (L0prime) edge (Dk1);
    \path [->] (A) edge (Dk);
    % \path [->] (AY) edge[bend left=15] (Yk1);
    \path [->] (A) edge[bend right=30] (Dk1);
    \path [->] (A) edge[bend left=30] (Yk1);
    \path [->] (Dk) edge (Yk);
    \path[->] (Yk) edge (Yk1);
    \path[->] (Yk) edge (Dk1);
    \path[->] (Dk1) edge (Yk1);
    \path[->] (Dk) edge (Dk1);
    % \path [->] (L0) edge (A);
    \path [->] (L0) edge[bend right] (Dk);
    \path [->] (L0) edge[bend left] (Yk);
    \path [->] (L0) edge[bend right=35] (Dk1);
    \path [->] (L0) edge[bend left=35] (Yk1);
    \path [->] (L0) edge[bend right] (L0prime);
\end{scope}
\end{tikzpicture}

\caption{An example of a possible causal model describing the Systolic Blood Pressure Intervention Trial \citep{sprint2015randomized}, where individuals are randomized to intensive versus standard blood pressure therapy. The trial outcomes are recurrent episodes of acute kidney injury ($Y_k$) and survival ($D_k$).}
\label{fig:DAG_observed_data}
\end{figure}

\section{Counterfactual estimands\label{sec:estimands_with_competing_events}}
In this section, we consider various counterfactual estimands in settings with recurrent and competing events. We propose extensions of previously considered counterfactual estimands that quantify causal effects on incident failure in the face of competing events   \citep{stensrud_separable_2020-1,young_causal_2020,stensrud_generalized_2021} to the recurrent events setting. This includes a new type of \textsl{separable effect}, inspired by the seminal decomposition idea of \citet{robins_alternative_2011}, that may disentangle the treatment effect on recurrent acute kidney injury from its effect on survival. We also discuss additional counterfactual estimands in recurrent event settings.

We denote counterfactual random variables by superscripts, such that $Y^{a}_k$ is the recurrent event count that would be observed at time $k$ had, possibly contrary to fact, treatment been set to $A=a$. By causal effect, we mean a contrast of some functional (e.g. the mean) of the counterfactual distribution in the same subset of individuals.

\subsection{Total effect\label{sec:estimand_total}}
The counterfactual marginal mean number of recurrent events by time $k$ under an intervention that sets $A$ to $a$ is 
\begin{align*}
    E[Y^a_k] \quad\textrm{for}\quad k\in\{0,\dots,K+1\} ~.
\end{align*}

In turn, the counterfactual contrast
\begin{align}
    E[Y^{a=1}_k] ~\textrm{vs.}~ E[Y^{a=0}_k]  ~ \label{eq:total_effect_no_censoring}
\end{align}
quantifies a causal effect of treatment assignment on the mean number of recurrent events by $k$.  \citet{schmidli_estimands_2021} referred to this effect as the 'treatment policy' estimand.  However, in order to understand the interpretational implications of choosing this effect measure when competing events exist, it is important to understand that (\ref{eq:total_effect_no_censoring}) also coincides with an example of a \textsl{total effect} as historically defined in the causal mediation literature ~\citep{robins1992identifiability,young_causal_2020}.

In our running example, the total effect quantifies the effect of intensive versus standard blood pressure treatment ($A$) on recurrent acute kidney injury ($Y_k$) through all causal pathways, including pathways through survival ($D_k$), as depicted by all directed paths connecting $A$ and $Y$ nodes intersected by $D$ nodes in the causal diagram in Fig. ~\ref{fig:DAG_observed_data} \citep{young_causal_2020}. Therefore, a non-null value of the total effect is not sufficient to conclude that the treatment exerts direct effects on acute kidney injury (outside of death): the total effect may also (or only) be due to an (indirect) effect on survival, keeping individuals at risk of acute kidney injury for a longer (or shorter) period of time. 

In addition to the total effect on recurrent acute kidney injury ($Y_k$), we might consider the total (marginal) effect of treatment on survival, given by the marginal contrast in cumulative incidences
\begin{align}
    E[D^{a=1}_{k}]~\textrm{vs.}~ E[D^{a=0}_k] ~. \label{eq:total_effect_survival}
\end{align}
However, simultaneously considering the total effect of treatment on acute kidney injury and on survival is still insufficient to determine by which  mechanisms the  treatment affects acute kidney injury and death. 
For example, suppose that individuals in treatment arm $A=1$ experience the competing event shortly after treatment initiation. In this case, no recurrent events would be recorded in this treatment arm. Clearly, in this setting it would not be possible for an investigator to draw any conclusions about the mechanism by which treatment acts on the recurrent event outside of the competing event.

\subsection{Controlled direct effect}
Following \citet{robins1992identifiability}, consider the counterfactual mean number of events under an intervention that prevents the competing event from occurring and sets treatment to $A=a$,
\begin{align*}
    E[Y^{a,\overline{d}=0}_k] \quad\textrm{for}\quad  k\in\{0,\dots,K+1\} ~,
\end{align*}
where the overline in the superscript denotes an intervention on all respective intervention nodes in the history of the counterfactual, i.e.\ $Y_k^{a,\overline{d}=0} \equiv Y_k^{a,\overline{d}_{k}=0}$. 

In turn, the counterfactual contrast
\begin{align}
    E[Y^{a=1,\overline{d}=0}_k]~\textrm{vs.}~ E[Y^{a=0,\overline{d}=0}_k]~ \label{eq:direct_effect_no_censoring}
\end{align}
quantifies a causal effect of treatment assignment on the mean number of recurrent events by $k$ under an additional intervention that somehow ``eliminates competing events''.  \citet{schmidli_estimands_2021}  referred to this effect as the 'hypothetical strategy' estimand.  However, it is useful to notice that the effect (\ref{eq:direct_effect_no_censoring}) coincides with an example of a \textsl{controlled direct effect} as defined in the causal mediation literature ~\citep{robins1992identifiability,young_causal_2020}. The quantity $E[Y^{a,\overline{d}=0}_k]$ is closely related to the survival function under the elimination of competing events, as discussed in the early competing events literature by e.g.\ \citet{tsiatis_nonidentifiability_1975,prentice1978analysis,putter_tutorial_2007}, although without using a formal causal framework.

In our example, the controlled direct effect isolates direct effects of treatment on recurrent acute kidney injury by considering a (hypothetical) intervention which prevents death from occurring in all individuals. An important reservation against the controlled direct effect is that it is often difficult to conceptualize an intervention which prevents the competing event from occurring  ~\citep{young_causal_2020}. For example, there exists no practically feasible intervention that can eliminate death due to all causes. Without clearly establishing the intervention being targeted, the interpretation of the direct effect is ambiguous and its role in informing decision-making is unclear. The unclear role of the controlled direct effect in decision-making was reiterated by~\citet{schmidli_estimands_2021} in their discussion of the 'hypothetical strategy', although the authors did not discuss the role of the estimand in clarifying the mechanism by which treatment affects the outcome.

\subsection{Separable effects\label{sec:sep_effects_no_censoring}}

Following~ \citet{robins_alternative_2011,robins2020interventionist} and \citet{stensrud2020conditional, stensrud_separable_2020-1}, we will define an actionable notion of direct (indirect) effects that refers to an intervention that might be implemented currently or in the future. These effects require that the investigator pose candidates for modified versions of the study treatment, denoted $A_Y$ and $A_D$, with the following properties:  let $M_Y$ and $M_D$ be two random variables, and suppose that the following conditions hold for the original treatment $A$ and the modified treatments $A_Y,A_D$:
\begin{align}
    &\text{All effects of $A$, $A_Y$ and $A_D$ on $Y_k$ and $D_k$, $k\in\{0,\dots,K\}$, are intersected } \notag\\
    &\text{by $M_Y$ or $M_D$,  and } \notag\\
    &\quad M_Y^{a_Y=a,a_D} = M_Y^a \quad \text{for} \quad a_D\in\{0,1\} ~,\notag\\
    &\quad M_D^{a_Y,a_D=a}=M_D^a \quad\text{for}\quad a_Y\in\{0,1\} ~. \label{eq:modified_treatment_assumption}
\end{align}
"Intersection" refers to the paths in the respective causal DAG. Assumption (\ref{eq:modified_treatment_assumption}) is referred to as the modified treatment assumption and is discussed in~\citet{stensrud2020conditional}.  According to (\ref{eq:modified_treatment_assumption}), receiving $A_Y=A_D=a$ results in the same outcomes as receiving $A=a$ for $a\in\{0,1\}$. While a physical treatment decomposition is one way in which assumption (\ref{eq:modified_treatment_assumption}) may hold, it may also hold for modified treatments that are not a physical decomposition \citep{stensrud_generalized_2021,stensrud2020conditional}. The modified treatment assumption (\ref{eq:modified_treatment_assumption}) can in principle be falsified in a future six-armed trial where individuals are exposed to $A,A_Y,A_D\in\{0,1\}$ once the modified treatment becomes available~\citep{stensrud2020conditional}.

In the case of a decomposition, an individual receiving $A_Y=A_D=0$ has received the same treatment as $A=0$ (assignment to neither of the treatment components) and an individual receiving $A_Y=A_D=1$ the same treatment as $A=1$ (assignment to both treatment components). The marginal mean number of events under a hypothetical intervention where we jointly assign $A_Y=a_Y$ and $A_D=a_D$ for any combination of $a_D\in\{0,1\}$ and $a_Y\in\{0,1\}$, possibly such that $a_Y\neq a_D$ is 
\begin{align*}
     E[Y_k^{a_Y,a_D}] \quad\textrm{for}\quad k\in\{0,\dots,K+1\} ~.
\end{align*}
Contrasts of this estimand for different levels of $A_Y$ and $A_D$ constitute particular examples of \textsl{separable effects}~\citep{stensrud_separable_2020-1,stensrud_generalized_2021}, a type of interventionist mediation estimand \citep{robins_alternative_2011,robins2020interventionist,didelez2019defining}.  For example, the  separable effect of $A_Y$ evaluated at $A_D=0$ is
\begin{align}
    E[Y_k^{a_Y=1,a_D=0}] ~\textrm{vs.}~ E[Y_k^{a_Y=0,a_D=0}] \quad\textrm{for}\quad  k\in\{0,\dots,K+1\} ~. \label{eq:sep_effect_aY}
\end{align}
Expression (\ref{eq:sep_effect_aY}) quantifies the effect of only treating with the $A_Y$ component versus neither of the components. 

These estimands correspond to the effects of joint interventions on candidate modified treatments $A_Y$ and $A_D$, even when the modified treatment assumption (\ref{eq:modified_treatment_assumption}) does not hold. However, the modified treatment assumption (\ref{eq:modified_treatment_assumption}) is sufficient in order for the separable effects to explain the mechanism by which the original treatment $A$ exerts its effects on the recurrent outcome~\citep[Appendix A]{stensrud_generalized_2021}.

Returning to the data example, a well-known biological effect of angiotensin converting enzyme inhibitors (ACE) and angiotensin II receptor blockers (ARB) (two common antihypertensive medications) is that they reduce the renal filtration pressure by binding to receptors in the kidneys which dilate efferent glomerular arterioles, which in turn can lead to a substatial drug-induced fall in kidney function \citep{brunton_goodman_2018}.  In light of this, drug developers and doctors could be interested in the effect of a hypothetical modified version of an antihypertensive drug, which preserves its effects on systemic blood pressure but does not lead to dilation of efferent glomerular arterioles. In principle, such a modified drug might have similar cardioprotective effects as the original antihypertensive agent, but without the harmful side-effect that can lead to acute kidney injury.

This working background knowledge on the mechanisms by which the study treatment affects recurrent acute kidney injury and competing events allows us to pose candidates for $A_Y$ and $A_D$ in this example and, as we will discuss further below, interpret separable effects in terms of direct, indirect, or path-specific effects of $A$.  Specifically, the modified treatment assumption is conceivable in this example by defining $A_Y$ to be the component of blood pressure therapy that binds to efferent arterioles in the kidneys, causing their dilatation ($M_Y$), and $A_D$ as the remaining components of the treatment, including those that exert their effects by lowering systemic blood pressure ($M_D$). Thus, $A_Y$ and $A_D$ are the treatment levels of these two components under intensive versus standard therapy respectively.  A further discussion of this decomposition of blood pressure therapy into the aforementioned $A_Y$ and $A_D$ components is given in \citet{stensrud_generalized_2021}. 
 
Additional assumptions or \textit{isolation conditions} \citep{stensrud_generalized_2021}, are then required in order to \textit{interpret} any given separable effect as a direct, indirect, or otherwise path-specific effect of the original study treatment: if the $A_Y$ component has no effect on survival, then $E[Y_k^{a_Y=1,a_D}] \ \textrm{vs.}\ E[Y_k^{a_Y=0,a_D}]$ captures exclusively the effect of the $A_Y$ component on acute kidney injury not mediated by survival. We can formalize this statement using the condition of strong $A_Y$ partial isolation, inspired by ~\citet{stensrud_generalized_2021}:

A treatment decomposition satisfies strong $A_Y$ partial isolation if
\begin{align}
    \textrm{There are no causal paths from $A_Y$ to $D_k$ for all $k\in\{0,\dots,K+1\}$ ~.} \label{eq:isolation}
\end{align}
Under strong $A_Y$ partial isolation,  (\ref{eq:sep_effect_aY}) captures \emph{only} treatment effects on the recurrent event not via treatment effects on competing events, and is therefore a direct effect. In our example on blood pressure treatment, strong $A_Y$ partial isolation likely fails, as acute kidney injury may in and of itself increase the risk of death, and therefore effects through the path $A_Y\rightarrow Y_j \rightarrow D_{k>j}$ cannot be ruled out.

Consequently, (\ref{eq:sep_effect_aY}) also captures effects of $A_Y$ on $Y_k$ via $\overline{D}_k$, and therefore cannot be interpreted as a direct effect outside of $\overline{D}_k$. 

Another isolation condition, $A_D$ partial isolation, allows us to interpret separable effects as indirect effects of treatment on the recurrent outcome via effects on survival.
A brief account of the isolation conditions is given in Appendix~\ref{sec:app_isolation}, and is discussed in detail for the competing events setting in~\citet{stensrud_generalized_2021}. 
If we had access to a four arm randomized trial where individuals are observed under all four treatment combinations $(A_Y,A_D)\in\{0,1\}^2$, and there is no loss to follow-up, these effects could easily be identified and estimated by two-way comparisons of the four different treatment combinations. Such two-way comparisons would also allow the strong $A_Y$ partial isolation condition (\ref{eq:isolation}) to be tested: in particular, a non-null value of the two-way comparison $E[D_k^{a_Y=1,a_D}]$ vs. $E[D_k^{a_Y=0,a_D}]$ implies a violation of (\ref{eq:isolation}). Conversely, inspection of the contrast $E[Y_k^{a_Y,a_D=1}]$ vs. $E[Y_k^{a_Y,a_D=0}]$ can strengthen or weaken our belief in the $A_D$ partial isolation condition, although cannot be used to falsify the assumption.   Because we only observe two of the four treatment combinations in the trial described in Sec. \ref{sec:observed_data}, namely $A_Y=A_D=1$ and $A_Y=A_D=0$, the separable effects target effects that require identifying assumptions beyond those that hold by design in this two arm trial.  We will consider these assumptions in Sec.~\ref{subsection: separable effects identification}.

\subsection{Estimands with composite outcomes} \label{subsection: average rate estimands}

\citet{schmidli_estimands_2021} proposed the estimands
\begin{align}
    \frac{E[Y_{k}^{a}]}{E[\mu_{k}^{a}]} \quad\textrm{for}\quad  k\in\{1,\dots,K+1\}~, \label{eq:while_alive} \\
    E\left[ \frac{Y_{k}^a}{\mu_{k}^a} \right] \quad\textrm{for}\quad  k\in\{0,\dots,K+1\} \label{eq:average_individual_rate} ~,
\end{align}
where $\mu_k^{a}=\sum_{i=0}^k I(D_i^{a}=0)$ is the counterfactual restricted survival under an intervention that sets treatment to $a$.\footnote{In continuous time, the restricted survival can be written as $\int_0^t I(T^D\geq s)ds$, where $T^D$ is the time of the competing event. Taking the expectation gives  $\int_0^t S(s)ds$ for survival function $S(t)$, which is the restricted mean survival in continuous time~\citep{aalen_survival_2008}.}
Expressions (\ref{eq:while_alive})-(\ref{eq:average_individual_rate}) differ subtly:  (\ref{eq:average_individual_rate}) is the mean of a ratio and implicitly reflects the association between recurrent and competing events, whereas any information about this association is erased by (\ref{eq:while_alive}), which is a ratio of means. \citet{schmidli_estimands_2021} referred to (\ref{eq:while_alive}) as the `while alive strategy' estimand.  A contrast in (\ref{eq:while_alive})-(\ref{eq:average_individual_rate}) under different levels of $a$ captures both treatment effects on acute kidney injury and on the competing event.

Different types of composite outcomes have also been suggested. For example,~\citet{schmidli_estimands_2021} described the estimand
\begin{align*}
    E \left[ I(D_k^{a}=0)+Y_k^{a} \right] ~,
\end{align*}
which could also be extended by multiplying $D_k^{a}$ or $Y_k^{a}$ by a weight. Likewise,~\citet{claggett_quantifying_2018} introduced a reverse counting process, which can be formulated as
\begin{align*}
  E\left[ \sum_{j=1}^M I(Y_k^a < j)I(D_{k}^{a}=0) \right] 
\end{align*}
for recurrent outcomes. The estimand is the expectation over a counting process which starts at $M$ and decrements in steps of one every time the recurrent event occurs. If the terminating event occurs, the process drops to zero.

There are common limitations to all estimands in this subsection:
\begin{enumerate}[(I)]
    \item Neither can be used to draw formal conclusions about the mechanism by which the treatment affects the recurrent event and the event of interest for the same reason as the total effect (Sec.~\ref{sec:estimand_total}).
    \item The estimands (implicitly or explicitly) assign  weight to the competing and recurrent events by combining them into a single effect measure. However, the choice of 'weights' is not obvious and can differ on a case-by-case basis.
    \item The estimands represent a coarsening of the information in the cumulative incidence and mean frequency, and therefore provide less information than simultaneously inspecting the mean frequency of acute kidney injury and the cumulative incidence of death. Inspecting the mean frequency and cumulative incidence curves separately gives the additional advantage of showing the (absolute) magnitude of each estimand separately as functions of time, which is not visible from the composite estimand alone.
\end{enumerate}
Points (I)-(III) also apply to composite estimands in settings with truncation by death.

\subsection{Estimands that condition on the event history}
\label{sec: intensity based estimands}

The counterfactual intensity of the recurrent event process is defined as
\begin{align} 
    E[Y_{k+1}^a-Y_k^a\mid \overline{L}_k^a, \overline{Y}_k^a,\overline{D}_k^{a},A]  \quad\textrm{for}\quad  k\in\{0,\dots,K+1\} ~. \label{eq:intensity_without_competing}
\end{align}
Expression (\ref{eq:intensity_without_competing}) is a discrete time intensity of $Y_k^a$, conditional on the past history of recurrent events and measured covariates. One could then consider contrasts such as
\begin{align}
   & E[Y_{k+1}^{a=1}-Y_k^{a=1}\mid \overline{L}_k^{a=1}=\overline{l}_{k}, \overline{Y}_k^{a=1}=\overline{y}_{k},\overline{D}_k^{a=1}=0,A=1]\notag \\
    &\hspace{3cm}~\text{vs.}~ \label{eq:contrast_intensity_without_competing} \\
    &E[Y_{k+1}^{a=0}-Y_k^{a=0}\mid \overline{L}_k^{a=0}=\overline{l}_{k}, \overline{Y}_k^{a=0}=\overline{y}_{k},\overline{D}_k^{a=0}=0,A=0] ~. \notag
\end{align}
However, because (\ref{eq:intensity_without_competing}) conditions on the history of the recurrent event process up to time $k$, (\ref{eq:contrast_intensity_without_competing}) generally cannot be interpreted as a causal effect, even though it is a contrast of counterfactual outcomes. This is because it compares different groups of individuals -- those with a particular recurrent event and covariate process history under $a=1$ versus those with that same history under $a=0$. Thus,  a nonnull value of  (\ref{eq:contrast_intensity_without_competing}) does \emph{not} imply that $A$ has a nonnull causal effect on $Y$ at time $k$.
This is analogous to the difficulty in causally interpreting contrasts of hazards for survival outcomes, and has already been discussed extensively in the literature~\citep{robins_new_1986,young_causal_2020,martinussen_subtleties_2020,hernan_hazards_2010, stensrud2020test,stensrud2020conditional}.

An alternative estimand is the expanded notion of separable effects called \textsl{conditional} separable effects~\citep{stensrud2020conditional}, where consideration of causal effects is restricted to a particular subset of ``survivors''~\citep{stensrud2020conditional}. When strong $A_Y$ partial isolation holds, the conditional separable effect evaluated at $A_D=a_D$ is defined as the contrast
\begin{align*}
    E[Y_k^{a_Y=1,a_D}\mid D_k^{a_D}=0] ~\text{vs.}~E[Y_k^{a_Y=0,a_D}\mid D_k^{a_D}=0]~.
\end{align*}

Unlike (\ref{eq:contrast_intensity_without_competing}), the conditional separable effect can be interpreted as a contrast of counterfactual outcomes in the same subset of individuals.
Like the marginal separable effects discussed in Sec.~\ref{sec:sep_effects_no_censoring}, the conditional separable effects rely on assumptions that are testable in a future randomized trial~\citep{stensrud2020conditional}. However, the conditional separable effects require the assumption of strong $A_Y$ partial isolation in order to be well-defined, which is not required by the marginal separable effects. The conditional separable effects can be used even if the investigator considers the outcome of interest to be ill-defined after the competing event.

\subsection{Principal stratum estimand}
\citet{schmidli_estimands_2021} also considered the principal stratum estimand
\begin{align}
    E[Y^a_k \mid D_{k}^{a=0}=0,D_{k}^{a=1}=0] \quad\textrm{for}\quad  k\in\{0,\dots,K+1\} ~, \label{eq:principal_stratum}
\end{align}
which is closely related to the conditional separable effect. Contrasts of (\ref{eq:principal_stratum}), given by 
\begin{align*}
    E[Y^{a=1}_k \mid D_{k}^{a=0}=0,D_{k}^{a=1}=0] ~\text{vs.}~ E[Y^{a=0}_k \mid D_{k}^{a=0}=0,D_{k}^{a=1}=0] ~,
\end{align*}
correspond to principal stratum effects, e.g. the survivor average causal effect~\citep{robins_new_1986,frangakis_principal_2002,schmidli_estimands_2021}. Identification of (\ref{eq:principal_stratum}) was also considered by \citet{xu_bayesian_2022} in the semi-competing events setting. The principal stratum estimand targets an unknown subset of the population  \citep{ robins_new_1986,robins2007discussions, joffe2011principal,dawid2012imagine,stensrud_separable_2020-1,stensrud_translating_2022}. In cases where this subset is small, or non-existent, the principal stratum effects may play an unclear role in decision-making. Integrally linked to the unknown nature of the population to whom a principal stratum effect refers, this estimand depends on cross-world independence assumptions for identification that can not be falsified in any real-world experiment, in contrast to the (conditional) separable effects.

\subsection{Natural direct effect}\label{sec:natural_direct}
The natural (pure) direct effects, originally described by \citet{robins1992identifiability} and later reconsidered by \citet{pearl_direct_nodate}, give another way of defining treatment effects on the recurrent outcome which do not capture the effect on the competing event. One way of doing so is through the contrast
\begin{align*}
E\left[Y_k^{a=1,D_k^{a=0}} \right] ~\textrm{vs.}~ E\left[Y_k^{a=0,D_k^{a=0}} \right] ~.
\end{align*}

Like the controlled direct effect, the natural direct effect also requires the conceptualization of an intervention on the competing event.

Recent work has also considered identification of path specific effects which capture direct and indirect effects through longitudinal mediators~\citep{vansteelandt_mediation_2019,mittinty_longitudinal_2020} as well as natural effects formulated using random interventions on longitudinal mediators \citep{zheng_longitudinal_2017}.

\section{Choosing an estimand\label{sec:choosing_estimand}}

The choice of estimand for a particular problem must be motivated by subject matter
arguments. 
When there is no subject matter support for a causal effect of the treatment on the competing event (i.e. there are no directed arrows from $A$ into $D_k$ at any $k$) or when this mechanism does not create ambiguities with regard to mechanisms of the treatment then the total effect may be enough. 

However, if treatment effects on the competing event could create mechanisms that lead to an ambiguous interpretation of the total effect, then other estimands may help supplement  information quantified by the total effect.  Unlike other proposals for effects to quantify treatment mechanism outlined above, strong assumptions are required to even define the separable effects, putting aside even the issue of identifying them in the study data in hand, and to ascribe them a particular mechanistic interpretation.  Unfortunately, the alternative estimands provided do not avoid such assumptions but rather bury them: for example, an estimate obtained from a real-world study of a controlled direct effect defined relative to an ill-defined intervention on death, or a natural effect defined relative to setting death to a cross-world unobservable value, can never be refuted in the future without additional assumptions on par with the modified treatment assumption/isolation conditions required to understand a separable effect. The required transparency for proceeding with a separable effects analysis can, and in our view should, be viewed as a benefit of this approach: it shines needed light on the reality that using real-world data to answer mechanistic questions is hard and requires detailed assumptions about how the study treatment works. When an investigator is lacking that knowledge, the solution should not be to revert to untestable questions but to acknowledge the need for more time and thought to sharpen hypotheses. In such cases, one may proceed with a total effect, acknowledging its mechanism is not yet understood.  Alternatively, one may proceed with considering separable effects for yet to be elucidated candidates $A_Y$ and $A_D$.  Such an approach is arguably no more vague than previous (in)direct effect notions but, unlike those former notions, has a hope of being sharpened as more knowledge develops.

Finally, the identifying functions for separable effects coincides with those for certain path specific effects in certain settings, including those where full isolation holds. Thus, numerous advancements in statistics for path specific effects, such as natural effects, can still be leveraged for estimation of separable effects ~(see for example \citet{zheng_longitudinal_2017,vansteelandt_mediation_2019}).

\section{Censoring \label{sec:censoring}}
Define $C_{k+1}$, $k\in\{0,\dots,K\}$ as an indicator of loss to follow-up by $k+1$ such that, for an individual with $C_k=0, C_{k+1}=1$, the outcome (and covariate) processes defined in Sec. \ref{sec:observed_data} are only fully observed through interval $k$.  Loss to follow-up (e.g. due to failure to return for study visits) is commonly understood as a form of censoring.   We adopt a more general definition of censoring from ~\citet{young_causal_2020} which captures loss to follow-up but also possibly other events, depending on the choice of estimand.

\begin{definition}[Censoring, \citet{young_causal_2020}]
A censoring event is any event occurring in the study by $k+1$, for any $k\in\{0,\dots,K\}$, that ensures the values of all future counterfactual outcomes of interest under $a$ are unknown even for an individual receiving the intervention $a$.
\end{definition}

Loss to follow-up  by time $k$ is always a form of censoring by the above definition. However, other events may or may not be defined as censoring events depending on the choice of causal estimand. For example, competing events are censoring events by the above definition when the controlled direct effect is of interest, but are not censoring events when the total effect is of interest \citep{young_causal_2020}. This is because the occurrence of a competing event at time $k^\dagger$ prevents knowledge of $\underline{Y}^{a,\overline{d}=0}_{k^\dagger}$, but does not prevent knowledge of $\underline{Y}^a_{k^\dagger}$. By similar arguments, competing events are not censoring events when separable effects are of interest because they do not involve counterfactual outcomes indexed by $\overline{d}=0$ ("elimination of competing events"). When loss to follow-up is present in a study, we will define all effects relative to interventions that include ``eliminating loss to follow-up'' with the added superscript $\overline{c}=0$ to denote relevant counterfactual outcomes, e.g.\ $Y_k^{\overline{c}=0}$. For example, if loss to follow-up is due to the administrative end of a study, the intervention that eliminates loss to follow-up could be conceived as the hypothetical continuation of the study such that every individual is followed until the end of interval $K+1$. Contrasts of such effects are examples of controlled direct effects with respect to interventions on loss to follow-up. The identification assumptions outlined below are sufficient for identifying estimands with this additional interpretation.  \cite{young_causal_2020} discuss additional assumptions that would allow an interpretation without this additional intervention on loss to follow-up. In Sec.~\ref{sec:correspondence}, we establish the correspondence between the notion of censoring adopted in this article and the classical independent censoring assumption in event history analysis.

\section{Identification of the causal estimands\label{sec:identification}}
In this section, we give sufficient conditions for identifying the total, controlled direct and the separable effects as functionals of the observed data. Proofs can be found in Appendix \ref{sec:proof_identification}. Identification of estimands in Sections~\ref{sec: intensity based estimands}-\ref{sec:natural_direct} is beyond the scope of this work.

\subsection{Total effect\label{sec:identification_total_effect}}
 
Consider the following conditions for $k\in\{0,\dots,K\}$:

\textbf{Exchangeability}
\begin{align}
    \overline{Y}_{K+1}^{a,\overline{c}=0}   &\independent A | L_0 \label{eq:exchangeability_total_i}  ~,\\
    \underline{Y}_{k+1}^{a,\overline{c}=0} &\independent C_{k+1}^{a,\overline{c}=0} | \overline{L}_k^{a,\overline{c}=0}, \overline{Y}_k^{a,\overline{c}=0}, \overline{D}_k^{a,\overline{c}=0},\overline{C}_k^{a,\overline{c}=0}, A ~. \label{eq:exchangeability_total_ii}
\end{align}
Assumption (\ref{eq:exchangeability_total_i}) states that the baseline treatment is unconfounded given $L_0$. This holds by design with $L_0=\emptyset$ when treatment assignment $A$ is (unconditionally) randomized, such as in the blood pressure trial considered in our running example. Assumption (\ref{eq:exchangeability_total_ii}) states that the censoring is unconfounded. As we will discuss in Sec.~\ref{sec:correspondence}, this assumption is closely related to the independent censoring assumption in survival analysis.

\textbf{Positivity}
\begin{align}
&P(L_0=l_0) > 0 \implies P(A=a\mid L_0=l_0) >0 ~, \label{positivity_total_effect}\\
    &f_{A,\overline{L}_k,\overline{D}_k,\overline{C}_k,\overline{Y}_k}(a,\overline{l}_k,0,0,\overline{y}_k)>0  \notag\\
    &\quad \implies P(C_{k+1}=0\mid A=a, \overline{L}_k=\overline{l}_k,\overline{D}_k=0,\overline{C}_k=0,\overline{Y}_k=\overline{y}_k) > 0 ~. \label{positivity_total_effect_ii}
\end{align}
Assumption (\ref{positivity_total_effect}) states  that for every level of the baseline covariates, there are some individuals that receive either treatment. Once again, this will hold by design in a trial where $A$ is assigned by randomization, such as in the data example. The second assumption requires that, for any possible observed level of treatment and covariate history amongst those remaining alive and uncensored through $k$, some individuals continue to remain uncensored through $k+1$ with positive probability.

\textbf{Consistency}
\begin{align}
    &\text{If $A=a$ and $\overline{C}_{k+1}=0$,} \notag\\
    &\text{then $\overline{L}_{k+1}=\overline{L}_{k+1}^{a,\overline{c}=0},\overline{D}_{k+1}=\overline{D}_{k+1}^{a,\overline{c}=0}, \overline{Y}_{k+1}=\overline{Y}_{k+1}^{a,\overline{c}=0}$}, \overline{C}_{k+1}=\overline{C}_{k+1}^{a,\overline{c}=0}~.\label{eq:consistency_total_effect}
\end{align}

Let $\Delta X_k=X_{k}-X_{k-1}$ denote an increment of the process $X$.
In Appendix \ref{sec:proof_identification} we show that, under assumptions (\ref{eq:exchangeability_total_i})-(\ref{eq:consistency_total_effect}),
\begin{align}
E[&\Delta Y_i^{a,\overline{c}=0}] =\notag\\
    &\sum_{\Delta\overline{y}_i}\sum_{\overline{d}_i}\sum_{\overline{l}_{i-1}}\prod_{j=0}^{i} \notag\\
    &\quad \Delta y_i\cdot P(\Delta Y_j=\Delta y_j\mid \overline{D}_{j}=\overline{d}_j, \overline{C}_j=0,\overline{L}_{j-1}=\overline{l}_{j-1},\Delta \overline{Y}_{j-1}=\Delta \overline{y}_{j-1},A=a) \notag\\
    &\qquad \times P(D_{j}=d_{j}\mid \overline{C}_{j}=0, \overline{L}_{j-1}=\overline{l}_{j-1}, \Delta\overline{Y}_{j-1}=\Delta\overline{y}_{j-1}, \overline{D}_{j-1}=\overline{d}_{j-1}, A=a ) \notag\\
    &\qquad \times P(L_{j-1}=l_{j-1}\mid \Delta\overline{Y}_{j-1}=\Delta\overline{y}_{j-1}, \overline{D}_{j-1}=\overline{d}_{j-1},\overline{C}_{j-1}=0,\overline{L}_{j-2}=\overline{l}_{j-2},A=a)  \notag\\\label{eq:g-formula_total_effect}
\end{align}
for intervals $i\in\{0,\dots,K+1\}$. Expression (\ref{eq:g-formula_total_effect}) is an example of a g-formula~\citep{robins_new_1986}. Another equivalent formulation is
\begin{align}
    E[\Delta Y_i^{a,\overline{c}=0}] &=  
    E\bigg[ \frac{I(A=a)I(C_i=0)}{\pi_A(A)\prod_{j=0}^i \pi_{C_j}(C_j)} \cdot \Delta Y_i \bigg] ~, \label{eq:total_effect_IPCW_discrete}
\end{align}
where 
\begin{align*}
    \pi_{C_j}(\bullet) &= P(C_j=\bullet\mid \overline{C}_{j-1},\overline{D}_{j-1},\overline{L}_{j-1}, \overline{Y}_{j-1},A)~, \\
    \pi_A(\bullet)&=P(A=\bullet\mid L_0) ~.
\end{align*}
Expression (\ref{eq:total_effect_IPCW_discrete}) is an example of an inverse probability weighted (IPW) identification formula~\citep{robins_recovery_1992,rotnitzky_semiparametric_1995,hernan_marginal_2000}. In turn, the total effect defined in (\ref{eq:total_effect_no_censoring}) under an additional intervention that ``eliminates loss to follow-up'' can be expressed as contrasts of
\begin{align*}
    E[Y_{k+1}^{a,\overline{c}=0}] = \sum_{i=0}^{k+1} E[\Delta Y_i^{a,\overline{c}=0}] 
\end{align*}
 for different levels of $a$ with $E[\Delta Y_i^{a,\overline{c}=0}]$ identified by (\ref{eq:g-formula_total_effect}) or (\ref{eq:total_effect_IPCW_discrete}).  In the survival setting, with support $Y_k\in \{0,1\}$, (\ref{eq:g-formula_total_effect}) corresponds to Expression (30) in~\citet{young_causal_2020}. A key difference from the survival setting is that the conditional probability of new recurrent events now depends on the history of the recurrent event process, which may take many possible levels, whereas in the survival setting considered by~\citet{young_causal_2020}, the terms of the relevant g-formula are restricted to those with fixed event history consistent with no failure ($\overline{Y}_k=0$).
The identification formula for the total effect on the competing event (\ref{eq:total_effect_survival}) is shown in Appendix~\ref{sec:proof_identification}.

\subsubsection{Graphical evaluation of the exchangeability conditions}

\begin{figure} 
    \centering

%Unmeasured confounder
\subfloat[]{
\resizebox{0.8\columnwidth}{!}{
\begin{tikzpicture}
            \tikzset{line width=1.5pt, outer sep=0pt,
            ell/.style={draw,fill=white, inner sep=2pt,
            line width=1.5pt},
            swig vsplit={gap=5pt,
            inner line width right=0.5pt}};
            % \node[name=G1, shape=ellipse] at (3,3) {$g=(a,\overline{c}=0)$};
            \node[name=L,ell, shape=ellipse] at (-3,0){$L_0$};
            \node[name=Lk1,ell, shape=ellipse] at (3,0){$L_{k-1}^{a,\overline{c}=0}$};
            \node[name=UCY,ell, shape=ellipse] at (9.5,1.7){$U_{CY}$};
            % \node[name=UCD,ell, shape=ellipse] at (13,-1.5){$U_{CD}$};
            \node[name=UAY,ell, shape=ellipse] at (3,3){$U_{AY}$};
            % \node[name=UAD,ell, shape=ellipse] at (3,-3){$U_{AD}$};
            % \node[name=U2,ell, shape=ellipse] at (9,1.5){$U_2$};
            % \node[name=U3,ell, shape=ellipse] at (13,0){$U_3$};
            \node[name=Yk,ell, shape=ellipse] at (6,3){$Y_k^{a,\overline{c}=0}$};
            \node[name=Yk1,ell, shape=ellipse] at (11,3){$Y_{k+1}^{a,\overline{c}=0}$};
            \node[name=Dk,ell, shape=ellipse] at (6,0){$D_k^{a,\overline{c}=0}$};
            \node[name=Dk1,ell, shape=ellipse] at (11,0){$D_{k+1}^{a,\overline{c}=0}$};
            \node[name=Ck,shape=swig vsplit] at (6,-3){
            \nodepart{left}{$C_k^{a,\overline{c}=0}$}
            \nodepart{right}{$c_k=0$} };
            \node[name=Ck1,shape=swig vsplit] at (11,-3){
            \nodepart{left}{$C_{k+1}^{a,\overline{c}=0}$}
            \nodepart{right}{$c_{k+1}=0$} };
            \node[name=A,shape=swig vsplit] at (0,0){
            \nodepart{left}{$A$}
            \nodepart{right}{$a$} };
            % \node[name=L,ell, shape=ellipse] at (0,0){$L$};
            \begin{scope}[>={Stealth[black]},
                          every edge/.style={draw=black,very thick}]
                % \path[->,>={Stealth[red]}]  (Yk) edge[red] (Ck1);
                % \path[->,>={Stealth[red]}]   (Yk) edge[red] (Yk1);
                %  \path[->,>={Stealth[orange]}] (U1) edge[color={rgb,255:red,21; green,66; blue,128}] (A);
                \path[->,>={Stealth[red]}] (UCY) edge[red] (Yk1);
                \path[->,>={Stealth[red]}] (UCY) edge[red] (Ck1.140);
                % \path[->,>={Stealth[red]}] (UCD) edge[red] (Dk1);
                % \path[->,>={Stealth[red]}] (UCD) edge[red] (Ck1.140);
                \path[->,>={Stealth[red]}] (UAY) edge[red] (Yk);
                \path[->,>={Stealth[red]}] (UAY) edge[red] (A.120);
                % \path[->,>={Stealth[red]}] (UAD) edge[red] (Dk);
                % \path[->,>={Stealth[red]}] (UAD) edge[red] (A.240);
                % \path[->,>={Stealth[green]}] (U1) edge[green] (Dk1);
                % \path[->,>={Stealth[black]}] (U1) edge[black, bend left] (A.240);
                % % \path[->,>={Stealth[black]}] (U1) edge[black, bend left] (L);
                % % \path[->,>={Stealth[black]}] (U2) edge[black, bend right] (L);
                % \path[->,>={Stealth[red]}] (U2) edge[red, bend right] (A.120);
                % \path[->,>={Stealth[red]}] (U2) edge[red] (Yk1);
                % \path[->,>={Stealth[black]}] (U2) edge[black] (Dk1);
                % \path[->,>={Stealth[orange]}] (U3) edge[orange] (Ck1.120);
                % \path[->,>={Stealth[orange]}] (U3) edge[orange] (Yk1);
            \end{scope}
            \begin{scope}[transparency group, opacity=0.3] 
                \path[->,>={Stealth[black]}] (L) edge [bend right = 20] (Lk1);
                % \path[->,>={Stealth[black]}] (L) edge (A);
                \path[->,>={Stealth[black]}] (L) edge (Yk);
                \path[->,>={Stealth[black]}] (L) edge[bend left=35] (Yk1);
                \path[->,>={Stealth[black]}] (L) edge (Ck);
                \path[->,>={Stealth[black]}] (L) edge[bend right=35] (Ck1);
                \path[->,>={Stealth[black]}] (L) edge[bend left=12] (Dk);
                \path[->,>={Stealth[black]}] (L) edge[bend left=12] (Dk1);
                \path[->,>={Stealth[black]}] (A) edge (Lk1);
                \path[->,>={Stealth[black]}] (A) edge[bend right=20] (Dk);
                \path[->,>={Stealth[black]}] (A) edge[bend right=20] (Dk1);
                \path[->,>={Stealth[black]}] (A) edge (Yk);
                \path[->,>={Stealth[black]}]  (A) edge (Ck);
                \path[->,>={Stealth[black]}]  (A) edge (Ck1);
                \path[->,>={Stealth[black]}]  (A) edge (Yk1);
                \path[->,>={Stealth[black]}] (Lk1) edge (Dk);
                \path[->,>={Stealth[black]}] (Lk1) edge[bend right=15] (Dk1);
                \path[->,>={Stealth[black]}] (Lk1) edge (Yk);
                \path[->,>={Stealth[black]}]  (Lk1) edge (Ck);
                \path[->,>={Stealth[black]}]  (Lk1) edge (Ck1);
                \path[->,>={Stealth[black]}]  (Lk1) edge (Yk1);
                \path[->,>={Stealth[black]}]  (Ck) edge[bend right] (Yk);
                % \path[->,>={Stealth[black]}]  (Ck) edge (Yk1);
                \path[->,>={Stealth[black]}]  (Ck) edge (Ck1);
                \path[->,>={Stealth[black]}]  (Ck.70) edge[bend right] (Dk);
                % \path[->,>={Stealth[black]}]  (Ck) edge (Dk1);
                \path[->,>={Stealth[black]}]  (Dk) edge (Ck1);
                \path[->,>={Stealth[black]}]  (Dk) edge (Dk1);
                % \path[->,>={Stealth[black]}]  (Dk) edge (Yk1);
                \path[->,>={Stealth[black]}]  (Dk) edge (Yk);
                \path[->,>={Stealth[black]}]  (Yk) edge (Yk1);
                \path[->,>={Stealth[black]}]  (Yk) edge (Dk1);
                \path[->,>={Stealth[black]}]  (Yk) edge (Ck1);
                \path[->,>={Stealth[black]}]  (Ck1) edge[bend right]  (Yk1);
                \path[->,>={Stealth[black]}]  (Ck1.70) edge[bend right]  (Dk1);
                \path[->,>={Stealth[black]}]  (Dk1) edge (Yk1);
                % \path[->,>={Stealth[blue]}] (L) edge[blue] (A);
                % \path[->,>={Stealth[blue]}] (L) edge[blue] (Ck);
                % \path[->,>={Stealth[blue]}] (L) edge[blue] (Yk);
                % \path[->,>={Stealth[blue]}] (L) edge[blue, bend right=35] (Ck1);
                % \path[->,>={Stealth[blue]}] (L) edge[blue, bend left=35] (Yk1);
                % \path[->,>={Stealth[blue]}] (L) edge[blue, bend right=12] (Dk);
                % \path[->,>={Stealth[blue]}] (L) edge[blue, bend right=20] (Dk1);
            \end{scope}
        \end{tikzpicture}
}
} 
\\
%Heterogeneity
\subfloat[]{
 \resizebox{0.8\columnwidth}{!}{
\begin{tikzpicture}
            \tikzset{line width=1.5pt, outer sep=0pt,
            ell/.style={draw,fill=white, inner sep=2pt,
            line width=1.5pt},
            swig vsplit={gap=5pt,
            inner line width right=0.5pt}};
            \node[name=L,ell, shape=ellipse] at (-3,0){$L_0$};
            \node[name=Lk1,ell, shape=ellipse] at (3,0){$L_{k-1}^{a,\overline{c}=0}$};
            \node[name=UY,ell, shape=ellipse] at (8.5,4.5){$U_{Y}$};
            \node[name=UD,ell, shape=ellipse] at (8.5,-1.2){$U_{D}$};
            \node[name=UDY,ell, shape=ellipse] at (13,1.5){$U_{DY}$};
            \node[name=Yk,ell, shape=ellipse] at (6,3){$Y_k^{a,\overline{c}=0}$};
            \node[name=Yk1,ell, shape=ellipse] at (11,3){$Y_{k+1}^{a,\overline{c}=0}$};
            \node[name=Dk,ell, shape=ellipse] at (6,0){$D_k^{a,\overline{c}=0}$};
            \node[name=Dk1,ell, shape=ellipse] at (11,0){$D_{k+1}^{a,\overline{c}=0}$};
            \node[name=Ck,shape=swig vsplit] at (6,-3){
            \nodepart{left}{$C_k^{a,\overline{c}=0}$}
            \nodepart{right}{$c_k=0$} };
            \node[name=Ck1,shape=swig vsplit] at (11,-3){
            \nodepart{left}{$C_{k+1}^{a,\overline{c}=0}$}
            \nodepart{right}{$c_{k+1}=0$} };
            \node[name=A,shape=swig vsplit] at (0,0){
            \nodepart{left}{$A$}
            \nodepart{right}{$a$} };
            % \node[name=L,ell, shape=ellipse] at (0,0){$L$};
            \begin{scope}[>={Stealth[black]},
                          every edge/.style={draw=black,very thick}]
                \path[->,>={Stealth[blue]}]  (UY) edge[blue] (Yk);
                \path[->,>={Stealth[blue]}]  (UY) edge[blue] (Yk1);
                \path[->,>={Stealth[blue]}]  (UD) edge[blue] (Dk);
                \path[->,>={Stealth[blue]}]  (UD) edge[blue] (Dk1);
                \path[->,>={Stealth[blue]}]  (UDY) edge[blue] (Dk1);
                \path[->,>={Stealth[blue]}]  (UDY) edge[blue] (Yk1);
            \end{scope}
            \begin{scope}[transparency group, opacity=0.3] 
                \path[->,>={Stealth[black]}] (L) edge [bend right = 20] (Lk1);
                % \path[->,>={Stealth[black]}] (L) edge (A);
                \path[->,>={Stealth[black]}] (L) edge (Yk);
                \path[->,>={Stealth[black]}] (L) edge[bend left=35] (Yk1);
                \path[->,>={Stealth[black]}] (L) edge (Ck);
                \path[->,>={Stealth[black]}] (L) edge[bend right=35] (Ck1);
                \path[->,>={Stealth[black]}] (L) edge[bend left=12] (Dk);
                \path[->,>={Stealth[black]}] (L) edge[bend left=12] (Dk1);
                \path[->,>={Stealth[black]}] (A) edge (Lk1);
                \path[->,>={Stealth[black]}] (A) edge[bend right=20] (Dk);
                \path[->,>={Stealth[black]}] (A) edge[bend right=20] (Dk1);
                \path[->,>={Stealth[black]}] (A) edge (Yk);
                \path[->,>={Stealth[black]}]  (A) edge (Ck);
                \path[->,>={Stealth[black]}]  (A) edge (Ck1);
                \path[->,>={Stealth[black]}]  (A) edge (Yk1);
                \path[->,>={Stealth[black]}] (Lk1) edge (Dk);
                \path[->,>={Stealth[black]}] (Lk1) edge[bend right=15] (Dk1);
                \path[->,>={Stealth[black]}] (Lk1) edge (Yk);
                \path[->,>={Stealth[black]}]  (Lk1) edge (Ck);
                \path[->,>={Stealth[black]}]  (Lk1) edge (Ck1);
                \path[->,>={Stealth[black]}]  (Lk1) edge (Yk1);
                \path[->,>={Stealth[black]}]  (Ck) edge[bend right] (Yk);
                % \path[->,>={Stealth[black]}]  (Ck) edge (Yk1);
                \path[->,>={Stealth[black]}]  (Ck) edge (Ck1);
                \path[->,>={Stealth[black]}]  (Ck.70) edge[bend right] (Dk);
                % \path[->,>={Stealth[black]}]  (Ck) edge (Dk1);
                \path[->,>={Stealth[black]}]  (Dk) edge (Ck1);
                \path[->,>={Stealth[black]}]  (Dk) edge (Dk1);
                % \path[->,>={Stealth[black]}]  (Dk) edge (Yk1);
                \path[->,>={Stealth[black]}]  (Dk) edge (Yk);
                \path[->,>={Stealth[black]}]  (Yk) edge (Yk1);
                \path[->,>={Stealth[black]}]  (Yk) edge (Dk1);
                \path[->,>={Stealth[black]}]  (Yk) edge (Ck1);
                \path[->,>={Stealth[black]}]  (Ck1) edge[bend right]  (Yk1);
                \path[->,>={Stealth[black]}]  (Ck1.70) edge[bend right]  (Dk1);
                \path[->,>={Stealth[black]}]  (Dk1) edge (Yk1);
                % \path[->,>={Stealth[blue]}] (L) edge[blue] (A);
                % \path[->,>={Stealth[blue]}] (L) edge[blue] (Ck);
                % \path[->,>={Stealth[blue]}] (L) edge[blue] (Yk);
                % \path[->,>={Stealth[blue]}] (L) edge[blue, bend right=35] (Ck1);
                % \path[->,>={Stealth[blue]}] (L) edge[blue, bend left=35] (Yk1);
                % \path[->,>={Stealth[blue]}] (L) edge[blue, bend right=12] (Dk);
                % \path[->,>={Stealth[blue]}] (L) edge[blue, bend right=20] (Dk1);
            \end{scope}
        \end{tikzpicture}
 }
} 
\caption{Identification of total effect. Unmeasured variables are denoted by $U_\bullet$. (A) shows unmeasured confounders (common causes of treatment, loss to follow-up, and outcomes), which violate the exchangeability conditions (\ref{eq:exchangeability_total_i})-(\ref{eq:exchangeability_total_ii}) through the red paths: $U_{AY}$ violates (\ref{eq:exchangeability_total_i}) and $U_{CY}$ violates (\ref{eq:exchangeability_total_ii}). (B) shows examples of unmeasured effect modifiers, which are common in practice (arrows from $U_{DY}$ to $Y_k^{a,\overline{c}=0}$ and $ D_k^{a,\overline{c}=0}$ are not shown to reduce clutter). In the data example, common causes of $D$ and $Y$ could for example be the previous history of cardiovascular disease and blood pressure history. The action of such unmeasured effect modifiers, shown by blue paths, does not violate any of the exchangeability conditions (\ref{eq:exchangeability_total_i})-(\ref{eq:exchangeability_total_ii}).}
\label{fig:total_effect}
\end{figure}

In Fig.~\ref{fig:total_effect} we show a single world intervention graph (SWIG) for the intervention considered under the total effect. This is a transformation~\citep{richardson_single_2013,richardson_primer_2013} of the causal DAG in Fig.~\ref{fig:DAG_observed_data}, which also includes unmeasured variables illustrating sufficient data generating models under which exchangeability conditions (\ref{eq:exchangeability_total_i})-(\ref{eq:exchangeability_total_ii}) would be violated. In particular, (\ref{eq:exchangeability_total_i})-(\ref{eq:exchangeability_total_ii}) can be 
violated by the presence of unmeasured confounders (common causes of treatment, loss to follow-up, and outcomes) such as $U_{AY}$ or $U_{CY}$ in Fig~\ref{fig:total_effect} (A). This is well-known from before, and demonstrates how SWIGs can be used to reason about the identification conditions. 

However, (\ref{eq:exchangeability_total_i})-(\ref{eq:exchangeability_total_ii}) are not violated by unmeasured common causes of the outcomes $Y_k$ and $D_k$ such as $U_Y, U_{DY}$ and $U_{D}$ in Fig.~\ref{fig:total_effect} (B),  which we often expect to be present in practice. Examples of common causes of recurrent events and death in the data example include prognostic factors related to disease progression such as previous cardiovascular disease history and blood pressure history, many of which are measured in the observed data. In contrast, the controlled direct effect and separable effects are not identified in the presence of open backdoor paths between recurrent events and death, as we will see next.

\subsection{Controlled direct effects}
The identification of the (controlled) direct effect (\ref{eq:direct_effect_no_censoring}) proceeds analogously to the total effect, with the main difference being that we also intervene to remove the occurrence of the competing event. This amounts to re-defining the censoring event as a composite of loss to follow-up and the competing event. The identification conditions then take the following form for $k\in\{1,\dots,K\}$:

\textbf{Exchangeability}
\begin{align}
    \overline{Y}_{K+1}^{a,\overline{c}=\overline{d}=0}  &\independent A | L_0 ~, \label{eq:exchangeability_direct_i}\\
    \underline{Y}_{k+1}^{a,\overline{c}=\overline{d}=0} &\independent (C_{k+1}^{a,\overline{c}=\overline{d}=0}, D_{k+1}^{a,\overline{c}=\overline{d}= 0})| \overline{L}_k^{a,\overline{c}=\overline{d}=0}, \overline{Y}_k^{a,\overline{c}=\overline{d}=0}, \overline{D}_k^{a,\overline{c}=\overline{d}=0},\overline{C}_k^{a,\overline{c}=\overline{d}=0}, A ~. \label{eq:exchangeability_direct_ii}
\end{align}

\textbf{Positivity}
\begin{align}
    &f_{A,\overline{L}_k,\overline{D}_k,\overline{C}_k,\overline{Y}_k}(a,\overline{l}_k,0,0,\overline{y}_k)>0 \notag\\
    &\qquad \implies P(C_{k+1}=0, D_{k+1}=0\mid A=a, \overline{L}_k=\overline{l}_k,\overline{D}_k=0,\overline{C}_k=0,\overline{Y}_k=\overline{y}_k) >0 ~. \label{positivity_direct_effect}
\end{align}
We also assume the positivity assumption (\ref{positivity_total_effect}), and a modified version of the consistency assumption in (\ref{eq:consistency_total_effect}) which requires us to conceptualize an intervention on the competing event (see Appendix~\ref{sec:proof_identification} for further details).

Under assumptions (\ref{eq:exchangeability_direct_ii})-(\ref{positivity_direct_effect}), an identification formula is given by
\begin{align}
E[&\Delta Y_i^{a,\overline{c}=\overline{d}=0}]= \notag\\
    &\sum_{\Delta\overline{y}_{i}}%\sum_{\overline{d}_k}
    \sum_{\overline{l}_{i-1}}\prod_{j=0}^{i} \notag\\
    &\quad \Delta y_i\cdot P(\Delta Y_j=\Delta y_j\mid \overline{D}_{j}=%\overline{d}_j
    0, \overline{C}_j=0,\overline{L}_{j-1}=\overline{l}_{j-1},\Delta \overline{Y}_{j-1}=\Delta \overline{y}_{j-1},A=a) \notag\\
    &\qquad \times P(L_{j-1}=l_{j-1}\mid \Delta\overline{Y}_{j-1}=\Delta\overline{y}_{j-1}, \overline{D}_{j-1}=%\overline{d}_{j-1}
    0,\overline{C}_{j-1}=0,\overline{L}_{j-2}=\overline{l}_{j-2},A=a) ~, \notag\\\label{eq:g-formula_direct_effect}
\end{align}
or equivalently by the IPW formula
\begin{align}
    E[\Delta Y_i^{a,\overline{c}=\overline{d}=0}] &=  
    E\bigg[ \frac{I(A=a)I(C_i=0)I(D_i=0)}{\pi_A(A) \prod_{j=0}^i \pi_{C_j}(C_j)\pi_{D_j}(D_j) } \cdot \Delta Y_i \bigg]~, \label{eq:direct_effect_IPCW_discrete}
\end{align}
where we have defined
\begin{align*}
    \pi_{D_j}(\bullet) &= P(D_j=\bullet\mid \overline{C}_{j},\overline{D}_{j-1},\overline{L}_{j-1} \overline{Y}_{j-1},A) ~.
\end{align*}
For survival outcomes ($Y_k\in \{0,1\}$), (\ref{eq:g-formula_direct_effect}) reduces to Expression~(23) in~\citet{young_causal_2020}. In the absence of death and loss to follow-up and for randomized treatment assignment, both the total effect (\ref{eq:total_effect_IPCW_discrete}) and controlled direct effect (\ref{eq:direct_effect_IPCW_discrete}) reduce to $E[\Delta Y_i\mid A=a]$.

\subsubsection{Graphical evaluation of the exchangeability conditions}

Examples of unmeasured variables which violate the exchangeability conditions (\ref{eq:exchangeability_direct_i})-(\ref{eq:exchangeability_direct_ii}) are shown in Fig.~\ref{fig:direct_effect}. Importantly, (\ref{eq:exchangeability_direct_ii}) is violated by open backdoor paths between $D$ and $Y$, such as the path $D_{k+1}^{a,\overline{c}=0,\overline{d}=0}\leftarrow U_{DY} \rightarrow Y_{k+1}^{a,\overline{c}=0,\overline{d}=0}$ through the unmeasured common cause $U_{DY}$. Therefore, the exchangeability assumption for the controlled direct effect (\ref{eq:exchangeability_direct_ii}) is stronger than the exchangeability assumption for the total effect (\ref{eq:exchangeability_total_ii}). In the data example, we have measured several important common causes of acute kidney injury and death, as we will see in Sec.~\ref{sec:examples_revisited}.

\begin{figure} 
    \centering
%Heterogeneity
% \subfloat[]{
 \resizebox{0.8\columnwidth}{!}{
\begin{tikzpicture}
            \tikzset{line width=1.5pt, outer sep=0pt,
            ell/.style={draw,fill=white, inner sep=2pt,
            line width=1.5pt},
            swig vsplit={gap=5pt,
            inner line width right=0.5pt}};
            \node[name=L,ell, shape=ellipse] at (-5,0){$L_0$};
            \node[name=Lk1,ell, shape=ellipse] at (1,0){$L_{k-1}^{a,\overline{c}=\overline{d}=0}$};
            \node[name=UY,ell, shape=ellipse] at (8.5,5){$U_{Y}$};
            \node[name=UD,ell, shape=ellipse] at (8.5,-1.2){$U_{D}$};
            \node[name=UAY,ell, shape=ellipse] at (3,3){$U_{AY}$};
            \node[name=UDY,ell, shape=ellipse] at (13,1.5){$U_{DY}$};
            \node[name=UCY,ell, shape=ellipse] at (8,1.5){$U_{CY}$};
            \node[name=Yk,ell, shape=ellipse] at (6,3){$Y_k^{a,\overline{c}=\overline{d}=0}$};
            \node[name=Yk1,ell, shape=ellipse] at (12,3){$Y_{k+1}^{a,\overline{c}=\overline{d}=0}$};
            \node[name=Dk,shape=swig vsplit] at (6,0){
            \nodepart{left}{$D_k^{a,\overline{c}=\overline{d}=0}$}
            \nodepart{right}{$d_k=0$} };
            \node[name=Dk1,shape=swig vsplit] at (12,0){
            \nodepart{left}{$D_{k+1}^{a,\overline{c}=\overline{d}=0}$}
            \nodepart{right}{$d_{k+1}=0$} };
            \node[name=Ck,shape=swig vsplit] at (6,-3){
            \nodepart{left}{$C_k^{a,\overline{c}=\overline{d}=0}$}
            \nodepart{right}{$c_k=0$} };
            \node[name=Ck1,shape=swig vsplit] at (12,-3){
            \nodepart{left}{$C_{k+1}^{a,\overline{c}=\overline{d}=0}$}
            \nodepart{right}{$c_{k+1}=0$} };
            \node[name=A,shape=swig vsplit] at (-2,0){
            \nodepart{left}{$A$}
            \nodepart{right}{$a$} };
            % \node[name=L,ell, shape=ellipse] at (0,0){$L$};
            \begin{scope}[>={Stealth[black]},
                          every edge/.style={draw=black,very thick}]
                \path[->,>={Stealth[blue]}]  (UY) edge[blue] (Yk);
                \path[->,>={Stealth[blue]}]  (UY) edge[blue] (Yk1);
                \path[->,>={Stealth[blue]}]  (UD) edge[blue, bend left] (Dk.240);
                \path[->,>={Stealth[blue]}]  (UD) edge[blue] (Dk1);
                \path[->,>={Stealth[red]}]  (UDY) edge[red] (Dk1.120);
                \path[->,>={Stealth[red]}]  (UDY) edge[red] (Yk1.330);
                \path[->,>={Stealth[red]}]  (UCY) edge[red] (Yk1);
                \path[->,>={Stealth[red]}]  (UCY) edge[red] (Ck1.160);
                \path[->,>={Stealth[red]}]  (UAY) edge[red, bend right] (A.140);
                \path[->,>={Stealth[red]}]  (UAY) edge[red] (Yk);
            \end{scope}
            \begin{scope}[transparency group, opacity=0.3] 
                % \path[->,>={Stealth[black]}] (L) edge (A);
                \path[->,>={Stealth[black]}] (L) edge (Yk);
                \path[->,>={Stealth[black]}] (L) edge[bend left=35] (Yk1);
                \path[->,>={Stealth[black]}] (L) edge (Ck);
                \path[->,>={Stealth[black]}] (L) edge[bend right=35] (Ck1);
                \path[->,>={Stealth[black]}] (L) edge[bend left=12] (Dk);
                \path[->,>={Stealth[black]}] (L) edge[bend left=12] (Dk1);
                \path[->,>={Stealth[black]}] (L) edge [bend right = 20] (Lk1);
                \path[->,>={Stealth[black]}] (A) edge (Lk1);
                \path[->,>={Stealth[black]}] (A) edge[bend right=20] (Dk);
                \path[->,>={Stealth[black]}] (A) edge[bend right=15] (Dk1);
                \path[->,>={Stealth[black]}] (A) edge (Yk);
                \path[->,>={Stealth[black]}]  (A) edge (Ck);
                \path[->,>={Stealth[black]}]  (A) edge (Ck1);
                \path[->,>={Stealth[black]}]  (A) edge (Yk1);
                \path[->,>={Stealth[black]}] (Lk1) edge (Dk);
                \path[->,>={Stealth[black]}] (Lk1) edge[bend right=15] (Dk1);
                \path[->,>={Stealth[black]}] (Lk1) edge (Yk);
                \path[->,>={Stealth[black]}]  (Lk1) edge (Ck);
                \path[->,>={Stealth[black]}]  (Lk1) edge (Ck1);
                \path[->,>={Stealth[black]}]  (Lk1) edge (Yk1);
                \path[->,>={Stealth[black]}]  (Ck.60) edge (Yk);
                % \path[->,>={Stealth[black]}]  (Ck) edge (Yk1);
                \path[->,>={Stealth[black]}]  (Ck) edge (Ck1);
                \path[->,>={Stealth[black]}]  (Ck.60) edge (Dk.230);
                % \path[->,>={Stealth[black]}]  (Ck) edge (Dk1);
                \path[->,>={Stealth[black]}]  (Dk) edge (Ck1);
                \path[->,>={Stealth[black]}]  (Dk) edge (Dk1);
                % \path[->,>={Stealth[black]}]  (Dk) edge (Yk1);
                \path[->,>={Stealth[black]}]  (Dk.60) edge (Yk);
                \path[->,>={Stealth[black]}]  (Yk) edge (Yk1);
                \path[->,>={Stealth[black]}]  (Yk) edge (Dk1);
                \path[->,>={Stealth[black]}]  (Yk) edge (Ck1);
                \path[->,>={Stealth[black]}]  (Ck1.60) edge  (Yk1);
                \path[->,>={Stealth[black]}]  (Ck1.60) edge  (Dk1.230);
                \path[->,>={Stealth[black]}]  (Dk1.60) edge (Yk1);
                % \path[->,>={Stealth[blue]}] (L) edge[blue] (A);
                % \path[->,>={Stealth[blue]}] (L) edge[blue] (Ck);
                % \path[->,>={Stealth[blue]}] (L) edge[blue] (Yk);
                % \path[->,>={Stealth[blue]}] (L) edge[blue, bend right=35] (Ck1);
                % \path[->,>={Stealth[blue]}] (L) edge[blue, bend left=35] (Yk1);
                % \path[->,>={Stealth[blue]}] (L) edge[blue, bend right=12] (Dk);
                % \path[->,>={Stealth[blue]}] (L) edge[blue, bend right=20] (Dk1);
            \end{scope}
        \end{tikzpicture}
 }
% } 
\caption{Identification of the (controlled) direct effect. In contrast to Figures~\ref{fig:total_effect}~(A) and (B), open backdoor paths between $Y$ and $D$, exemplified by the red path $D_{k+1}^{a,\overline{c}=\overline{d}=0} \leftarrow U_{DY} \rightarrow Y_{k+1}^{a,\overline{c}=\overline{d}=0}$, can violate exchangeability (\ref{eq:exchangeability_direct_ii}).}
\label{fig:direct_effect}
\end{figure}

\subsection{Separable effects} \label{subsection: separable effects identification} 
We begin by assuming the following three identification conditions.  %for all $k\in\{0,\dots,K+1\}$:

\textbf{Exchangeability}
\begin{align}
    &(\overline{Y}_{K+1}^{a,\overline{c}=0},\overline{D}_{K+1}^{a,\overline{c}=0},\overline{L}_{K+1}^{a,\overline{c}=0}) \independent A\mid L_0 ~, \label{eq:exchangeability_separable_i} \\
    &(\underline{Y}_{k+1}^{a,\overline{c}=0},\underline{D}_{k+1}^{a,\overline{c}=0},\underline{L}_{k+1}^{a,\overline{c}=0})\independent C_{k+1}^{a,\overline{c}=0} \mid \overline{Y}_k^{a,\overline{c}=0}, \overline{D}^{a,\overline{c}=0}_k,\overline{C}_k^{a,\overline{c}=0},\overline{L}_k^{a,\overline{c}=0},A ~. \label{eq:exchangeability_separable_ii}
\end{align}
Expressions (\ref{eq:exchangeability_separable_i})-(\ref{eq:exchangeability_separable_ii}) imply the exchangeability conditions for total effect (\ref{eq:exchangeability_total_i})-(\ref{eq:exchangeability_total_ii}) due to the decomposition rule of conditional independence.

\textbf{Positivity}
\begin{align}
    &f_{\overline{L}_k,\overline{D}_{k+1},C_{k+1},Y_k}(\overline{l}_k,\overline{d}_{k+1},0,\overline{y}_k)>0 \implies\notag\\
    &\quad P(A=a\mid \overline{D}_{k+1}=\overline{d}_{k+1},C_{k+1}=0,\overline{Y}_{k+1}=\overline{y}_{k+1},\overline{L}_k=\overline{l}_k) >0 \label{eq:positivity_separable_ii} 
\end{align}
for all $a\in\{0,1\}$, $k\in\{0,\dots,K\}$ and $L_k\in\mathcal{L}$. We also assume the positivity and consistency assumptions (\ref{positivity_total_effect})-(\ref{positivity_total_effect_ii}) and (\ref{eq:consistency_total_effect}).  Expression (\ref{eq:positivity_separable_ii}) requires that for any possibly observed level of measured time-varying covariate history amongst those who remain uncensored through each follow-up time, there are individuals with $A=0$ and $A=1$.

Consider a setting where the $A_Y$ and $A_D$ components are assigned independently one at a time. We require the following dismissible component conditions to hold for all $k\in\{0,\dots,K\}$:
\begin{align}
    &Y_{k+1}^{\overline{c}=0} \independent A_D \mid A_Y, \overline{D}_{k+1}^{\overline{c}=0} , \overline{Y}_k^{\overline{c}=0} , \overline{L}_k^{\overline{c}=0}  ~, \label{eq:dismissible_component_i} \\
    &D_{k+1}^{\overline{c}=0}  \independent A_Y  \mid A_D , \overline{D}_k^{\overline{c}=0} ,\overline{Y}_k^{\overline{c}=0} ,\overline{L}_k^{\overline{c}=0}  ~, \label{eq:dismissible_component_ii}\\
    &L_{Y,k}^{\overline{c}=0}  \independent A_D  \mid A_Y ,\overline{Y}_k^{\overline{c}=0} ,\overline{D}_k^{\overline{c}=0} ,\overline{L}_{k-1}^{\overline{c}=0} ,L_{D,k}^{\overline{c}=0}  ~, \label{eq:dismissible_component_iii} \\
    & L_{D,k}^{\overline{c}=0}  \independent A_Y \mid A_D , \overline{D}_k^{\overline{c}=0} , \overline{Y}_k^{\overline{c}=0} ,\overline{L}_{k-1}^{\overline{c}=0}  ~, \label{eq:dismissible_component_iv}
\end{align}
where we have supposed that $L_k=(L_{Y,k},L_{D,k})$ consists of components $L_{Y,k}$ and $L_{D,k}$ satisfying (\ref{eq:dismissible_component_iii})-(\ref{eq:dismissible_component_iv}) respectively. Assumptions \eqref{eq:dismissible_component_i}-\eqref{eq:dismissible_component_iv} express independencies between quantities that are observable in a future four armed trial without loss to follow-up, and can therefore be tested in such a trial. These conditions require that $\overline{L}_{D,k}$ captures all effects of $A_D$ on $\underline{Y}^{\overline{c}=0}_{k+1}$, whereas $\overline{L}_{Y,k}$ captures all effects of $A_Y$ on $\underline{D}_{k+1}^{\overline{c}=0}$. In the example on acute kidney injury discussed in Sec.~\ref{sec:examples_revisited}, we suppose that (\ref{eq:dismissible_component_i})-(\ref{eq:dismissible_component_iv}) hold with a set of baseline covariates $L_0$, $L_{Y,k}=\emptyset$ and $L_{D,k}$ given by the latest blood pressure measurement by time $k$, which influences the cardiac risk and also the perfusion of the kidneys. The implications and plausibility of this assumption in the context of the data example are discussed in Sec.~\ref{sec:simplified_estimators_partition}. Furthermore, \citet{stensrud_generalized_2021} describes a sensitivity analysis strategy for the dismissible component conditions.

Under the identification conditions for separable effects and the modified treatment assumption (\ref{eq:modified_treatment_assumption}), we have
\begin{align}
    &E[\Delta Y_i^{a_Y,a_D,\overline{c}=0}]  \notag\\
    =& \sum_{\Delta \overline{y}_i} \sum_{\overline{d}_i}\sum_{\overline{l}_{i-1}} \prod_{j=0}^{i} \notag\\
    &\Delta y_i\cdot P(\Delta Y_j=\Delta y_j\mid \overline{D}_{j}=\overline{d}_j, \overline{C}_j=0,\overline{L}_{j-1}=\overline{l}_{j-1},  \overline{Y}_{j-1}=  \overline{y}_{j-1},A=a_Y) \notag\\
    &\quad \times P(D_{j}=d_{j}\mid C_{j}=0, \overline{L}_{j-1}=\overline{l}_{j-1},  \overline{Y}_{j-1}= \overline{y}_{j-1}, \overline{D}_{j-1}=\overline{d}_{j-1}, A=a_D ) \notag\\
    &\quad \times P(L_{Y,j-1}=l_{Y,j-1}\mid  \overline{L}_{A_D,j-1}=\overline{l}_{A_D,j-1}, \overline{Y}_{j-1}= \overline{y}_{j-1}, \overline{D}_{j-1}=\overline{d}_{j-1},\overline{C}_{j-1}=0, \notag\\
   & \qquad\qquad\qquad\qquad\qquad\qquad  \overline{L}_{j-2}=\overline{l}_{j-2}, A=a_Y) \notag\\
    &\quad \times P(L_{D,j-1}=l_{D,j-1}\mid  \overline{Y}_{j-1}= \overline{y}_{j-1}, \overline{D}_{j-1}=\overline{d}_{j-1},\overline{C}_{j-1}=0,\overline{L}_{j-2}=\overline{l}_{j-2},  A=a_D)~. \notag\\ \label{eq:g-formula_separable}
\end{align}
Expression (\ref{eq:g-formula_separable}) can also be written on IPW weighted form as
\begin{align}
    E[\Delta Y^{a_Y,a_D,\overline{c}=0}_i] =  E\left[ \frac{I(A=a_Y)}{\pi_A(A)}\cdot\frac{I(C_i=0)}{\prod_{j=0}^i \pi_{C_j}(C_j)}\cdot\frac{\prod_{l=0}^i \pi_{D_l}^{a_D}}{\prod_{m=0}^i \pi_{D_m}^{a_Y}}\cdot\frac{\prod_{n=0}^{i-1} \pi_{L_{D,n}}^{a_D}}{\prod_{q=0}^{i-1} \pi_{L_{D,q}}^{a_Y}} \cdot  \Delta Y_i\right] ~,\label{eq:IPW_separable}
\end{align}
or
\begin{align}
    E[\Delta Y^{a_Y,a_D,\overline{c}=0}_i] = E\left[ \frac{I(A=a_D)}{\pi_A(A)}\cdot\frac{I(C_i=0)}{\prod_{j=0}^i \pi_{C_j}(C_j)}\cdot\frac{\prod_{l=0}^i \pi_{Y_l}^{a_Y}}{\prod_{m=0}^i \pi_{Y_m}^{a_D}}\cdot\frac{\prod_{n=0}^{i-1} \pi_{L_{Y,n}}^{a_Y}}{\prod_{q=0}^{i-1} \pi_{L_{Y,q}}^{a_D}} \cdot  \Delta Y_i\right] ~. \label{eq:IPW_separable_ii}
\end{align}
Here, we have defined
\begin{align}
    &\pi_{X_j}^{z}=f_{X_j}^z (X_j) ~\text{where}~ f_{X_j}^{z}(x)=P(X_j=x\mid \mathcal{H}_{X_j}^{C,L,Y,D},A=z) \label{eq: pi L_D}  
\end{align}
with $z=a_D,a_Y$ and $\mathcal{H}_{X_j}^{C,L,Y,D}$ being the history of $C,L,Y,D$ prior to $X_j$ (i.e. the subset containing all variables in $\{C_{k},L_k,Y_k,D_k:k=0,\dots,K+1\}$ that are ordered topologically prior to $X_j$).

The identification formula for separable effects on the competing event with time-varying covariates was first shown in~\citet{stensrud_generalized_2021} and can also be found in Appendix~\ref{sec:proof_identification}.

When full isolation holds, the identification formulas for separable effects are equal to identification formulas derived for certain path-specific effects \citep{stensrud2020conditional,robins_alternative_2011,robins2020interventionist}. Otherwise, natural direct and indirect effects are not identified because time-varying blood pressure measurements $L_k$, which are themselves affected by treatment, act as a recanting witness.

\subsubsection{Graphical evaluation of the identification conditions}
The exchangeability conditions (\ref{eq:exchangeability_separable_i})-(\ref{eq:exchangeability_separable_ii}) can be evaluated in a similar way as for the total effect in Fig.~\ref{fig:total_effect}. However, identification of the separable effects also require the dismissible component conditions to hold. These conditions can be evaluated in a DAG representing a four armed trial where the $A_D$ and $A_Y$ components can be assigned different values~\citep{stensrud_generalized_2021}, shown in  Fig.~\ref{fig:separable_effect}. Like for the controlled direct effect, $L_k$ must contain sufficient variables to block all backdoor paths between $D$ and $Y$ in order for the dismissible component conditions to hold. In particular, unmeasured common causes of $D$ and $Y$ such as $U_{DY}$ in Fig.~\ref{fig:separable_effect} can violate the dismissible component conditions.

\begin{figure} 
    \centering
%U_D and U_Y
\subfloat[]{
 \resizebox{0.8\columnwidth}{!}{
\begin{tikzpicture}
            \tikzset{line width=1.5pt, outer sep=0pt,
            ell/.style={draw,fill=white, inner sep=2pt,
            line width=1.5pt},
            swig vsplit={gap=5pt,
            inner line width right=0.5pt}};
            \node[name=L,ell, shape=ellipse] at (0,1.5){$L_0$};
            % \node[name=Lk1,ell, shape=ellipse] at (3,0){$L_{k-1}^{a,\overline{c}=0}$};
            % \node[name=UY,ell, shape=ellipse] at (8.5,4.2){$U_{Y}$};
            % \node[name=UD,ell, shape=ellipse] at (8.5,-1.2){$U_{D}$};
            % \node[name=UDY,ell, shape=ellipse] at (13,1.5){$U_{DY}$};
            \node[name=Yk,ell, shape=ellipse] at (6,3){$Y_k^{\overline{c}=0}$};
            \node[name=Yk1,ell, shape=ellipse] at (11,3){$Y_{k+1}^{\overline{c}=0}$};
            \node[name=Dk,ell, shape=ellipse] at (6,0){$D_k^{\overline{c}=0}$};
            \node[name=Dk1,ell, shape=ellipse] at (11,0){$D_{k+1}^{\overline{c}=0}$};
            \node[name=Ck,shape=swig vsplit] at (6,-3){
            \nodepart{left}{$C_k^{\overline{c}=0}$}
            \nodepart{right}{$c_k=0$} };
            \node[name=Ck1,shape=swig vsplit] at (11,-3){
            \nodepart{left}{$C_{k+1}^{\overline{c}=0}$}
            \nodepart{right}{$c_{k+1}=0$} };
            \node[name=AY,ell, shape=ellipse] at (3,3){$A_Y$};
            \node[name=AD,ell, shape=ellipse] at (3,0){$A_D$};
            \node[name=MAY,ell, shape=ellipse] at (9,3/4){$M_{A_Y}$};
            \node[name=MAD,ell, shape=ellipse] at (9,9/4){$M_{A_D}$};
            % \node[name=L,ell, shape=ellipse] at (0,0){$L$};
            \begin{scope}[>={Stealth[black]},
                          every edge/.style={draw=black,very thick}]
                \path[->,>={Stealth[red]}]  (AY) edge[red] (MAY);
                \path[->,>={Stealth[red]}]  (MAY) edge[red] (Dk1);
                \path[->,>={Stealth[red]}]  (AD) edge[red] (MAD);
                \path[->,>={Stealth[red]}]  (MAD) edge[red] (Yk1);
                % \path[->,>={Stealth[red]}]  (UY) edge[red] (Yk);
                % \path[->,>={Stealth[red]}]  (UY) edge[red] (Yk1);
                % \path[->,>={Stealth[red]}]  (UD) edge[red] (Dk);
                % \path[->,>={Stealth[red]}]  (UD) edge[red] (Dk1);
                % \path[->,>={Stealth[red]}]  (UDY) edge[red] (Dk1);
                % \path[->,>={Stealth[red]}]  (UDY) edge[red] (Yk1);
            \end{scope}
            \begin{scope}[transparency group, opacity=0.3] 
                \path[->,>={Stealth[black]}] (AY) edge (Yk);
                \path[->,>={Stealth[black]}] (AY) edge[bend left=15] (Yk1);
                \path[->,>={Stealth[black]}] (AD) edge (Dk);
                \path[->,>={Stealth[black]}] (AD) edge[bend right=15] (Dk1);
                % \path[->,>={Stealth[black]}] (L) edge (AY);
                % \path[->,>={Stealth[black]}] (L) edge (AD);
                \path[->,>={Stealth[black]}] (L) edge (Yk);
                \path[->,>={Stealth[black]}] (L) edge (Yk1);
                \path[->,>={Stealth[black]}] (L) edge[bend right=15] (Ck);
                \path[->,>={Stealth[black]}] (L) edge[bend right=15] (Ck1);
                \path[->,>={Stealth[black]}] (L) edge (Dk);
                \path[->,>={Stealth[black]}] (L) edge (Dk1);
                % \path[->,>={Stealth[black]}] (Lk1) edge (Dk);
                % \path[->,>={Stealth[black]}] (Lk1) edge[bend right=15] (Dk1);
                % \path[->,>={Stealth[black]}] (Lk1) edge (Yk);
                % \path[->,>={Stealth[black]}]  (Lk1) edge (Ck);
                % \path[->,>={Stealth[black]}]  (Lk1) edge (Ck1);
                % \path[->,>={Stealth[black]}]  (Lk1) edge (Yk1);
                \path[->,>={Stealth[black]}]  (Ck) edge[bend right] (Yk);
                % \path[->,>={Stealth[black]}]  (Ck) edge (Yk1);
                \path[->,>={Stealth[black]}]  (Ck) edge (Ck1);
                \path[->,>={Stealth[black]}]  (Ck.70) edge[bend right] (Dk);
                % \path[->,>={Stealth[black]}]  (Ck) edge (Dk1);
                \path[->,>={Stealth[black]}]  (Dk) edge (Ck1);
                \path[->,>={Stealth[black]}]  (Dk) edge (Dk1);
                % \path[->,>={Stealth[black]}]  (Dk) edge (Yk1);
                \path[->,>={Stealth[black]}]  (Dk) edge (Yk);
                \path[->,>={Stealth[black]}]  (Yk) edge (Yk1);
                \path[->,>={Stealth[black]}]  (Yk) edge (Dk1);
                \path[->,>={Stealth[black]}]  (Yk) edge (Ck1);
                \path[->,>={Stealth[black]}]  (Ck1) edge[bend right]  (Yk1);
                \path[->,>={Stealth[black]}]  (Ck1.70) edge[bend right]  (Dk1);
                \path[->,>={Stealth[black]}]  (Dk1) edge (Yk1);
                % \path[->,>={Stealth[blue]}] (L) edge[blue] (A);
                % \path[->,>={Stealth[blue]}] (L) edge[blue] (Ck);
                % \path[->,>={Stealth[blue]}] (L) edge[blue] (Yk);
                % \path[->,>={Stealth[blue]}] (L) edge[blue, bend right=35] (Ck1);
                % \path[->,>={Stealth[blue]}] (L) edge[blue, bend left=35] (Yk1);
                % \path[->,>={Stealth[blue]}] (L) edge[blue, bend right=12] (Dk);
                % \path[->,>={Stealth[blue]}] (L) edge[blue, bend right=20] (Dk1);
            \end{scope}
        \end{tikzpicture}
    }
    }
\\
% U_DY
\subfloat[]{
 \resizebox{0.8\columnwidth}{!}{
\begin{tikzpicture}
            \tikzset{line width=1.5pt, outer sep=0pt,
            ell/.style={draw,fill=white, inner sep=2pt,
            line width=1.5pt},
            swig vsplit={gap=5pt,
            inner line width right=0.5pt}};
            \node[name=L,ell, shape=ellipse] at (0,1.5){$L_0$};
            % \node[name=Lk1,ell, shape=ellipse] at (3,0){$L_{k-1}^{a,\overline{c}=0}$};
            % \node[name=UY,ell, shape=ellipse] at (8.5,4.5){$U_{Y}$};
            % \node[name=UD,ell, shape=ellipse] at (8.5,-1.2){$U_{D}$};
            \node[name=UDY,ell, shape=ellipse] at (13,1.5){$U_{DY}$};
            \node[name=WDY,ell, shape=ellipse] at (9,9/4){$W_{DY}$};
            \node[name=Yk,ell, shape=ellipse] at (6,3){$Y_k^{\overline{c}=0}$};
            \node[name=Yk1,ell, shape=ellipse] at (11,3){$Y_{k+1}^{\overline{c}=0}$};
            \node[name=Dk,ell, shape=ellipse] at (6,0){$D_k^{\overline{c}=0}$};
            \node[name=Dk1,ell, shape=ellipse] at (11,0){$D_{k+1}^{\overline{c}=0}$};
            \node[name=Ck,shape=swig vsplit] at (6,-3){
            \nodepart{left}{$C_k^{\overline{c}=0}$}
            \nodepart{right}{$c_k=0$} };
            \node[name=Ck1,shape=swig vsplit] at (11,-3){
            \nodepart{left}{$C_{k+1}^{\overline{c}=0}$}
            \nodepart{right}{$c_{k+1}=0$} };
            \node[name=AY,ell, shape=ellipse] at (3,3){$A_Y$};
            \node[name=AD,ell, shape=ellipse] at (3,0){$A_D$};
            % \node[name=L,ell, shape=ellipse] at (0,0){$L$};
            \begin{scope}[>={Stealth[black]},
                          every edge/.style={draw=black,very thick}]
                % \path[->,>={Stealth[red]}]  (AY) edge[red, bend right] (UD);
                % \path[->,>={Stealth[red]}]  (AD) edge[red, bend left] (UY);
                % \path[->,>={Stealth[red]}]  (UY) edge[red] (Yk);
                % \path[->,>={Stealth[red]}]  (UY) edge[red] (Yk1);
                % \path[->,>={Stealth[red]}]  (UD) edge[red] (Dk);
                % \path[->,>={Stealth[red]}]  (UD) edge[red] (Dk1);
                \path[->,>={Stealth[red]}]  (WDY) edge[red] (Dk1);
                \path[->,>={Stealth[red]}]  (UDY) edge[red] (Dk1);
                \path[->,>={Stealth[red]}]  (WDY) edge[red] (Yk);
                \path[->,>={Stealth[red]}]  (UDY) edge[red] (Yk1);
                \path[->,>={Stealth[red]}]  (AY) edge[red] (Yk);
                \path[->,>={Stealth[red]}]  (AD) edge[red, bend right=15] (Dk1);
            \end{scope}
            \begin{scope}[transparency group, opacity=0.3] 
                % \path[->,>={Stealth[black]}] (AY) edge (Yk);
                \path[->,>={Stealth[black]}] (AY) edge[bend left=15] (Yk1);
                \path[->,>={Stealth[black]}] (AD) edge (Dk);
                % \path[->,>={Stealth[black]}] (AD) edge[bend right=15] (Dk1);
                % \path[->,>={Stealth[black]}] (L) edge (AY);
                % \path[->,>={Stealth[black]}] (L) edge (AD);
                \path[->,>={Stealth[black]}] (L) edge (Yk);
                \path[->,>={Stealth[black]}] (L) edge (Yk1);
                \path[->,>={Stealth[black]}] (L) edge[bend right=15] (Ck);
                \path[->,>={Stealth[black]}] (L) edge[bend right=15] (Ck1);
                \path[->,>={Stealth[black]}] (L) edge (Dk);
                \path[->,>={Stealth[black]}] (L) edge (Dk1);
                % \path[->,>={Stealth[black]}] (Lk1) edge (Dk);
                % \path[->,>={Stealth[black]}] (Lk1) edge[bend right=15] (Dk1);
                % \path[->,>={Stealth[black]}] (Lk1) edge (Yk);
                % \path[->,>={Stealth[black]}]  (Lk1) edge (Ck);
                % \path[->,>={Stealth[black]}]  (Lk1) edge (Ck1);
                % \path[->,>={Stealth[black]}]  (Lk1) edge (Yk1);
                \path[->,>={Stealth[black]}]  (Ck) edge[bend right] (Yk);
                % \path[->,>={Stealth[black]}]  (Ck) edge (Yk1);
                \path[->,>={Stealth[black]}]  (Ck) edge (Ck1);
                \path[->,>={Stealth[black]}]  (Ck.70) edge[bend right] (Dk);
                % \path[->,>={Stealth[black]}]  (Ck) edge (Dk1);
                \path[->,>={Stealth[black]}]  (Dk) edge (Ck1);
                \path[->,>={Stealth[black]}]  (Dk) edge (Dk1);
                % \path[->,>={Stealth[black]}]  (Dk) edge (Yk1);
                \path[->,>={Stealth[black]}]  (Dk) edge (Yk);
                \path[->,>={Stealth[black]}]  (Yk) edge (Yk1);
                \path[->,>={Stealth[black]}]  (Yk) edge (Dk1);
                \path[->,>={Stealth[black]}]  (Yk) edge (Ck1);
                \path[->,>={Stealth[black]}]  (Ck1) edge[bend right]  (Yk1);
                \path[->,>={Stealth[black]}]  (Ck1.70) edge[bend right]  (Dk1);
                \path[->,>={Stealth[black]}]  (Dk1) edge (Yk1);
                % \path[->,>={Stealth[blue]}] (L) edge[blue] (A);
                % \path[->,>={Stealth[blue]}] (L) edge[blue] (Ck);
                % \path[->,>={Stealth[blue]}] (L) edge[blue] (Yk);
                % \path[->,>={Stealth[blue]}] (L) edge[blue, bend right=35] (Ck1);
                % \path[->,>={Stealth[blue]}] (L) edge[blue, bend left=35] (Yk1);
                % \path[->,>={Stealth[blue]}] (L) edge[blue, bend right=12] (Dk);
                % \path[->,>={Stealth[blue]}] (L) edge[blue, bend right=20] (Dk1);
            \end{scope}
        \end{tikzpicture}
 }
} 
\caption{Graphical evaluation of the dismissible component conditions (\ref{eq:dismissible_component_i})-(\ref{eq:dismissible_component_ii}) when only baseline covariates are measured. Examples of violations of the conditions are shown as red paths. (A) Dismissible component conditions can be violated by unmeasured mediators such as $M_{A_Y}$ and $M_{A_D}$: (\ref{eq:dismissible_component_i}) is violated by the path $A_D\rightarrow M_{A_D}\rightarrow Y_{k+1}^{\overline{c}=0}$ and (\ref{eq:dismissible_component_ii}) is violated by the path $A_Y\rightarrow M_{A_Y} \rightarrow D_{k+1}^{\overline{c}=0}$. However, if $M_{A_Y}$ and $M_{A_D}$ were measured and included in $L_Y$ and $L_D$ respectively, the dismissble conditions conditions would hold. (B) Assumption (\ref{eq:dismissible_component_i}) is violated by open backdoor paths between $Y$ and $D$, such as the path $A_D\rightarrow D_{k+1}^{\overline{c}=0} \leftarrow U_{DY} \rightarrow Y_{k+1}^{\overline{c}=0}$, a collider path which opens when conditioning on $D_{k+1}^{\overline{c}=0}$. Likewise, (\ref{eq:dismissible_component_ii}) is violated by the path $A_Y\rightarrow Y_k^{\overline{c}=0}\leftarrow W_{DY}\rightarrow D_{k+1}^{\overline{c}=0}$.}
\label{fig:separable_effect}
\end{figure}

\subsection{Correspondence with continuous time estimands\label{sec:correspondence}}
Up to this section, we have considered a fixed time grid where the duration of each interval is 1 unit of time. In this section we will consider limiting cases of the identification results where we allow the grid-spacing to become arbitrarily small.  Let the endpoints of the intervals $k\in\{0,\dots,K+1\}$ correspond to times $\{0,t_1,\dots,t_{K+1}\}\subseteq[0,\infty)$. As before, we assume the duration of all intervals is equal, and denote this by $\Delta t$.

We can associate the counterfactual quantities considered thus far in discrete time with corresponding quantities in the counting process literature. An overview of the corresponding quantities is presented in Table~\ref{tab:correspondence_counterfactuals_classical}. Here, we use the term 'factual quantities' to denote variables that take their natural values, i.e. quantities that are not subject to any counterfactual intervention (see ~\citet{richardson_single_2013} for a formal definition of natural value). These are different from observed quantities, which only contain the factual events that have been recorded in subjects that are under follow-up.

\begin{table}[htbp]
  \centering
  \caption{Correspondence between discrete time quantities and continuous time quantities.  $C$ and $T^D$ are the time of loss to follow-up and time of the competing event respectively, and $L_t$ is the process of measured covariates by time $t$.  The use of the superscript $c$ in the right column represents a \emph{complete}  process, that is, a process where no individuals are lost to follow-up. }
     \resizebox{\columnwidth}{!}{
    %  \begin{tabular}{lll}
    \begin{tabular}{|p{3cm}|p{4cm}|p{5cm}|}
    % \toprule
    \hline
        %   & \multicolumn{1}{p{7.085em}}{Discrete quantity} & \multicolumn{1}{p{9.085em}}{Quantity in the counting process literature} \\
          & \textbf{Discrete quantity} & \textbf{Quantity in the counting process literature} \\
    \hline
    \multirow{2}{3cm}{\textbf{Counterfactual quantity}} & $Y_k^{\overline{c}=0}$     &  \\\cline{2-3}
          & $D_k^{\overline{c}=0}$     &  \\\hline
    \multirow{3}{3.5cm}{\textbf{Factual quantity}} 
    % & $C_k$    & $I(C \leq t_k)$  \\\cline{2-3}
          & $Y_k$    &  $N^c_{t_k}$ \\\cline{2-3}
          & $D_k$    &  $I(T^D \leq t_k)$ \\\hline
    \multirow{2}{3cm}{\textbf{Observed quantity}}&  $C_k$    & $\int_0^{t_k} I(T^D\geq t)dC_t$ \\\cline{2-3}
          &  $\sum_{j=1}^k I(C_{j}=0) \Delta Y_j$    & $\int_0^{t_k} I(C\geq t)dN^c_t$ \\\cline{2-3}
            &   $\sum_{j=1}^k I(C_{j}=0) \Delta D_j$  
          & $\int_0^{t_k} I(C\geq t)dI(T^D \leq t)$ \\\hline
    \end{tabular}%
      }
    
  \label{tab:correspondence_counterfactuals_classical}
\end{table}%

Importantly, quantities indexed by the superscript $\overline{c}=0$ are controlled direct effects with respect to an intervention which eliminates loss to follow-up, and do not have an analog in the existing counting process literature. This includes the quantity $C_k^{\overline{c}=0}$, which is the counterfactual value of the censoring indicator for interval $k$ under an intervention that eliminates censoring in previous intervals.

\subsubsection{Correspondence of identification conditions}
In the counting process literature, it is usual (see e.g. \citet{aalen_survival_2008} and~\citet[Expression 7.22]{cook_statistical_2007}) to identify the intensity of the complete (i.e. uncensored) counting process as a function of the intensity of the observed (i.e. censored) counting process, using the independent censoring assumption
\begin{align}
    \lambda^{\mathcal{F}^c}_t = \lambda^\mathcal{G}_t ~, \label{eq:continuous_independent_censoring}
\end{align}
where
\begin{align*}
    \lambda^{\mathcal{F}^c}_tdt &= E[dN^c_t\mid \mathcal{F}^c_{t^-}]~, \quad \mathcal{F}^c_{t^-}=\sigma(L_u, A, N^c_u, I(T^D\geq u); 0\leq u < t ) ~,  \\
    \lambda^{\mathcal{G}}_t dt &= E[dN^c_t\mid \mathcal{G}_{t^-}]~, \quad \mathcal{G}_{t^-}=\sigma(L_u, A, N^c_u, I(T^D\geq u), I(C\geq u); 0\leq u < t ) ~.
\end{align*}
A corresponding formulation of (\ref{eq:continuous_independent_censoring}) within the discrete time %counterfactual
framework is
\begin{align}
    \frac{1}{\Delta t} \cdot  E[\Delta Y_{j} \mid \overline{D}_{j}, \overline{L}_{j-1}, \overline{Y}_{j-1}, A ] = \frac{1}{\Delta t} \cdot
     E[\Delta Y_{j} \mid \overline{D}_{j},\overline{C}_{j}, \overline{L}_{j-1}, \overline{Y}_{j-1}, A] ~. \label{eq:discrete_correspondence_expectation}
\end{align}
We assume that the possibility of experiencing more than one recurrent event during a single interval becomes negligible, i.e. $\Delta Y_j\in\{0,1\}$, for fine discretizations. Thus, for small $\Delta t$, (\ref{eq:discrete_correspondence_expectation}) is closely related to
\begin{align}
    &\frac{1}{\Delta t} \cdot  P( \Delta Y_j =\Delta y_j \mid \overline{D}_{j}, \overline{L}_{j-1}, \overline{Y}_{j-1}, A ) = \frac{1}{\Delta t} \cdot
    P(\Delta Y_{j}=\Delta y_j \mid \overline{D}_{j},\overline{C}_{j}, \overline{L}_{j-1}, \overline{Y}_{j-1}, A) ~. \label{eq:discrete_independent_censoring}
\end{align}

We show in Appendix~\ref{sec:app_correspondence_censoring} that when the random variables in (\ref{eq:discrete_independent_censoring}) are generated under an FFRCISTG model, and when consistency (\ref{eq:consistency_total_effect}) and faithfulness hold, then exchangeability with respect to censoring (\ref{eq:exchangeability_total_ii}) is implied by (\ref{eq:discrete_independent_censoring}).
In plain English, this result states that a discrete time analog of the independent censoring assumption implies the absence of backdoor paths between $C_i^{a,\overline{c}=0}$ and $Y_{j}^{a,\overline{c}=0}$ for all $i\leq j$  in Fig.~\ref{fig:total_effect}. However, the reverse implication does not follow, as effects of $C_i$ on future $\Delta Y_j$ (i.e. the presence of a path $C_i\rightarrow \Delta Y_j$ for $i\leq j$ in a DAG) violates (\ref{eq:discrete_independent_censoring})  without violating (\ref{eq:exchangeability_total_ii}). The path $C_i\rightarrow \Delta Y_j$ for $i\leq j$ could represent the presence of concomitant care which affects the recurrent outcome, and the consequences of such a path for the interpretation of discrete versus continuous time estimands are clarified in Sec.~\ref{sec:diff_interpretation}.  A similar correspondence of identification conditions exists for the competing event~\citep{robins_correcting_2000}, and is stated in Appendix~\ref{sec:app_correspondence_censoring}.

\subsubsection{Correspondence of identification formulas}
\label{section:correspondence of identification formulas}
In this section, we consider identifying functionals in the limit of fine discretizations of time. Justifications for the results are given in  Appendix~\ref{sec:proof_identification}.

In the limit of fine discretizations,  $E[Y_k^{a,\overline{c}=0}]$ can be formulated as 
\begin{align}
  \int_0^{t_k}\prodi_{s < u}[1- dA_s^D(a)] \cdot E[W_A \mathcal{W}_{C,u-} dN_u \mid C\geq u, T^D\geq u,A=a] ~, \label{eq:total_weighted_form_continuous}
\end{align}
where 
\begin{align}
    \mathcal{W}_{C,t} &= \frac{\prodi_{u\leq t} [1-d \Lambda_u^C]}{\prodi_{u\leq t} \left[1-d \Lambda_u^{C\mid\mathcal{F}}\right]} ~, \label{eq: continuous-time dot weights}
\end{align}
$dA_t^D(a) =  P( T^D \in [t,t+dt)| T^D \geq t, C \geq t, A=a)$ and $W_A=\frac{P(A=a)}{\pi_A(A)}$. In this setting, $\Lambda_t^C$ and $\Lambda_t^{C|\mathcal F}$ are the compensators of $N^C$ with respect to $\mathcal F_{t}^{C,D,A}$ and $\mathcal F_{t}^{L,Y,C,D,A}$, which heuristically means that
\begin{align}
     %~, \\
    d \Lambda_t^C &= P(C \in [t,t+dt)|\mathcal F_{t-}^{C,D,A}) ~, \label{eq:lambda_censoring_i}\\
    d \Lambda_t^{C|\mathcal F} &= P(C \in [t, t+dt) |\mathcal F_{t-}^{L,Y,C,D,A}) ~. \label{eq:lambda_censoring_ii}
\end{align} 
Here, $\mathcal F_{t}^{B}$ denotes the filtration generated by the collection of variables and processes $B$. Expression (\ref{eq:total_weighted_form_continuous}) is equivalent to
\begin{align}
    \int_0^{t_k} \frac{E\bigg[  \frac{I(A=a)I(C > u)}{ \pi_A(A)\prodi_{s< u} (1-d\Lambda_s^{C\mid\mathcal{F}}) } \cdot dN_u \bigg]  }{E\bigg[  \frac{I(A=a)I(C> u)}{ \pi_A(A)\prodi_{\tau< u} (1-d\Lambda_\tau^{C\mid\mathcal{F}}) }  \bigg] } ~.\label{eq:total_effect_IPCW}
\end{align}
The product-integral terms are covariate-specific survival functions with respect to the censoring event.  Expression (\ref{eq:total_effect_IPCW}) corresponds to Expression (7.29) in~\citet{cook_statistical_2007} and targets a setting commonly called 'dependent censoring' in the counting process literature. 

Under the strengthened independent censoring assumption
\begin{align}
    C\independent(N^c_t, T^D)\mid A ~, \label{eq:strong_indep_censoring}
\end{align}
which implies (\ref{eq:continuous_independent_censoring}) without any covariates ($L_t=\emptyset$), we have that $\mathcal{W}_{C,t}=1$ with $L_t=\emptyset$. Furthermore, in settings where treatment $A$ is assigned by randomization, we have that $W_A=1$. Consequently, (\ref{eq:total_weighted_form_continuous}) reduces to 
\begin{align}
  \int_0^{t_k} \prodi_{u < t} [1-dA_u^D(a)] E\left[ dN_u \mid  T^D\geq u, C \geq u,A=a \right] ~.\label{eq:total_effect_continuous}
\end{align} 
Expression (\ref{eq:total_effect_continuous}) corresponds to Expression (13) in \citet{cook_marginal_1997}. 

The controlled direct effect (with respect to interventions on the competing event) can be viewed as a special case of the total effect, where 1) we re-define the censoring event as a composite of loss to follow-up and the competing event, hence the censoring indicator takes the form $I(C\wedge T^D \leq t)$, and 2) we re-define the "competing" event as an event that never occurs. Under 1) and 2), Expression (\ref{eq:strong_indep_censoring}) becomes
\begin{align}
    N^c_t \independent C\wedge T^D\mid A ~, \label{eq:strong_indep_censoring_direct}
\end{align}
and (\ref{eq:total_effect_continuous}) reduces further to
\begin{align}
  \int_0^{t_k} E\left[ dN_u \mid  T^D \geq u, C \geq u,A=a \right] ~. \label{eq:direct_effect_continuous}
\end{align}
Expression (\ref{eq:direct_effect_continuous}) is the continuous time limit of the controlled direct effect (\ref{eq:direct_effect_IPCW_discrete}) if (\ref{eq:strong_indep_censoring_direct}) is satisfied for fine discretizations. It corresponds to the quantity $R(t)$ in~\citet{andersen_modeling_2019} and is described by \citet{cook_marginal_1997} as a measure of the expected number of events for subjects at risk over the entire observation period, under the condition that the recurrent event is independent of the competing event.

The continuous time limit of the identification formula for separable effects (\ref{eq:IPW_separable}) is given by
\begin{align*}
   \int_0^{t_k} &\prodi_{s < u}[1- dA_s^D(a_Y)] \\
   &\qquad\times E\left[  
     W_A \mathcal{W}_{C,u-}  \mathcal W_{D,u-}(a_Y,a_D)  \mathcal{W}_{L_{D,u-}}(a_Y,a_D) dN_u \mid  C\geq u,T^D \geq u,A=a_Y \right] 
\end{align*}
and
\begin{align*}
   \int_0^{t_k} &\prodi_{s < u}[1- dA_s^D(a_D)] \\
   &\qquad\times E\left[  
     W_A \mathcal{W}_{C,u-} \theta_u^{Y}  \mathcal W_{Y,u-}(a_Y,a_D)  \mathcal{W}_{L_{Y,u-}}(a_Y,a_D) dN_u \mid  C\geq u,T^D \geq u,A=a_D \right] ~.
\end{align*}
The weights $\mathcal W_D(a_Y,a_D)$ and $\mathcal W_Y(a_Y,a_D)$ take the form
\begin{align*}
    \mathcal W_{D,t}(a_Y,a_D) = \theta_t^D \cdot \frac{ \prodi_{u\leq t}  \left[1-d \Lambda_u^{D\mid\mathcal{F}}(a_D) \right] }{ \prodi_{u\leq t} \left[1-d \Lambda_u^{D\mid\mathcal{F}}(a_Y)\right]}~,~ \mathcal{W}_{Y,t}(a_Y,a_D)= \prod_{s\leq t} \theta_s^{Y} \frac{\prodi_{u\leq t} [1-d \Lambda_u^{Y\mid\mathcal{F}}(a_Y)]}{\prodi_{u\leq t} \left[1-d \Lambda_u^{Y\mid\mathcal{F}}(a_D)\right]} 
\end{align*}
where the compensators $\Lambda^{D|\mathcal{F}}$ and $\Lambda^{Y|\mathcal{F}}$ are defined analogously to (\ref{eq:lambda_censoring_ii}). $\Lambda_u^{D\mid\mathcal{F}}(a)$ is understood as the random function $\Lambda_u^{D\mid\mathcal{F}}$ evaluated in the argument $A=a$ (and likewise for $\Lambda_u^{Y|\mathcal{F}}(a)$). Furthermore, $\theta_t^D = \Big( \frac{d \Lambda_t^{D\mid\mathcal{F}}(a_D)}{ d \Lambda_t^{D\mid\mathcal{F}}(a_Y)  }\Big)^{I(T^D \leq t)}$ and $\theta_t^Y = \Big( \frac{d \Lambda_t^{Y\mid\mathcal{F}}(a_Y)}{ d \Lambda_t^{Y\mid\mathcal{F}}(a_D)  }\Big)^{N_t-N_{t-}}$.

The mathematical characterization of the limit $\mathcal W_{L_{D},t}(a_Y,a_D)$ of 
$\prod_{j=0}^i \pi_{L_{D,j}}^{a_D}/\prod_{k=0}^i \pi_{L_{D,k}}^{a_Y}$, where $\pi_{L_{D,j}}^{\bullet}$ is defined in (\ref{eq: pi L_D}),
depends on what type of process $L_{D}$ is. Many applications are covered when $L_{D}$ is a marked point process on a finite mark space. That is, $L_{D}$ takes values in a finite number of marks but can jump between marks over time. We will assume $L_{D}$ is such a process in Sec. \ref{sec:estimation_maintext}. The same considerations also apply to the limit $\mathcal{W}_{L_{Y},t}(a_Y,a_D)$ of $\prod_{j=0}^i \pi_{L_{Y,j}}^{a_Y}/\prod_{k=0}^i \pi_{L_{Y,k}}^{a_D}$. These weights are closely related to the mediation weights considered by~\citet{zheng_longitudinal_2017,mittinty_longitudinal_2020,tchetgen_tchetgen_inverse_2013}.

Finally, a product integral representation of the total effect on the competing event is given in Appendix~\ref{sec:proof_identification}. 

In Table~\ref{tab:estimands}, we show an overview of the correspondence between the causal estimands discussed in Sec.~\ref{sec:identification} and common estimands that appear in the statistical literature.
\begin{table}[h!]
    \centering
    \caption{A mapping of common recurrent events estimands in the literature to their counterfactual definition of risk. The third row shows a new proposed estimand, the \emph{separable effects for recurrent events}, based on~\citet{stensrud_separable_2020-1}.}
     \resizebox{\columnwidth}{!}{\begin{tabular}{ |p{ 2 cm }|p{ 4 cm }|p{ 6 cm }|}
          \hline
         \textbf{Definition}     & \textbf{Description}  & \textbf{Alternative terminology} \\
         \hline
         $E[Y_k^{a, \overline{c}=0}]$   & Expected event count without elimination of competing events   & Marginal mean~\citep{cook_marginal_1997}, treatment policy strategy~\citep{schmidli_estimands_2021}\\
         \hline
         $E [Y_k^{a,\overline{d}=0,\overline{c}=0}]$   &   Expected event count with elimination of competing events  &  Cumulative rate~\citep{ghosh_nonparametric_2000,cook_marginal_1997}, hypothetical strategy~\citep{schmidli_estimands_2021} \\
         \hline
         $E[Y_{k}^{a_Y,a_D, \overline{c}=0}]$   & Expected event count under a decomposed treatment &  Does not correspond to classical estimands \\
          \hline 
\end{tabular}}

    \label{tab:estimands}
\end{table}

\subsubsection{Differences in interpretation\label{sec:diff_interpretation}} 
In the counting process formalism of recurrent events, $N^c_t$ is interpreted as the count of events that would be measured if we somehow could observe every individual's future outcomes (for example by implanting a 'tracker device'), even if they withdraw from study participation or otherwise discontinue follow-up. This is a factual (as opposed to a counterfactual) quantity, because it is not subject to any counterfactual intervention to eliminate censoring. Next, the observed counting process $N_t$ is interpreted as the number of events that were recorded while the subject was alive and under follow-up, i.e. $N_t=\int_0^t I(T^D\geq s, C\geq s)dN^c_s$.

If study participants receive concomitant care by virtue of being under follow-up (e.g. additional medical exams that can lead to discovery of new conditions which trigger initiation of additional, supportive treatments), then individuals who are lost to follow-up may have different outcomes $N^c_t$ compared to subjects under follow-up due to the termination of such concomitant care. This violates the independent censoring condition (\ref{eq:continuous_independent_censoring}). Therefore, $E[N_t^c]$ is not identified when concomitant care under follow-up affects future outcomes $N^c_t$ without additional strong assumptions. In other words, one cannot make inference on individuals who are censored (who do not receive concomitant care) by only observing uncensored individuals (who do receive concomitant care). 

In contrast to $N^c_t$,  the counterfactual quantity $Y^{\overline{c}=0}_k$ is often interpreted as the number of recurrent events that would be observed by time $k$ under an intervention which \emph{prevented} individuals from being lost to follow-up, i.e. in a pseudopopulation where all individuals receive the same level of the primary intervention ($A$) and concomitant care. $E[Y^{\overline{c}=0}_k]$ is still identified under effects of concomitant care on the recurrent event, i.e. the arrows $c_k=0\rightarrow Y_k^{a,\overline{c}=0}$ in Fig.~\ref{fig:total_effect} do not violate the exchangeability condition (\ref{eq:exchangeability_total_ii}). In the special case where concomitant care does not affect future recurrent events, the interpretations of $E[Y^{\overline{c}=0}_k]$ and $E[N^c_{t_k}]$ coincide.  Similar arguments are given by \cite{young_causal_2020}, Sec. 5, for the incident event setting.

\section{Estimation \label{sec:estimation_maintext}}
The identification formulas in Sec. \ref{sec:identification} motivate a variety of estimators that have been presented in the literature; examples can be found in \citet{young_causal_2020,stensrud_generalized_2021,martinussen2021estimation}.

In survival and event history analysis, researchers have traditionally been accustomed to estimands and estimators defined in continuous time. We mapped out correspondences between the discrete time identification formulas and their continuous time limits in Sec.~\ref{section:correspondence of identification formulas}. Next, we will consider\footnote{Instead of considering estimators defined in continuous time, it would also possible to construct estimators targeting the discrete identification formulas in Sections \ref{sec:identification_total_effect}-\ref{subsection: separable effects identification}, similarly to \citet{young_causal_2020,stensrud_generalized_2021}.} the following general estimator in continuous time, applicable to several of the estimands considered above, 
\begin{align}
\begin{pmatrix}
 \hat Y_t \\ 
\hat S_t \\
\hat D_t %\\
%\hat {RS}_t
\end{pmatrix} &= \begin{pmatrix}
0 \\ 1 \\ 0 %\\0
\end{pmatrix} + \int_0^t \begin{pmatrix}
\hat S_{s-} & 0 & 0  \\
0 & -\hat S_{s-} & 0  \\
0 & 0 & \hat S_{s-}  %\\
%0 & 0 & 0 & 1-\hat D_{s-}
\end{pmatrix} d\begin{pmatrix}
\hat B^Y_s \\ \hat B_s^D \\ \hat B_s^{D,w} %\\ s
\end{pmatrix}. \label{eq: mean frequency estimator}
\end{align}
Here, $\hat Y_t$ is an estimator of a counterfactual mean frequency function under interventions of interest and $\hat S_t$ is an auxiliary quantity 
used  to define the system in (\ref{eq: mean frequency estimator}). 
Finally, $\hat D_t$ is an estimator of a counterfactual competing event process under interventions of interest. 

The stochastic differential equation (\ref{eq: mean frequency estimator}) is uniquely determined by the integrators. Thus, presenting different estimators on this form amounts to presenting different integrators. We restrict the focus to the case with no tied event times in the remainder of this section for ease of presentation.

\subsection{Risk set estimators}
Identification formulas of the form (\ref{eq:total_weighted_form_continuous}), where the integrator conditions on the at-risk event $\{ T^D \geq t, C \geq t \}$, motivate the risk set estimators
\begin{align}
    \begin{pmatrix}
        \hat B^Y_t \\ \hat B_t^D \\ \hat B_t^{D,w}
    \end{pmatrix}
     &=  \sum_{i=1}^n \int_0^t \begin{pmatrix}
        \frac{\hat{\bar{\theta}}_s^i \hat R_{s-}^{i} I(A_i=a) Z_s^i }{\sum_{j=1}^n I(A_j=a) Z_s^j } & 0 \\
        0 & \frac{ \hat R_{s-}^{i,D} I(A_i=a) Z_s^i }{\sum_{j=1}^n  I(A_j=a)  Z_s^j } 
        \\
        0 & \frac{ \hat{\bar{\theta}}_s^i \hat R_{s-}^{i} I(A_i=a) Z_s^i }{\sum_{j=1}^n  I(A_j=a)  Z_s^j } 
     \end{pmatrix} d\begin{pmatrix}
            N_s^{i} \\ N_s^{D,i}
     \end{pmatrix}  ~,
    \label{eq: weighted risk set estimator}
\end{align}
where $Z^i_t=I(T^{D,i}\geq t, C^i \geq t)$ is the at-risk process. Here, $N_t^i=N_{t\wedge C_i}^{c,i}$ is the observed counting process for the recurrent event, $N^{D,i}_t = I(T^{D,i} \leq t, T^{D,i} < C^i)$ is the observed counting process for death, and $\hat R^i, \hat R^{i,D}$ are estimated weight processes of individual $i$ (see Table \ref{tab:weights}). The $\hat{\bar{\theta}}^i$ terms, specified in Table \ref{tab:weights}, are needed when the driving counting processes share jump times with the weights, which is the case for some separable effects estimands, as seen in Section \ref{section:correspondence of identification formulas}.

\subsection{Horvitz–Thompson and Hajek estimators}
Identification formulas of the form (\ref{eq:total_effect_IPCW}) (which coincides with the discrete time formulas (\ref{eq:total_effect_IPCW_discrete}) in the case of the total effect, (\ref{eq:direct_effect_IPCW_discrete}) for the controlled direct effect and (\ref{eq:IPW_separable})-(\ref{eq:IPW_separable_ii}) for the separable effect) motivate Hajek  estimators \citep{godambe_comment_1971} and Horwitz-Thompson estimators \citep{horvitz_generalization_1952}, which give the integrators
\begin{align}
    \begin{pmatrix}
        \hat B^Y_t \\ \hat B_t^D \\ \hat B_t^{D,w}
    \end{pmatrix}
     &=  \frac{1}{n} \sum_{i=1}^n \int_0^t \frac{\hat{\bar{\theta}}_s^i \hat{ \overline R}_{s-}^{i} I(A_i=a) }{ H_{s-} }  \begin{pmatrix}
        1 & 0 \\
        0 & 0
        \\
        0 & 1
     \end{pmatrix} d\begin{pmatrix}
            N_s^{i} \\ N_s^{D,i}
     \end{pmatrix} .
    \label{eq: weighted rate estimator Hajek and H-T}
\end{align}
In the above expression, $H_{t} = \frac{1}{n} \sum_{j=1}^n \hat{ \overline R}_{t}^j I(A_j=a)$ gives Hajek estimators, and $H_{t}=1$ gives Horvitz–Thompson estimators. $\hat{ \overline{R}}^i$ is an estimated weight processes for individual $i$ (see Table \ref{tab:weights}). These estimators are closely related to previously studied inverse probability weighted estimators~\citep{robins_recovery_1992,rotnitzky_semiparametric_1995,hernan_marginal_2000} and proportional odds estimators~\citep{zheng_longitudinal_2017,mittinty_longitudinal_2020,tchetgen_tchetgen_inverse_2013}.

The estimator defined by (\ref{eq: mean frequency estimator}) may be unfamiliar to some practitioners, but it has the following properties:
\begin{itemize}
    \item The estimator is generic in the sense that, given  weight estimators it can be used to estimate the total effect, the controlled direct effect, and the separable effect, and other composite estimands (e.g. the `while alive' strategy) as defined in Sec. \ref{sec:estimands_with_competing_events}.
    \item Expression (\ref{eq: mean frequency estimator}) is easy to solve on a computer, as it defines a recursive equation that can be solved using e.g. a \texttt{for} loop. General software that can be used to solve systems like (\ref{eq: mean frequency estimator}) is  available for anyone to use at \texttt{github.com/palryalen/}.
\end{itemize}

In Theorem \ref{thm:convergence} in Appendix \ref{sec:appendix estimation} we provide convergence results for the estimators in (\ref{eq: mean frequency estimator})-(\ref{eq: weighted risk set estimator}) for the case when the true weights are not known, but estimated. Convergence is guaranteed when the weight estimators $\hat R_t$, $\hat R_t^D$, and $\hat{ \overline{R}}_t$ converge in probability to the true weights for each fixed $t$, which is established for the additive hazard weight estimator we will consider in Sec. \ref{subsection: estimating the weights}
\cite[Theorem 2]{ryalen_additive_2019}.

In Table \ref{tab:weights} we present pairs of weights $R^i, R^{i,D}$, and $ \overline R^i$ as well as the parameter $\bar \theta^{i}_t $ that can be used in (\ref{eq: mean frequency estimator})-(\ref{eq: weighted risk set estimator}) to estimate the total effect, the direct effect, and the separable effects as defined in Sec. \ref{sec:estimands_with_competing_events}. Define $\mathcal{W}_{D}$, the weights associated with the intervention that prevents death from other causes, similarly to the censoring weights in (\ref{eq: continuous-time dot weights}),
\begin{align*}
    \mathcal{W}_{D,t} &= \frac{\prodi_{u\leq t} [1-d \Lambda_u^D]}{\prodi_{u\leq t} \left[1-d \Lambda_u^{D\mid\mathcal{F}}\right]} ~,
\end{align*}
where $\Lambda_t^D$ is the compensator of $N^D$ with respect to $\mathcal F_{t}^{C,D,A}$, defined analogously to (\ref{eq:lambda_censoring_i}). 
$\overline {\mathcal W}_{C}$ and $\overline{\mathcal W}_{D}$ are the unstabilized versions of these weights, defined as
\begin{align*}
    \overline {\mathcal{W}}_{C,t} = \frac{I(C > t)}{\prodi_{u\leq t} \left[1-d \Lambda_u^{C\mid\mathcal{F}}\right]}~, \hspace{2 cm} \overline {\mathcal{W}}_{D,t} = \frac{I(T^D > t)}{\prodi_{u\leq t} \left[1-d \Lambda_u^{D\mid\mathcal{F}}\right]} ~.
\end{align*}

\begin{table}[h!]
    \centering
    \caption{Weights $R^i$, $R^{i,D}$ and $\overline{R}^{i}$ as well as $\bar \theta^{i}_t $  in (\ref{eq: weighted risk set estimator}) and (\ref{eq: weighted rate estimator Hajek and H-T}) for estimating the total effect, the controlled direct effect, and the separable effects, respectively. 
}
    \resizebox{\columnwidth}{!}{
    \begin{tabular}{ |p{ 1.8 cm }|p{ 11 cm }|p{ 0.3 cm }| p{0.2 cm}|}
         \hline %& & \\[\dimexpr-\normalbaselineskip+1em]
         \textbf{Estimand}     &  \scalebox{0.9}{$\boldsymbol{ R_{t}^{i}},\boldsymbol{ \overline{R}_{t}^{i}}$}  &   \scalebox{0.9}{$\boldsymbol{ R_{t}^{i,D}}$} & \scalebox{0.9}{$\boldsymbol{ \bar \theta^{i}_t }$} \\
         \hline
         $E[Y_k^{a,\overline{c}=0}]$  & $  R_{t}^{i} = {\mathcal W}_{C,t}^i \cdot   W_A^{i}$ & $1$ & 1 \\ % \hline
         & $\overline{R}_{t}^{i} =\overline{ \mathcal W }_{C,t}^i \cdot
         \frac{1}{ \pi_A(A_i)}$ &  $-$ & 1
         \\\hline
         $E[Y_k^{a,\overline{d}=\overline{c}=0}]$  & $ R_{t}^{i}= {\mathcal W}_{C,t}^i \cdot W_A^{i} \cdot \mathcal W_{D,t}^i$ & $0$ & 1 \\ % \hline
         & $\overline{R}_{t}^{i} = \overline{ \mathcal W }_{C,t}^i \cdot \frac{1}{ \pi_A(A_i)}\cdot \overline{\mathcal{W}}_{D,t}^i$ &  $-$ & 1
         \\\hline
         \multirow{4}{4em}{\scalebox{0.9}{$E[Y_k^{a_Y,a_D,\overline{c}=0}]$}} & $R_{t}^{i} = {\mathcal W}_{C,t}^i \cdot  I(a=a_Y) W_A^{i} \cdot  { \mathcal W}_{D,t}^i(a_Y,a_D) \cdot {\mathcal W}_{L_{D},t}^i(a_Y,a_D) $ & 1 & 1 \\
         & $\overline{R}_{t}^{i} =\overline{\mathcal W}_{C,t}^i \cdot \frac{ I(a=a_Y)}{\pi_A(A_i)} \cdot  { \mathcal W}_{D,t}^i(a_Y,a_D) \cdot {\mathcal W}_{L_{D},t}^i(a_Y,a_D) $ & $-$ & 1 \\
         \cline{2-4}
         
         & $R_{t}^{i} = {\mathcal W}_{C,t}^i \cdot  I(a=a_D) W_A^{i} \cdot  { \mathcal W}_{Y,t}^i(a_Y,a_D) \cdot {\mathcal W}_{L_{Y},t}^i(a_Y,a_D) $ & 1 & $\theta_t^{Y,i}$ \\
         & $\overline{R}_{t}^{i} =\overline{\mathcal W}_{C,t}^i \cdot \frac{ I(a=a_D)}{\pi_A(A_i)} \cdot  { \mathcal W}_{Y,t}^i(a_Y,a_D) \cdot {\mathcal W}_{L_{Y},t}^i(a_Y,a_D) $ & $-$ & $\theta_t^{Y,i}$
         \\\hline
         $E[D_k^{a,\overline{c}=0}]$  & $  R_{t}^{i} = {\mathcal W}_{C,t}^i \cdot   W_A^{i}$ & $1$ & 1 \\ % \hline
         & $\overline{R}_{t}^{i} =\overline{ \mathcal W }_{C,t}^i \cdot
         \frac{1}{ \pi_A(A_i)}$ &  $-$ & 1
         \\\hline
         \multirow{4}{4em}{\scalebox{0.9}{$E[D_k^{a_Y,a_D,\overline{c}=0}]$}} & $R_{t}^{i} = {\mathcal W}_{C,t}^i \cdot  I(a=a_Y) W_A^{i} \cdot   { \mathcal W}_{D,t}^i(a_Y,a_D) \cdot {\mathcal W}_{L_{D},t}^i(a_Y,a_D) $ & 1 & $\theta_t^{D,i}$ \\
         & $\overline{R}_{t}^{i} =\overline{\mathcal W}_{C,t}^i \cdot \frac{I(a=a_Y)}{\pi_A(A_i)}  \cdot   { \mathcal W}_{D,t}^i(a_Y,a_D) \cdot {\mathcal W}_{L_{D},t}^i(a_Y,a_D) $ & $-$ & $\theta_t^{D,i}$ \\
         \cline{2-4}
         & $R_{t}^{i} = {\mathcal W}_{C,t}^i \cdot I(a=a_D) W_A^{i} \cdot  { \mathcal W}_{Y,t}^i(a_Y,a_D) \cdot {\mathcal W}_{L_{Y},t}^i(a_Y,a_D) $ & 1 & 1 \\
         & $\overline{R}_{t}^{i} =\overline{\mathcal W}_{C,t}^i \cdot \frac{I(a=a_D)}{\pi_A(A_i)} \cdot  { \mathcal W}_{Y,t}^i(a_Y,a_D) \cdot {\mathcal W}_{L_{Y},t}^i(a_Y,a_D) $ & $-$ & 1
         \\\hline
\end{tabular}
}
    \label{tab:weights}
\end{table}

\subsection{Estimating the weights} \label{subsection: estimating the weights}
Suppose we have a consistent estimator of the propensity score $\pi_A$, which will allow us to estimate the treatment weights in Table \ref{tab:weights}.

The time-varying weights in Table \ref{tab:weights} solve the Doléans-Dade equation
\begin{align}
    W_t^i = 1 + \int_0^t W^i_{s-}(\theta_s^i - 1) d\bar N^i_s + \int_0^t W^i_{s-} Z_s^i( \alpha_s^i - \alpha_s^{*,i})ds ~, \label{eq: stochastic exponential}
\end{align}
where $W^i$ is the weight of interest, $\bar N^i$ is a counting process, $\alpha^i$ and $\alpha^{*,i}$ are hazards, and $\theta^i = \alpha^{*,i}/ \alpha^i$. In Table \ref{tab:weight hazards}, we present $\alpha^i$, $\alpha^{*,i}$, and $\bar N^i$'s corresponding to the different weights in Table \ref{tab:weights}.  
We consider a weight estimator that is defined via plug-in of cumulative hazard estimates,
\begin{align}
    \hat W_t^i = 1 + \int_0^t \hat W^i_{s-} (\hat \theta_{s-}^i - 1) d\bar N_s^i + \int_0^t \hat W_{s-}^i Z^i_s ( d\hat{ A}_s^i - d \hat{  A}^{*,i}_s ) ~, \label{eq: continuous time weight estimator}
\end{align}
where $\hat {A}^i$ and $\hat{ A}^{*,i}$ are cumulative hazard estimates of $ A_t^i = \int_0^t   \alpha_s^i ds$ and $ A_t^{*,i} = \int_0^t   \alpha_s^{*,i} ds$ and $\hat \theta_t^i = \frac{\hat{A}^{*,i}_t - \hat{ A}^{*,i}_{t-b}}{\hat{ A}^i_t - \hat{ A}^i_{t-b}}$, where $b$ is a smoothing parameter used to obtain the hazard ratio $\hat \theta_t^i$. The solution to (\ref{eq: continuous time weight estimator}) is determined by the cumulative hazard estimates and the counting process. Thus, the smoothing parameter $b$ contributes to the estimator only when $\bar N^i$ jumps, which will not happen for the weights $\mathcal W_C^i, \overline{\mathcal{W}}_C^i, \mathcal W_D^i,$ and $\overline{\mathcal{W}}_D^i$ in the examples we consider in Table \ref{tab:weights}. The counting process term in (\ref{eq: continuous time weight estimator}) can therefore be neglected for the upper four weights in Table \ref{tab:weight hazards}. 
For the other time-varying weights, choosing $b$ requires a trade-off between bias and variance, see \citet{ryalen_additive_2019} for a discussion.

To estimate ${ \mathcal{W}}_{L_{D},t}^i$, the weights associated with $L_{D}$, we suppose that there are $m$ marks. We consider the counting processes $\{N_h^i\}_{h=1}^m$ that "counts" the occurrence of each mark of individual $i$, having intensity $Z_t^i \cdot \alpha_{h,t}^idt = E[dN^i_{h,t}| \sigma( L_s^i,N_s^i,N_s^{D,i},C_s^i,A; s < t )]$. Then,
\begin{align*}
    \mathcal W_{L_{D},t}^i = \prod_{h=1}^m \mathcal W_{L_{D},h,t}^i ~, 
\end{align*}
where $\mathcal W_{L_{D},h,t}^i$ solves (\ref{eq: stochastic exponential}) with $\alpha_t^i = \alpha_{h,t}^i|_{A=a_Y}$, $\alpha_t^{*,i} = \alpha_{h,t}^{*,i}|_{A=a_D}$ and $\bar N^i = N_h^i$. We thus obtain an estimator of $\mathcal W_{L_{D},t}^i$ by multiplying the estimators $W_{L_{D},h,t}^i$, each of which solve (\ref{eq: continuous time weight estimator}). A corresponding procedure can be used to estimate the weights $\mathcal{W}_{L_Y,t}^i$. We present the choices of $\alpha^i$ and $\alpha^{*,i}$ for the different weights in Table \ref{tab:weight hazards}. For high-dimensional covariates $L_k$, these weight estimators may give rise to erratic behavior (this is also described for the related mediation weights in~\citet{mittinty_longitudinal_2020}). In future work, one could also consider estimators motivated by the alternative odds representation of the covariate weights, given in Appendix~\ref{sec:proof_identification}, along the lines of~\citet{zheng_longitudinal_2017,stensrud_generalized_2021}, but this is beyond the scope of the current work.

 In summary, we suggest the following strategy for estimating the causal effects of interest:
 \begin{itemize}
     \item Identify the requisite weights from Table \ref{tab:weights} and specify hazard models $\alpha^i$, $\alpha^{*,i}$ from Table \ref{tab:weight hazards}.
     \item Solve (\ref{eq: continuous time weight estimator}) to obtain estimates of the weight processes.
     \item Obtain $\hat R^i, \hat R^{D,i}, \hat{\overline {R}}^i$, and $\hat{\bar{\theta}}^i$ from (\ref{eq: weighted risk set estimator}) or (\ref{eq: weighted rate estimator Hajek and H-T}) by multiplying together the weight estimates of individual $i$ according to Table \ref{tab:weights}.
     \item Solve (\ref{eq: mean frequency estimator}) to obtain $\hat Y_t$ (and/or $\hat D_t$), which estimates the expected number of events under the chosen intervention at $t$.
     \item Repeat the previous steps with a contrasting intervention on treatment to obtain the targeted causal contrast.
     \item Evaluate the uncertainty of the estimators using non-parametric bootstrap.
 \end{itemize}
 We use this estimation method in Sec.\ \ref{sec:examples_revisited}, assuming additive hazard models for the different $\alpha^i$'s and $\alpha^{*,i}$'s. The estimators are implemented in the \texttt{R} packages \texttt{transform.hazards} and \texttt{ahw} (available at \texttt{github.com/palryalen/}). The code is found in the online supplementary material.

\begin{table}[h!]
    \centering
    \caption{Hazards and counting processes that define (\ref{eq: stochastic exponential}) for the different weights.}
    \resizebox{\columnwidth}{!}{
    \begin{tabular}{ |p{ 2.5 cm }|p{ 11.5 cm }|p{ 1.7 cm }|}
         \hline %& & \\[\dimexpr-\normalbaselineskip+1em]
         \textbf{Weight}     & \textbf{Hazards in (\ref{eq: stochastic exponential})} & $\bar{N}^i$ \\
         \hline
         $\overline{ \mathcal W }_{C,t}^i$  & $ \alpha_t^idt = P(t \leq C^i < t+dt|t \leq T^{D,i}, t \leq C^i, \overline L^i_{t-}, \overline N^i_{t-}, A^i) $ & $I(C^i\leq \cdot )$  \\ % \hline
         & $\alpha_t^{*,i} dt = 0$ & 
         \\\hline
          $\mathcal {W}_{C,t}^i$ &   $ \alpha_t^idt = P(t \leq C^i < t+dt|t \leq T^{D,i}, t \leq C^i, \overline L^i_{t-}, \overline N^i_{t-}, A^i) $  & $I(C^i\leq \cdot )$  \\  % \hline
         & $\alpha_t^{*,i} dt =  P(t \leq C^i < t+dt|t \leq T^{D,i}, t \leq C^i, A^i) $ &
         \\ \hline
         $ \overline{ \mathcal W }_{D,t}^i $   &  $  \alpha_t^i dt = P(t \leq T^{D,i} < t+dt| t \leq T^{D,i}, t \leq C^i, \overline L^i_{t-}, \overline N^i_{t-},A^i)  $     & $N^{D,i}$     \\
          &  $ \alpha_t^{*,i} dt = 0 $  &   \\
         \hline
         $\mathcal{W}_{D,t}^i$  &  $  \alpha_t^i dt = P(t \leq T^{D,i} < t+dt| t \leq T^{D,i}, t \leq C^i, \overline L^i_{t-}, \overline N^i_{t-},A^i)  $     & $N^{D,i}$     \\
          &  $ \alpha_t^{*,i} dt = P(t \leq T^{D,i} < t+dt| t \leq T^{D,i}, t \leq C^i,A^i) $   &  \\
         \hline
         $\mathcal{W}_{D,t}^i (a_Y,a_D)$ &  $  \alpha_t^i dt = P(t \leq T^{D,i} < t+dt| t \leq T^{D,i}, t \leq C^i, \overline L^i_{t-}, \overline N^i_{t-},A=a_Y)  $  & $N^{D,i}$  \\
         &  $\alpha_t^{*,i}dt = P(t \leq T^{D,i} < t+dt| t \leq T^{D,i}, t \leq C^i, \overline L^i_{t-}, \overline N^i_{t-},A=a_D)$ & 
         \\
         \hline
         $\mathcal{W}_{L_{D},h,t}^i (a_Y,a_D)$ &  $  \alpha_{h,t}^i dt = E[dN_{h,t}^i| t \leq T^{D,i}, t \leq C^i, \overline{L}_{t-}^i, \overline{N}^i_{t-}, A=a_Y ]  $ & $N_h^i$   \\
         &  $ \alpha_{h,t}^{*,i} dt = E[dN_{h,t}^i | t \leq T^{D,i}, t \leq C^i, \overline{L}_{t-}^i, \overline{N}^i_{t-}, A=a_D]$ & \\
         \hline
         
         $\mathcal{W}_{Y,t}^i (a_Y,a_D)$ &  $  \alpha_t^i dt = E[dN_t^i| t \leq T^{D,i}, t \leq C^i, \overline L^i_{t-}, \overline N^i_{t-},A=a_D]  $  & $N^{i}$  \\
         &  $\alpha_t^{*,i}dt = E[dN_t^i| t \leq T^{D,i}, t \leq C^i, \overline L^i_{t-}, \overline N^i_{t-},A=a_Y]$ & 
         \\
         \hline
         $\mathcal{W}_{L_{Y},h,t}^i (a_Y,a_D)$ &  $  \alpha_{h,t}^i dt = E[dN_{h,t}^i| t \leq T^{D,i}, t \leq C^i, \overline{L}_{t-}^i, \overline{N}^i_{t-}, A=a_D ]  $ & $N_h^i$   \\
         &  $ \alpha_{h,t}^{*,i} dt = E[dN_{h,t}^i | t \leq T^{D,i}, t \leq C^i, \overline{L}_{t-}^i, \overline{N}^i_{t-}, A=a_Y]$ & \\
         \hline
\end{tabular}
}
    \label{tab:weight hazards}
\end{table}

\subsection{Estimators under assumptions on $L_k$}\label{sec:simplified_estimators_partition}

There exist two important settings where we do not need to model the densities of the covariate process $L_t$. Firstly, if the dismissible component conditions are satisfied with $L_{Y,k} \equiv L_k$ and $L_{D,k}=\emptyset$, then $\mathcal{W}_{L_{D},t}=1$ (a further elaboration on this point is found in Appendix~\ref{sec:proof_identification}). The assumption that $L_{D,k}=\emptyset$ implies that $A_D$ partial isolation holds (see Appendix~\ref{sec:app_isolation} for a definition of $A_D$ partial isolation and Lemma 6 of~\citet{stensrud_generalized_2021} for a proof of this result). This is unlikely to hold in the trial considered in Sec.~\ref{sec:examples_revisited}, because we expect that the $A_D$ component of treatment can cause acute kidney injury by lowering systemic blood pressure, i.e. through the pathway $A_D\rightarrow L_j \rightarrow Y_{k>j}$, which is not intersected by any $D_{i\leq k}$. 

Secondly, if the dismissible component conditions are satisfied under $L_{D,k} \equiv L_k$ and $L_{Y,k}=\emptyset$, we have that $\mathcal{W}_{L_{Y},t}=1$. In the trial in Sec.~\ref{sec:examples_revisited}, this assumption implies that the component that binds to receptors in the kidneys ($A_Y$) has no effect on blood pressure outside of its possible effect on the risk of acute kidney injury (see Lemma 5 of~\citet{stensrud_generalized_2021}). This is plausible and serves as a sanity check of the dismissible component conditions in this example. However, in practice we have intermittent blood pressure measurements $L_k$, and therefore the dismissible component conditions under $L_{D,k} \equiv L_k$ and $L_{Y,k}=\emptyset$ hold at best approximately.

Even in these two simplified settings, the natural direct effect is not identified as the measured covariates $L_{Y,k}$ (or $L_{D,k}$) act as a recanting witness (see Sec. \ref{subsection: separable effects identification}).

\section{Example: blood pressure treatment and acute kidney injury}\label{sec:examples_revisited}
In Sec.~\ref{sec:sep_effects_no_censoring} we described a hypothetical modified version of antihypertensive therapy that preserves the effect of existing treatments on systemic blood pressure but does not lead to dilation of efferent arterioles in the kidneys, thereby potentially avoiding a detrimental side effect of treatment which can give risk to acute kidney injury. In this section, we apply the estimators proposed in the Sec.~\ref{sec:estimation_maintext} to compute the effect of such a modified blood pressure therapy on the recurrence of acute kidney injury, as well as the total and controlled direct effect, using data from the Systolic Blood Pressure Intervention Trial \citep{sprint2015randomized}. The illustrative example considered in this section builds on  ~\citet{stensrud_generalized_2021}, but now considers the case where acute kidney injury ($Y_k$) is a recurrent outcome as opposed to the (first) incident event. Another, simulated example from a hypothetical trial on treatment discontinuation is given in Appendix~\ref{sec:sim_examples_revisited} along with R code in the Supplementary Material.

In the SPRINT trial, individuals were randomized to standard ($A=0$) or intensive ($A=1$) blood pressure (BP) lowering therapy.  We consider the effect of intensive versus standard treatment on the recurrence of acute kidney injury by time $t$ during the first 1000 days of follow-up.

We have restricted our analysis to subjects aged over 75 years of age. Furthermore, we have only considered individuals with complete baseline covariates. This led to 1312 individuals under standard treatment and 1311 individuals under intensive treatment. By the end of 1000 days, a total of 73 deaths were recorded in the standard treatment group, versus 52 in the intensive treatment group. In total, 668 individuals were lost to follow-up before day 1000. The frequencies of recorded AKI events by treatment group are given in Table \ref{tab:frequency}.
\begin{table}[htbp]
  \centering
  \caption{Frequency table for recorded AKI events by treatment group}
    \begin{tabular}{|l|r|r|r|r|}\hline
    
          & 0     & 1     & 2     & 3 \\\hline
    
    $A=0$   & 1273  & 36    & 1     & 2 \\\hline
    
    $A=1$   & 1253  & 52    & 4     & 2 \\\hline
    \end{tabular}%
  \label{tab:frequency}%
\end{table}%

We estimated total, controlled direct and separable effect using (\ref{eq: weighted rate estimator Hajek and H-T}) with additive regression models for the hazards, specified in Table \ref{tab:assumed models}. 
\begin{table}[h!]
    \centering
    \caption{The left column includes the weights, the middle column includes the hazards that define (\ref{eq: stochastic exponential}), and the right column includes the parametric hazard models that were used in the data analysis.  %As the additive hazard coefficients are allowed to vary with time, the 
    }
    \begin{tabular}{ |p{ 2 cm }|p{ 1.3 cm }|p{ 8 cm }|}
         \hline %& & \\[\dimexpr-\normalbaselineskip+1em]
          \textbf{Weight}     &  \textbf{Hazards} &  \textbf{Hazard models fitted}\\
         \hline
            \multirow{2}{0.2cm}{$ \mathcal W_{C,t}^i$}
           & $ \alpha_t^idt$ &  $\beta_t^0 + \beta_t^A A + \beta_t^{L_0} L_0 + \beta_t^{L} L_t + \beta_t^{Y} Y_{t-} $    \\ \cline{2-3}
            & $\alpha_t^{*,i} dt$ &   $\beta_t^0 + \beta_t^A A$   \\
           \hline
            \multirow{2}{0.2cm}{$ \mathcal W_{D,t}^i$}
           & $ \alpha_t^idt$ &  $\beta_t^0 + \beta_t^A A + \beta_t^{L_0} L_0 + \beta_t^{L} L_t + \beta_t^{Y} Y_{t-}$   \\ \cline{2-3}
            & $\alpha_t^{*,i} dt$ &   $\beta_t^0 + \beta_t^A A$   \\
            \cline{1-3}
            \multirow{2}{0.2cm}{$ \mathcal W_{Y,t}^i(a_Y,a_D)$}
           & $ \alpha_t^idt$ &  $\beta_t^0 + \beta_t^A I(A=a_Y) + \beta^{L_0} L_0 + \beta_t^{L} L_t + \beta_t^{Y} Y_{t-}$    \\ \cline{2-3}
            & $\alpha_t^{*,i} dt$ &   $\beta_t^0 + \beta_t^A I(A=a_D) + \beta^{L_0} L_0 + \beta_t^{L} L_t + \beta_t^{Y} Y_{t-}$  \\
         \hline
\end{tabular}
    
    \label{tab:assumed models}
\end{table}
Following \citet{stensrud_generalized_2021}, we included the following baseline covariates ($L_0$): smoking status, history of clinical or subclinical cardiovascular disease, clinical or subclinical chronic kidney disease, statin use and sex. Additionally, we adjusted for the most recent measurements of mean arterial pressure ($L_{k}$). 
We truncated the stabilized weights $\mathcal{W}_{C,t},\mathcal{W}_{D,t}$ and $\mathcal{W}_{Y,t}$ outside of the interval [0.2-5]. A smoothing parameter of $b=250$ was used (analyses with parameters $b\in\{100,200,500\}$ gave similar results).

The analysis relies on the identification assumptions in Sec. \ref{sec:identification}.  The assumptions of exchangebility of baseline treatment, positivity and consistency hold by design because blood pressure treatment is assigned by randomization in a controlled experiment. We have further assumed that the measured covariates $(L_0,L_k)$ are sufficient for identification. In particular, exchangeability \eqref{eq:exchangeability_direct_ii} for the controlled direct effect and the dismissible component conditions \eqref{eq:dismissible_component_i}-\eqref{eq:dismissible_component_ii} would be violated if there are unmeasured common causes of death and recurrent AKI, such as $U_{DY}$ in Figures~\ref{fig:direct_effect} and \ref{fig:separable_effect}, that are not captured by $(L_0,L_k)$, or causal paths such as $A_Y \rightarrow M_{A_Y} \rightarrow D_{k+1}^{\overline{c}=0}$ or $A_D\rightarrow M_{A_D}\rightarrow Y_{k+1}^{\overline{c}=0}$ in Fig. \ref{fig:separable_effect} that are not intersected by $L_k$.

The resulting estimates are shown in Fig. \ref{fig:plots_motivating_example}. At 1000 days, we found a total effect of $0.017[-0.001,0.037]$ and a controlled direct effect of $0.017[-0.003,0.037]$ (95\% confidence intervals, obtained using 500 non-parametric bootstrap samples, are reported in square brackets). Thus, the total effect for individuals over 75 years old is (borderline) consistent with an increased occurrence of acute kidney injury under intensive treatment, as reported by \citet{sprint2015randomized} for the full trial population.
However, there were more deaths in the standard group compared to the intensive group (see Fig. 9a in \citet{stensrud_generalized_2021}). Thus, it is not clear whether the increased occurrence of acute kidney injury in the intensive group is due to a protective effect on survival or a direct effect on the recurrent outcome. 

To quantify the mechanism by which treatment leads to increased risk of acute kidney injury, we studied separable effects. The direct separable effect evaluated at $a_D=1$ is equal to $0.011[-0.005,0.034]$ at 1000 days, which is consistent with no reduction in the recurrence of acute kidney injury by eliminating the $A_Y$ component of treatment. This finding is also consistent with \citet{stensrud_generalized_2021}, who only studied the incidence of the first kidney failure event. To conclude, the analysis of separable effects does not provide evidence in favor of a reduction in the expected number of acute kidney injury episodes in a modified blood pressure treatment that does not dilate efferent glomerular arterioles. If a non-null effect had been found, this would strengthen the hypothesis that we could change the number of acute kidney injury occurrences by intervening on the treatment component that dilates efferent arterioles, and thereby make it more attractive to test such a hypothesis in a future randomized trial if such a treatment is developed.

\begin{figure}
    \centering
    \includegraphics[width=0.7\linewidth]{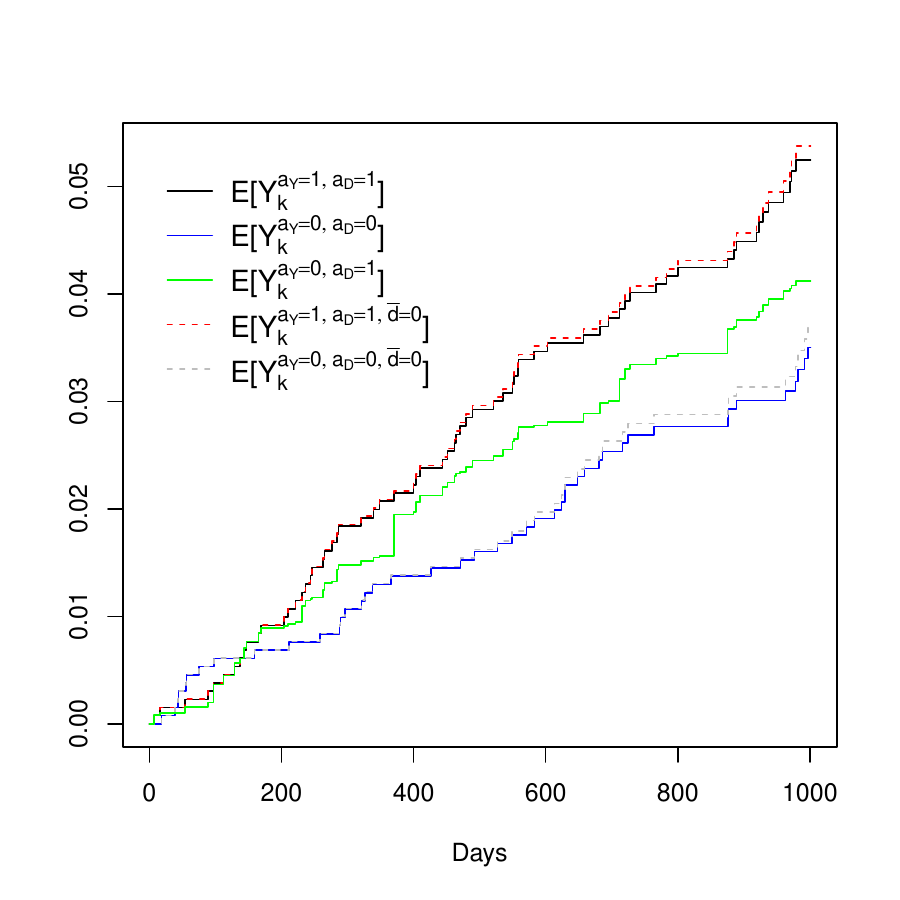}
    \caption{Shows estimates of the total effect, controlled direct effect and separable effect on recurrence of acute kidney injury up to 1000 days.  The superscript $\overline{c}=0$, denoting an intervention to prevent loss to follow-up, has been suppressed in plot legends to reduce clutter.}
    \label{fig:plots_motivating_example}
\end{figure}

\section{Discussion \label{sec:discussion}}
We have used a formal causal framework to define estimands for recurrent outcomes that differ in the way they treat competing events. The controlled direct effect is a contrast of counterfactual outcomes which implies that competing events are considered to be a form of censoring. The total effect captures all causal pathways between treatment and the recurrent event, and the separable effect quantifies contrasts in expected outcomes under independent prescription of treatment components.

Further, we have given formal conditions for identifying these effects, and demonstrated how to evaluate the identification conditions in causal graphs. This allowed us to formally describe how the causal estimands map to classical statistical estimands for recurrent events based on counting processes in the limit of fine discretizations of time. 

In settings with competing events, it is often of interest to disentangle the effect on the recurrent event from the effect on the competing event. The controlled direct effect often fails to do so in a scientifically insightful way, because it is not clear which intervention, if any, eliminates the occurrence of the competing event. The interpretation of the direct effect is therefore unclear. The separable effect corresponds (by design) to interventions on components of the original treatment, which are assigned independently of each other. The practical relevance of the estimand relies on the plausibility of modified treatments. The process of conceptualizing modified treatments can motivate  future treatment development and sharpen research questions about mechanisms \citep{robins_alternative_2011,robins2020interventionist,stensrud_separable_2020-1}.

Stronger assumptions are needed to identify the (controlled) direct effect and separable effects compared to the total effect. For example, these estimands require the investigator to measure common causes of the recurrent event and failure time, even in an ideal randomized trial such as in Sec.~\ref{sec:observed_data}. The need for stronger assumptions is far from unique to our setting, and it is analogous to the task of identifying per-protocol effects in settings with imperfect adherence and mediation effects.

The use of a formal (counterfactual) framework to define causal effects elucidates analytic choices regarding treatment recommendations. The formal causal framework makes it possible to define effects with respect to explicit interventions, and to explicitly state the conditions under which such effects can be identified from observed data. This also makes it possible to transparently assess the strength and validity of the identifying assumptions in practice.

\section*{Acknowledgements}
This manuscript was prepared using SPRINT Research Materials obtained from the NHLBI Biologic Specimen and Data Repository Information Coordinating Center and does not necessarily reflect the opinions or views of the SPRINT  or the NHLBI.

\bibliography{references}
\bibliographystyle{plainnat}

\appendix

\section{Illustrative example: a simulated trial on treatment discontinuation \label{sec:sim_examples_revisited}}
In this section, we illustrate an application of the concepts and estimators outlined in Secs.~\ref{sec:identification}-\ref{sec:estimation_maintext} for the total effect, controlled direct effect and separable effect, using a simulated data example. Consider investigators concerned with the effects of over-treatment of older adults with antihypertensive agents ($A_Y$) and aspirin ($A_D$). Over-treatment might lead to episodes ($Y_k$) of syncope (dizziness) caused by blood pressure becoming too low (in turn, possibly leading to injurious falls), with all-cause mortality ($D_k$) as a competing risk. Suppose these investigators conduct a randomized controlled trial in a sample of patients admitted to nursing homes with a history of
cardiovascular disease and currently taking antihypertensives and aspirin. Patients are then randomly assigned to either discontinue or to continue  \emph{both} treatments ($A=0$ indicates assignment to discontinuation of both aspirin and antihypertensives, $A=1$ denotes assignment to continuing both medications). A similar intervention is considered in \citet{reeve_withdrawal_2020}. Thus, $(A_Y,A_D)$ is a physical decomposition of treatment $A$ such that receiving both components, i.e. $A_Y=A_D=1$, is equivalent to receiving $A=1$ (and conversely receiving neither component, $A_Y=A_D=0$, is equivalent to $A=0$), and therefore satisfies the modified treatment assumption (\ref{eq:modified_treatment_assumption}).

We consider a simplified setting where there is one binary pre-treatment and post-treatment common cause of future events (syncope and death), denoted by $L_0$ (old age at baseline) and $L_1$ (binarized blood pressure after treatment initiation) respectively.

In the data generating model, we first sampled $L_0,L_1$ according to
\begin{align*}
    P(L_0=1)&=\frac{1}{2} ~, \\
    % P(A=1) &=\frac{1}{2} \\
    P(L_1=1\mid A_Y) &= \frac{1}{2} + (2A_Y-1)\cdot\beta_{L_1,A_Y} ~.
\end{align*}
Next, we generated the  processes $(Y_k,D_k,C_k)$ on a discrete time grid using the hazards
\begin{align*}
    P(C_{k+1}=1|C_k=D_k=0,L_0,L_1,\overline{Y}_k,A_Y,A_D) &= \beta_{C,0} ~,\\
    P(D_{k+1}=1|C_{k+1}= D_k=0,L_0,L_1,\overline{Y}_k,A_Y,A_D) &= \beta_{D,0} + A_D\cdot\beta_{D,A}+ L_0 \cdot\beta_{D,L_0}  \\
    &\qquad + L_1 \cdot \beta_{D,L_1} + Y_{k}\cdot\beta_{D,Y} ~, \\
    P(\Delta Y_{k+1}=1|C_{k+1}=D_{k+1}=0,L_0,L_1,\overline{Y}_k,A_Y,A_D) &= \beta_{Y,0} + L_0 \cdot\beta_{Y,L_0} + L_1\cdot \beta_{Y,L_1} ~.
\end{align*}

The data generating model is constructed such that it satisfies all of the identification conditions for total effect, controlled direct effect and separable effect, and is consistent with the causal DAGs in Fig.~\ref{fig:DAG_motivating_example}. The implementation of the data generating model is shown in the online Supplementary Material.

\begin{figure} 
    \centering
\subfloat[]{
 \resizebox{0.5\columnwidth}{!}{
\begin{tikzpicture}
    \begin{scope}[every node/.style={thick,draw=none}]
    \node[name=Yk] at (2,1.5){$Y_k$};
    \node[name=Dk] at (2,-1.5){$D_k$};
    \node[name=Dk1] at (5.5,-1.5){$D_{k+1}$};
    \node[name=Yk1] at (5.5,1.5){$Y_{k+1}$};
    % \node[name=A] at (-2,0){$A$};
    \node[name=AY] at (-1,1){$A_Y$};
    \node[name=AD] at (-1,-1){$A_D$};
    \node[name=L0] at (-3,0){$L_0$};
    \node[name=L0prime] at (1,0){$L_1$};
\end{scope}
\begin{scope}[>={Stealth[black]},
              every node/.style={fill=white,circle},
              every edge/.style={draw=black,very thick}]
% 	\draw[->, line width=1mm] (A) -- (AY);
%     \draw[->, line width=1mm] (A) -- (AD);
    %\path [->] (AY) edge (Yk);
    \path [->] (AY) edge (L0prime);
    \path [->] (L0prime) edge (Yk);
    \path [->] (L0prime) edge (Yk1);
    \path [->] (L0prime) edge (Dk);
    \path [->] (L0prime) edge (Dk1);
    \path [->] (AD) edge (Dk);
    % \path [->] (AY) edge[bend left=15] (Yk1);
    \path [->] (AD) edge[bend right=25] (Dk1);
    \path [->] (Dk) edge (Yk);
    \path[->] (Yk) edge (Yk1);
    \path[->] (Yk) edge (Dk1);
    \path[->] (Dk1) edge (Yk1);
    \path[->] (Dk) edge (Dk1);
    % \path [->] (L0) edge (A);
    \path [->] (L0) edge[bend right] (Dk);
    \path [->] (L0) edge[bend left] (Yk);
    \path [->] (L0) edge[bend right=35] (Dk1);
    \path [->] (L0) edge[bend left=35] (Yk1);
\end{scope}
\end{tikzpicture}
    }
    }
\subfloat[]{
 \resizebox{0.5\columnwidth}{!}{
\begin{tikzpicture}
    \begin{scope}[every node/.style={thick,draw=none}]
    \node[name=Yk] at (2,1.5){$Y_k$};
    \node[name=Dk] at (2,-1.5){$D_k$};
    \node[name=Dk1] at (5.5,-1.5){$D_{k+1}$};
    \node[name=Yk1] at (5.5,1.5){$Y_{k+1}$};
    \node[name=A] at (-1,0){$A$};
    % \node[name=AY] at (0,1.5){$A_Y$};
    % \node[name=AD] at (0,-1.5){$A_D$};
    \node[name=L0] at (-3,0){$L_0$};
    \node[name=L0prime] at (1,0){$L_1$};
\end{scope}
\begin{scope}[>={Stealth[black]},
              every node/.style={fill=white,circle},
              every edge/.style={draw=black,very thick}]
%  	\draw[->, line width=1mm] (A) -- (L1);
%      \draw[->, line width=1mm] (A) -- (AD);
    %\path [->] (AY) edge (Yk);
    % \path [->] (AY) edge (L0prime);
    \path [->] (A) edge (L0prime);
    \path [->] (A) edge (Dk);
    \path [->] (A) edge (Dk1);
    \path [->] (L0prime) edge (Yk);
    \path [->] (L0prime) edge (Yk1);
    \path [->] (L0prime) edge (Dk);
    \path [->] (L0prime) edge (Dk1);
    % \path [->] (A) edge (Dk);
    % \path [->] (A) edge[bend right=15] (Dk1);
    % \path [->] (AY) edge[bend left=15] (Yk1);
    \path [->] (Dk) edge (Yk);
    \path[->] (Yk) edge (Yk1);
    \path[->] (Yk) edge (Dk1);
    \path[->] (Dk1) edge (Yk1);
    \path[->] (Dk) edge (Dk1);
    % \path [->] (L0) edge (A);
    \path [->] (L0) edge[bend right] (Dk);
    \path [->] (L0) edge[bend left] (Yk);
    \path [->] (L0) edge[bend right=35] (Dk1);
    \path [->] (L0) edge[bend left=35] (Yk1);
\end{scope}
\end{tikzpicture}
    }
    }

\caption{(A) Causal graph illustrating a hypothetical four armed trial where $A_Y$ (antihypertensive agents) and $A_D$ (aspirin) are assigned freely. (B)  Shows the observed two armed trial where $A_Y$ and $A_D$ are assigned jointly ($A_Y\equiv A_D\equiv A$).  Censoring nodes $C_k$ are not shown because they are not connected to any other nodes under the chosen data generating model.}
\label{fig:DAG_motivating_example}
\end{figure}

Fig.~\ref{fig:DAG_motivating_example} encodes the assumption that only the $A_Y$ component (antihypertensive treatment) affects $L_1$ and does not directly affect death while the $A_D$ component (aspirin) acts directly on survival  and has no effect on the recurrence of syncope except through pathways intersected by survival  ($D_k$).  The parameters of the data generating model were chosen such that antihypertensive treatment ($A_Y=1$) increases the risk of death through the pathway $A_Y\rightarrow L_1\rightarrow Y_k\rightarrow D_{k+1}$, i.e. by lowering blood pressure, which in turn may lead to syncope and subsequent injurious falls.
This is seen in Fig.~\ref{fig:sim_plots_motivating_example} (A), as individuals who received antihypertensives ($A_Y=1$), shown by the black and red curves, experience a larger number of syncope episodes. Next, treatment with aspirin decreases the risk of  death through cardiovascular protection via the pathway $A_D\rightarrow D_{k+1}$. As illustrated by the crossing of the black and blue curves in Fig.~\ref{fig:sim_plots_motivating_example} (B),  the decreased risk of death due  to aspirin through pathway $A_D\rightarrow D_{k+1}$ is compensated by the increased risk of death under antihypertensive treatment through $A_Y\rightarrow L_1\rightarrow Y_k\rightarrow D_{k+1}$.  Therefore, discontinuation of antihypertensives only, i.e. $(A_Y=0,A_D=1)$, gives the highest survival in this example. This illustrates the role and interpretation of the separable effect; even though a trial investigator only observes individuals in treatment levels $A\in\{0,1\}$, the separable effects allows us to make inference under the hypothetical decomposed intervention $(A_Y=0,A_D=1)$, which is not possible using conventional estimands such as the total effect or controlled direct effect.

\begin{figure}
    \centering
    
\subfloat[]{
    \resizebox{0.5\columnwidth}{!}{
        \includegraphics{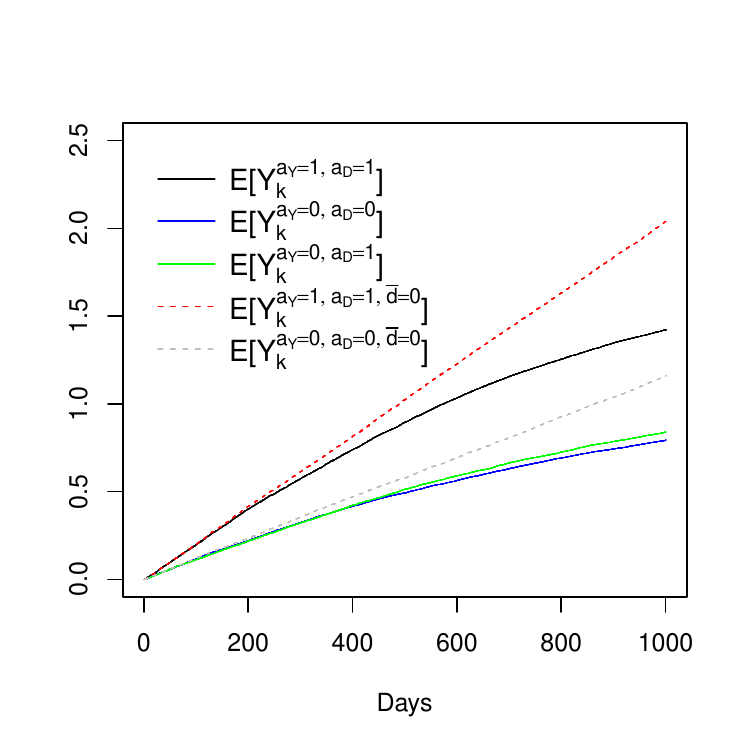}
    }
} 
\subfloat[]{
    \resizebox{0.5\columnwidth}{!}{
        \includegraphics{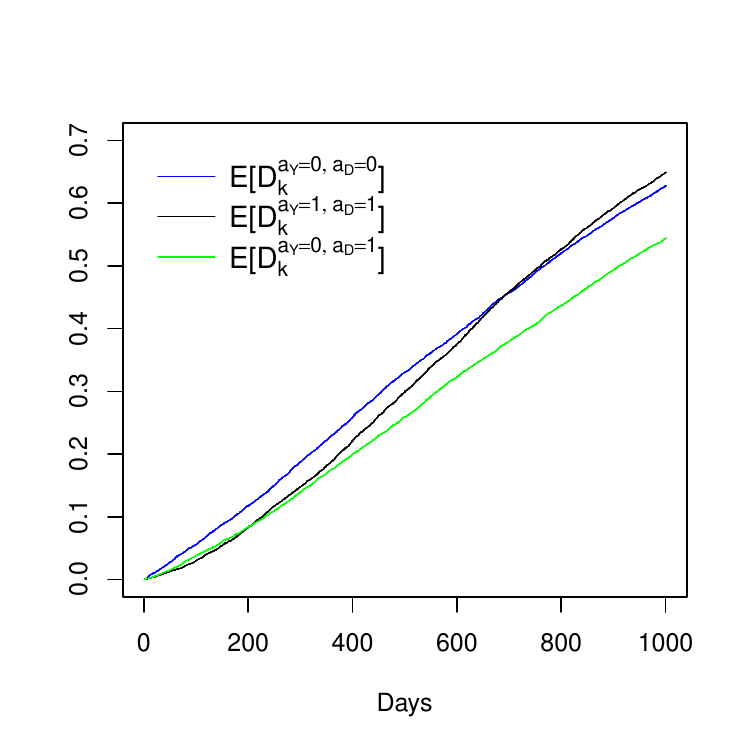}
    }
} 
\\
\subfloat[]{
    \resizebox{0.5\columnwidth}{!}{
        \includegraphics{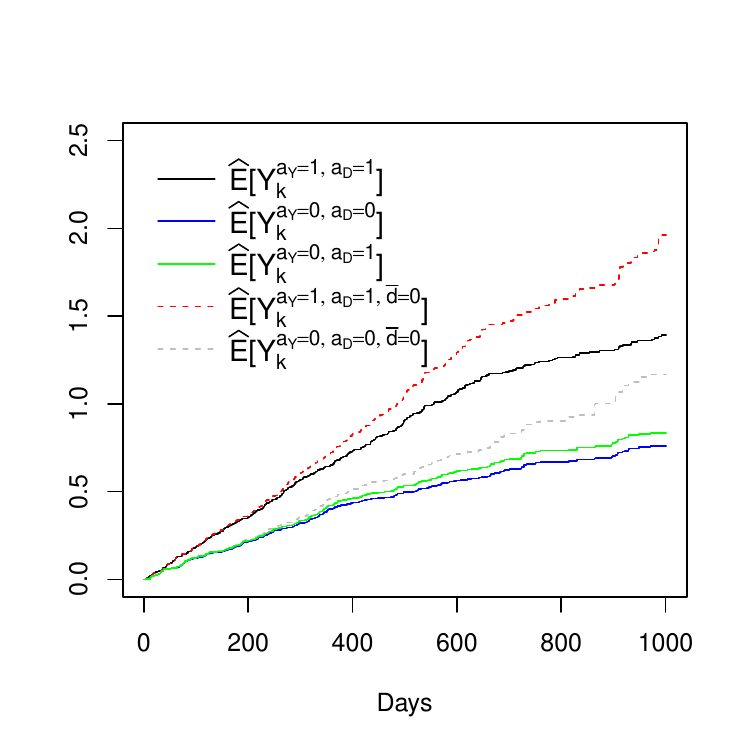}
    }
} 
\subfloat[]{
    \resizebox{0.5\columnwidth}{!}{
        \includegraphics{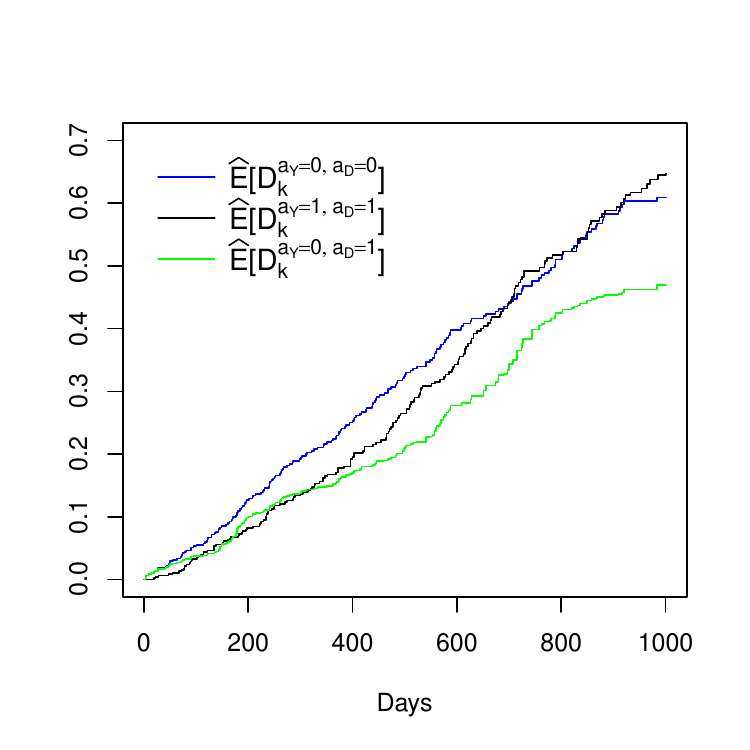}
    }
} 
    \caption{Different effects of treatment (dis)continuation of antihypertensives ($A_Y$) and aspirin ($A_D$) on recurrent syncope and survival are shown in (A) and (B), under a hypothetical data generating model. The superscript $\overline{c}=0$, denoting an intervention to prevent loss to follow-up, has been suppressed in plot legends to reduce clutter. Estimates using the risk set estimator in (\ref{eq: weighted risk set estimator}) %estimators from Sec.~\ref{sec:estimation_maintext}
    with a sample size of 500 individuals in each treatment arm are shown in (C) and (D). }
    \label{fig:sim_plots_motivating_example}
\end{figure}

By transforming the graphs in Fig.~\ref{fig:DAG_motivating_example} to single world intervention graphs~\citep{richardson_single_2013} corresponding to Figures~\ref{fig:total_effect}-\ref{fig:separable_effect}, it is straightforward to verify that the exchangeability (\ref{eq:exchangeability_total_i})-(\ref{eq:exchangeability_total_ii}), (\ref{eq:exchangeability_direct_i})-(\ref{eq:exchangeability_direct_ii}) and (\ref{eq:exchangeability_separable_i})-(\ref{eq:exchangeability_separable_ii}) and dismissible component conditions (\ref{eq:dismissible_component_i})-(\ref{eq:dismissible_component_iv}) are satisfied by the causal model. Because all positivity and consistency conditions also hold by construction in the data generating model, the total, controlled direct and separable effects are identified by the respective functionals given in Secs.~\ref{sec:identification_total_effect}-\ref{subsection: separable effects identification}. Furthermore, we can see from Fig.~\ref{fig:DAG_motivating_example} that strong $A_Y$ partial isolation (\ref{eq:isolation}) is violated by the path $A_Y\rightarrow Y_k \rightarrow D_{k+1}$, and thus the effect of antihypertensives  ($A_Y$) on recurrent syncope ($Y_k$) cannot be interpreted as a direct effect outside of pathways intersected by survival ($D_k$).

\begin{table}[h!]
    \centering
    \caption{The left column includes the weights, the middle column includes the hazards that define (\ref{eq: stochastic exponential}), and the right column includes the parametric hazard models that were used in the data analysis.  %As the additive hazard coefficients are allowed to vary with time, the 
    }
    \begin{tabular}{ |p{ 2 cm }|p{ 1.3 cm }|p{ 8 cm }|}
         \hline %& & \\[\dimexpr-\normalbaselineskip+1em]
          \textbf{Weight}     &  \textbf{Hazards} &  \textbf{Hazard models fitted}\\
         \hline
            \multirow{2}{0.2cm}{$ \mathcal W_{C,t}^i$}
           & $ \alpha_t^idt$ &  $\beta_t^0 + \beta_t^A A + \beta_t^{L_0} L_0 + \beta_t^{L_1} L_1 + \beta_t^{Y} Y_{t-} $    \\ \cline{2-3}
            & $\alpha_t^{*,i} dt$ &   $\beta_t^0 + \beta_t^A A$   \\
           \hline
            \multirow{2}{0.2cm}{$ \mathcal W_{D,t}^i$}
           & $ \alpha_t^idt$ &  $\beta_t^0 + \beta_t^A A + \beta_t^{L_0} L_0 + \beta_t^{L_1} L_1 + \beta_t^{Y} Y_{t-}$   \\ \cline{2-3}
            & $\alpha_t^{*,i} dt$ &   $\beta_t^0 + \beta_t^A A$   \\
            \cline{1-3}
            \multirow{2}{0.2cm}{$ \mathcal W_{D,t}^i(a_Y,a_D)$}
           & $ \alpha_t^idt$ &  $\beta_t^0 + \beta_t^A I(A=a_Y) + \beta_t^{L_0} L_0 + \beta_t^{L_1} L_1 + \beta_t^{Y} Y_{t-}$    \\ \cline{2-3}
            & $\alpha_t^{*,i} dt$ &   $\beta_t^0 + \beta_t^A I(A=a_D) + \beta_t^{L_0} L_0 + \beta_t^{L_1} L_1 + \beta_t^{Y} Y_{t-}$  \\
         \hline
\end{tabular}
    
    \label{tab:sim_assumed models}
\end{table}

\subsection{Estimates}
Figures~\ref{fig:sim_plots_motivating_example}~(c) and (d) present estimates of $E[Y_k^{a,\overline{c}=0}]$ for $a=a_Y=a_D\in\{0,1\}$ and $E[Y_k^{a_Y,a_D,\overline{c}=0}]$ for $a_Y\neq a_D$ using the estimators described in Sec.~\ref{sec:estimation_maintext} for 500 simulated individuals in each of the treatment groups $A=0$ and $A=1$. It follows from the data generating model defined in the beginning of Appendix  \ref{sec:sim_examples_revisited} and Table \ref{tab:sim_assumed models} that the assumed hazard models are correctly specified. In principle, we could allow for a more involved data generating model with time-varying coefficients. The assumed hazard models would still provide consistent estimates if the additive structure is correctly specified. Thus, an investigator can adapt the estimators in the supplementary R material to other recurrent event problems.

\section{Isolation conditions\label{sec:app_isolation}}
Following~\citet{stensrud_generalized_2021}, we define the $A_Y$ separable effect on $Y_k$ as
\begin{align}
    E[Y_k^{a_Y=1,a_D}]~\textrm{vs}~E[Y_k^{a_Y=0,a_D}] ~. \label{eq:app_AY_effect}
\end{align}
Likewise, we define the $A_D$ separable effect on $Y_k$ as 
\begin{align}
    E[Y_k^{a_Y,a_D=1}]~\textrm{vs}~E[Y_k^{a_Y,a_D=0}] ~. \label{eq:app_AD_effect}
\end{align}

Next, following~\citet{stensrud_generalized_2021}, we define two isolation conditions:
\begin{definition}[Strong $A_Y$ partial isolation]
A treatment decomposition satisfies strong $A_Y$ partial isolation if
\begin{align}
    \textrm{There are no causal paths from $A_Y$ to $D_k$ for all $k\in\{0,\dots,K+1\}$ ~.} \label{eq:app_a_Y_isolation}
\end{align}
\end{definition}

\begin{definition}[$A_D$ partial isolation]
A treatment decomposition satisfies $A_D$ partial isolation if
\begin{align}
    &\textrm{The causal paths from $A_D$ to $Y_{k+1}$, $k=0,\dots,K$ are directed}\notag\\
    &\textrm{paths intersected by $D_{j+1}$, $j\in\{0,\dots,K\}$ ~.} \label{eq:app_a_D_isolation}
\end{align}
\end{definition}

Under strong $A_Y$ partial isolation, the $A_Y$ separable effects \emph{only} capture direct effects of $A_Y$ on $Y_k$, i.e. only pathways from $A_Y$ to $Y_k$ \emph{not} intersected by $D$. Under $A_D$ partial isolation, the $A_D$ separable effects only capture indirect effects of $A_D$ on $Y_k$, that is only pathways from $A_D$ to $Y_k$ that are intersected by $D$.

If a treatment decomposition satisfies both (\ref{eq:app_a_Y_isolation}) and~\eqref{eq:app_a_D_isolation}, it is said to satisfy full isolation. Under full isolation, (\ref{eq:app_AY_effect})-(\ref{eq:app_AD_effect}) are the separable direct and indirect effects on $Y_k$ respectively. In this case, (\ref{eq:app_AY_effect}) captures \emph{all} pathways from $A$ to $Y_k$ not intersected by $D$, and (\ref{eq:app_AD_effect}) captures \emph{all} pathways from $A$ to $Y_k$ intersected by $D$.

\section{Proof of identification results \label{sec:proof_identification}}

\subsection{Total effect\label{sec:app_total}}
Assume the following identification conditions hold for $k\in\{0,\dots,K\}$.

\textbf{Exchangeability}
\begin{align*}
    \overline{Y}_{K+1}^{a,\overline{c}=0}   &\independent A | L_0    ~,\\
    \underline{Y}_{k+1}^{a,\overline{c}=0} &\independent C_{k+1}^{a,\overline{c}=0} | \overline{L}_k^{a,\overline{c}=0}, \overline{Y}_k^{a,\overline{c}=0}, \overline{D}_k^{a,\overline{c}=0},\overline{C}_k^{a,\overline{c}=0}, A ~.  
\end{align*}

\textbf{Positivity}
\begin{align*}
&P(L_0=l_0) > 0 \implies P(A=a\mid L_0=l_0) >0 ~,  \\
    &f_{A,\overline{L}_k,\overline{D}_k,\overline{C}_k,\overline{Y}_k}(a,\overline{l}_k,0,0,\overline{y}_k)>0  \notag\\
    &\quad \implies P(C_{k+1}=0\mid A=a, \overline{L}_k=\overline{l}_k,\overline{D}_k=0,\overline{C}_k=0,\overline{Y}_k=\overline{y}_k) > 0 ~.  
\end{align*}

\textbf{Consistency}
\begin{align*}
    &\text{If $A=a$ and $\overline{C}_{k+1}=0$,} \notag\\
    &\text{then $\overline{L}_{k+1}=\overline{L}_{k+1}^{a,\overline{c}=0},\overline{D}_{k+1}=\overline{D}_{k+1}^{a,\overline{c}=0}, \overline{Y}_{k+1}=\overline{Y}_{k+1}^{a,\overline{c}=0}$}, \overline{C}_{k+1}=\overline{C}_{k+1}^{a,\overline{c}=0}~. 
\end{align*}

Next, we use the identification conditions and the law of total probability (LOTP) to add variables sequentially to the conditioning set in temporal order. We have that $P(\Delta Y_k^{a,\overline{c}=0}=\Delta y_k)$ is given by
\begin{align*}
    & \sum_{l_0} P(\Delta Y_k^{a,\overline{c}=0}=\Delta y_k\mid L_0=l_0)P(L_0=l_0) \\
    &\stackrel{ \overline{Y}_{K+1}^{a,\overline{c}=0}\independent A\mid L_0}{=} \sum_{l_0} P(\Delta Y_k^{a,\overline{c}=0}=\Delta y_k\mid A=a,L_0=l_0)P(L_0=l_0) \\
    &\stackrel{ \underline{Y}_0^{a,\overline{c}=0} \independent C_0^a\mid A,L_0 }{=} \sum_{l_0} P(\Delta Y_k^{a,\overline{c}=0}=\Delta y_k\mid C_0^a=0, A=a,L_0=l_0)P(L_0=l_0) \\
    &\stackrel{\text{LOTP}}{=} \sum_{d_0}\sum_{l_0} P(\Delta Y_k^{a,\overline{c}=0}=\Delta y_k\mid D_0^{a,c_0=0}=0,  C_0^a=0, A=a,L_0=l_0) \\
    &\qquad \times P( D_0^{a,c_0=0}=0 \mid C_0^a=0, A=a, L_0=l_0)P(L_0=l_0) \\
    &\stackrel{\text{LOTP}}{=} \sum_{\Delta y_0} \sum_{d_0}\sum_{l_0} P(\Delta Y_k^{a,c_0=0}=\Delta y_k\mid \Delta Y_0^{a,\overline{c}=0}=\Delta y_0, D_0^{a,c_0=0}=0,  C_0^a=0, A=a,L_0=l_0) \\
    &\qquad P(\Delta Y_0^{a,c_0=0}=\Delta y_0\mid  D_0^{a,c_0=0}=0,  C_0^a=0, A=a,L_0=l_0)\\
    &\qquad \times P( D_0^{a,c_0=0}=0 \mid C_0^a=0, A=a, L_0=l_0)P(L_0=l_0) ~. \\
\end{align*}
The conditional independence relation in the third line follows because $C_0\equiv 0$ deterministically. Iterating this procedures for time indices $k^\prime\in\{1,\dots, k\}$ gives
\begin{align*}
    &P(\Delta Y_k^{a,\overline{c}=0}=\Delta y_k) \\
    &=\sum_{\Delta\overline{y}_{k-1}}\sum_{\overline{d}_k}\sum_{\overline{l}_{k-1}}\prod_{j=0}^{k} \\
    &\qquad P(\Delta Y_j^{a,\overline{c}=0}=\Delta y_j\mid \overline{D}_{j}^{a,\overline{c}=0}=\overline{d}_j, \overline{C}_j^{a,\overline{c}=0}=0,\overline{L}_{j-1}^{a,\overline{c}=0}=\overline{l}_{j-1},\Delta \overline{Y}_{j-1}^{a,\overline{c}=0}=\Delta \overline{y}_{j-1},A=a) \\
    &\qquad \times P(D_{j}^{a,\overline{c}=0}=d_{j}\mid \overline{C}_{j}^{a,\overline{c}=0}=0, \overline{L}_{j-1}^{a,\overline{c}=0}=\overline{l}_{j-1}, \Delta\overline{Y}_{j-1}^{a,\overline{c}=0}=\Delta\overline{y}_{j-1}, \overline{D}_{j-1}^{a,\overline{c}=0}=\overline{d}_{j-1}, A=a ) \\
    &\qquad \times P(L_{j-1}^{a,\overline{c}=0}=l_{j-1}\mid \Delta\overline{Y}_{j-1}^{a,\overline{c}=0}=\Delta\overline{y}_{j-1}, \overline{D}_{j-1}^{a,\overline{c}=0}=\overline{d}_{j-1},\overline{C}_{j-1}^{a,\overline{c}=0}=0,\overline{L}_{j-2}^{a,\overline{c}=0}=\overline{l}_{j-2},A=a) ~.
\end{align*}
Positivity ensures that the conditioning sets on RHS have a non-zero probability. During the iterative procedure, we have use exchangeability with respect to censoring (\ref{eq:exchangeability_total_ii}) to add the censoring indicator to the conditioning set.  Finally, by consistency we have that
\begin{align*}
    P(&\Delta Y_k^{a,\overline{c}=0}=\Delta y_k) \\
    &=\sum_{\Delta\overline{y}_{k-1}}\sum_{\overline{d}_k}\sum_{\overline{l}_{k-1}}\prod_{j=0}^{k} \\
    &\qquad P(\Delta Y_j=\Delta y_j\mid \overline{D}_{j}=\overline{d}_j, \overline{C}_j=0,\overline{L}_{j-1}=\overline{l}_{j-1},\Delta \overline{Y}_{j-1}=\Delta \overline{y}_{j-1},A=a) \\
    &\qquad \times P(D_{j}=d_{j}\mid \overline{C}_{j}=0, \overline{L}_{j-1}=\overline{l}_{j-1}, \Delta\overline{Y}_{j-1}=\Delta\overline{y}_{j-1}, \overline{D}_{j-1}=\overline{d}_{j-1}, A=a ) \\
    &\qquad \times P(L_{j-1}=l_{j-1}\mid \Delta\overline{Y}_{j-1}=\Delta\overline{y}_{j-1}, \overline{D}_{j-1}=\overline{d}_{j-1},\overline{C}_{j-1}=0,\overline{L}_{j-2}=\overline{l}_{j-2},A=a) ~.
\end{align*}

Next, we derive (\ref{eq:total_effect_IPCW_discrete}).  By the presence of the indicator functions in (\ref{eq:total_effect_IPCW_discrete}) and by consistency (\ref{eq:consistency_total_effect}) we have that
\begin{align*}
    &E\left[ \frac{I(A=a)I(C_i=0)\Delta Y_i}{\pi_A(A)\prod_{j=0}^{i} \pi_{C_j}(C_j) }\right] \\
    =&E\left[ \frac{I(A=a)I(C_i=0)\Delta Y_i^{a,\overline{c}=0}}{\pi_A(A)\prod_{j=0}^{i} \pi_{C_j}(C_j) }\right]\\
    =&E\left[ \frac{I(A=a)I(C_{i}^{a,\overline{c}=0}=0)\Delta Y_i^{a,\overline{c}=0}}{P(A=a\mid L_0)\prod_{j=0}^i P(C_j^{a,\overline{c}=0}=0\mid \overline{L}_{j-1}^{a,\overline{c}=0},\overline{Y}_{j-1}^{a,\overline{c}=0},\overline{D}_{j-1}^{a,\overline{c}=0},\overline{C}_{j-1}^{a,\overline{c}=0},A) } \right] ~.
\end{align*}
Next, using the law of total expectation, the above is equal to
\begin{align*}
    &E\bigg[E\bigg[ \frac{I(A=a)I(C_{i}^{a,\overline{c}=0}=0)\Delta Y_i^{a,\overline{c}=0}}{P(A=a\mid L_0)\prod_{j=0}^i P(C_j^{a,\overline{c}=0}=0\mid \overline{L}_{j-1}^{a,\overline{c}=0},\overline{Y}_{j-1}^{a,\overline{c}=0},\overline{D}_{j-1}^{a,\overline{c}=0},\overline{C}_{j-1}^{a,\overline{c}=0},A) } \\
    & \qquad\qquad\qquad ~\bigg|~ \overline{Y}_{i}^{a,\overline{c}=0},\overline{L}_{i-1}^{a,\overline{c}=0},\overline{D}_{i-1}^{a,\overline{c}=0},\overline{C}_{i-1}^{a,\overline{c}=0},A \bigg]\bigg] \\
    = &E\bigg[\frac{I(A=a)I(C_{i-1}^{a,\overline{c}=0}=0)\Delta Y_i^{a,\overline{c}=0}}{P(A=a\mid L_0)\prod_{j=0}^{i-1} P(C_j^{a,\overline{c}=0}=0\mid \overline{L}_{j-1}^{a,\overline{c}=0},\overline{Y}_{j-1}^{a,\overline{c}=0},\overline{D}_{j-1}^{a,\overline{c}=0},\overline{C}_{j-1}^{a,\overline{c}=0},A)}\\
    &\qquad \times \frac{E[I(C_i^{a,\overline{c}=0}=0)\mid \overline{Y}_{i}^{a,\overline{c}=0},\overline{L}_{i-1}^{a,\overline{c}=0},\overline{D}_{i-1}^{a,\overline{c}=0},\overline{C}_{i-1}^{a,\overline{c}=0},A]}{P(C_i^{a,\overline{c}=0}=0\mid \overline{L}_{i-1}^{a,\overline{c}=0}, \overline{Y}_{i-1}^{a,\overline{c}=0},\overline{D}_{i-1}^{a,\overline{c}=0},\overline{C}_{i-1}^{a,\overline{c}=0},A)}\bigg]\bigg] ~.
\end{align*}
The numerator and denominator of the fraction in the final line differ only by the time index of $\overline{Y}^{a,\overline{c}=0}$ in the conditioning set. Using the fact that
\begin{align*}
    Y_i^{a,\overline{c}=0}\independent I(C_i^{a,\overline{c}=0}=0)\mid Y_{i-1}^{a,\overline{c}=0}, L_{i-1}^{a,\overline{c}=0},D_{i-1}^{a,\overline{c}=0},C_{i-1}^{a,\overline{c}=0}=0,A=a ~,
\end{align*}
(which follows from (\ref{eq:exchangeability_total_ii})) we have that the fraction is equal to 1, and thus by consistency (\ref{eq:consistency_total_effect}),
\begin{align*}
    E\left[ \frac{I(A=a)I(C_i=0)\Delta Y_i^{a,\overline{c}=0}}{\pi_A(A)\prod_{j=0}^{i} \pi_{C_j}(C_j) }\right]=E\left[ \frac{I(A=a)I(C_{i-1}=0)\Delta Y_i^{a,\overline{c}=0}}{\pi_A(A)\prod_{j=0}^{i-1} \pi_{C_j}(C_j) }\right] ~.
\end{align*}
Iterating this procedure from $j=i-1$ to $j=0$ gives
\begin{align*}
    E\left[ \frac{I(A=a)I(C_i=0)\Delta Y_i^{a,\overline{c}=0}}{\pi_A(A)\prod_{j=0}^{i} \pi_{C_j}(C_j) }\right]=E\left[ \frac{I(A=a)\Delta Y_i^{a,\overline{c}=0}}{\pi_A(A) }\right] ~.
\end{align*}
Using the law of total expectation again, RHS is equal to
\begin{align*}
    &E\left[E\left[ \frac{\Delta Y_i^{a,\overline{c}=0}}{P(A=a\mid L_0)}\cdot I(A=a) ~\bigg|~ \Delta Y_i^{a,\overline{c}=0}, L_0 \right]\right] \\
    =&E\left[ \frac{\Delta Y_i^{a,\overline{c}=0}}{P(A=a\mid L_0) }\cdot E\left[ I(A=a) \mid \Delta Y_i^{a,\overline{c}=0}, L_0  \right]\right] \\
    \stackrel{\Delta Y_i^{a,\overline{c}=0}\independent I(A=a)\mid L_0}{=} &E \left[ \Delta Y_i^{a,\overline{c}=0} \cdot \frac{E[I(A=a)\mid L_0]}{P(A=a\mid L_0)}\right] \\
    =&E[\Delta Y_i^{a,\overline{c}=0}] ~.
\end{align*}
Another IPW representation also exists. We have that
\begin{align*}
    &E\bigg[ \frac{I(A=a)I(C_i=0)}{\pi_A(A)\prod_{j=0}^i \pi_{C_j}(C_j)}   \bigg] \\
    = &E\bigg[E\bigg[\frac{I(A=a)I(C_{i-1}=0)}{\pi_A(A)\prod_{j=0}^{i-1} \pi_{C_j}(C_j)} \cdot \frac{I(C_i=0)}{P(C_i=0\mid \overline{L}_{i-1},\overline{Y}_{i-1},\overline{D}_{i-1},\overline{C}_{i-1},A=a)} \\
    &\qquad\qquad\qquad   ~\bigg|~ \overline{L}_{i-1},\overline{Y}_{i-1},\overline{D}_{i-1},\overline{C}_{i-1},A=a \bigg]\bigg] \\
    =&E\bigg[\frac{I(A=a)I(C_{i-1}=0)}{\pi_A(A)\prod_{j=0}^{i-1} \pi_{C_j}(C_j)} \cdot\frac{E[I(C_{i}=0)\mid \overline{L}_{i-1},\overline{Y}_{i-1},\overline{D}_{i-1},\overline{C}_{i-1},A=a]}{P(C_i=0\mid \overline{L}_{i-1},\overline{Y}_{i-1},\overline{D}_{i-1},\overline{C}_{i-1},A=a)} \bigg]\\
    =&E\bigg[ \frac{I(A=a)I(C_{i-1}=0)}{\pi_A(A)\prod_{j=0}^{i-1} \pi_{C_j}(C_j)}   \bigg]~.
\end{align*}
Arguing iteratively from $j=i-1$ to $j=0$, the RHS is equal to
\begin{align*}
    E\left[ \frac{I(A=a)}{P(A=a\mid L_0)} \right] &= E\left[E\left[ \frac{I(A=a)}{P(A=a\mid L_0)} ~\bigg|~L_0 \right]\right] \\
    &=E\left[ \frac{E[I(A=a)\mid L_0]}{P(A=a\mid L_0)} \right] \\
    &=1 ~.
\end{align*}
Putting everything together, we have that
\begin{align}
    E[Y_k^{a,\overline{c}=0}] &= \sum_{i=0}^k \frac{E\bigg[ \frac{I(A=a)I(C_i=0)}{\pi_A(A)\prod_{j=0}^i \pi_{C_j}(C_j)} \cdot \Delta Y_i \bigg] }{E\bigg[ \frac{I(A=a)I(C_i=0)}{\pi_A(A)\prod_{j=0}^i \pi_{C_j}(C_j)}   \bigg] } ~. \label{eq:app_total_effect_IPW}
\end{align}
To conclude, we remark that the exchangeability conditions (\ref{eq:exchangeability_total_i})-(\ref{eq:exchangeability_total_ii}) and identification formulas (\ref{eq:g-formula_total_effect})-(\ref{eq:total_effect_IPCW_discrete}) follow directly from a general identification result, Theorem~31 of ~\citet{richardson_single_2013}, by choosing outcome $Y^\ast_{k}\equiv(Y_{k}, D_{K+1})$, intervention set $\overline{A}^\ast_k\equiv (A,C_k)$ for $k\in\{0,\dots,K+1\}$ and time-varying covariates $\overline{L}^\ast_k\equiv (\overline{L}_k, L)$ for $k\in\{0,\dots,K\}$.

\subsubsection{Limit of fine discretizations}
We begin by noting that $\Delta Y_k$ in (\ref{eq:app_total_effect_IPW}) can only be non-zero if the individual has not experienced the competing event by the beginning of time interval $k-1$. Therefore,
\begin{align*}
    E[\Delta Y_i^{a,\overline{c}=0}] &= E\left[\frac{I(A=a)I(\overline{C}_i=0)I(\overline{D}_{i-1}=0)\Delta Y_i}{\pi_A\prod_{j=0}^i \pi_{C_j}(C_j)}\right] \\
    &=P(\overline{C}_i=0,\overline{D}_{i-1}=0,A=a) \\
    &\qquad\times E\left[\frac{\Delta Y_i}{\pi_A\prod_{j=0}^i \pi_{C_k}(C_k)}~\bigg|~ A=a,C_i=0,D_{i-1}=0\right] ~,
\end{align*}
where we have used the laws of probability in the second line. Using Bayes' law sequentially, we have that
\begin{align*}
    P(\overline{C}_i=0,\overline{D}_{i-1}=0,A=a) &=\prod_{j=0}^{i-1}\bigg[P(D_j=0\mid C_j=0,D_{j-1}=0,A=a)\bigg] \\
    &\quad\times\prod_{n=0}^{i} \bigg[P(C_n=0\mid D_{n-1}=0,C_{n-1}=0,A=a) \bigg] P(A=a) ~.
\end{align*}

To proceed, we define modified intensities of the recurrent event process
\begin{align*}
    \Delta\Uplambda_i^{C} &= P(C_i=1\mid \overline{D}_{i-1},\overline{C}_{i-1},A) \\
    \Delta\Uplambda_i^{C\mid\mathcal{F}} &= P(C_i=1\mid \overline{L}_{i-1},\overline{Y}_{i-1}, \overline{D}_{i-1},\overline{C}_{i-1},A) ~.
\end{align*}
Next, let $\pi(\bullet)=P(A=\bullet)$ and consider the stabilized weights
\begin{align*}
    W_A&= \frac{\pi(A)}{\pi_A(A)} ~, \\
    W_{C,i}&= \frac{\prod_{j=0}^i [1-\Delta \Uplambda_j^C]}{\prod_{k=0}^i \left[1-\Delta \Uplambda_k^{C\mid\mathcal{F}}\right]} ~.
\end{align*}
The weight $W_{C,i}$ is a ratio of Kaplan-Meier survival terms with the respect to the censoring event. Let us also define the hazard of the competing event by
\begin{align*}
    \Delta A^D_i(a) = P(D_i=1\mid C_i=0,D_{i-1}=0,A=a) ~.
\end{align*}
Putting everything together, we have that
\begin{align}
    E[Y_k^{a,\overline{c}=0}] = \sum_{i=0}^k\prod_{j=0}^{i-1}[1-\Delta A_j^D(a)] \cdot E[W_A W_{C,i}\Delta Y_i\mid C_i=0,D_{i-1}=0,A=a] ~. \label{eq:weighted_canonical_total_effect} 
\end{align}
Expression (\ref{eq:weighted_canonical_total_effect}) enables us to establish a correspondence with estimands in the counting process literature, as discussed in Sec.~\ref{section:correspondence of identification formulas}, and also motivates estimators that we described in Sec.~\ref{sec:estimation_maintext}. The product term in (\ref{eq:weighted_canonical_total_effect}) is a survival term with respect to the competing event, and the expectation is over weighted increments of recurrent acute kidney injury. In the limit of fine discretization of time, (\ref{eq:weighted_canonical_total_effect}) converges to
\begin{align*}
  \int_0^{t_k}\prodi_{s < u}[1- dA_s^D(a)] \cdot E[W_A \mathcal{W}_{C,u-} dN_u \mid C\geq u, T^D\geq u,A=a] ~.
\end{align*}

\subsubsection{Competing event \label{sec_app_total_effect_competing}}
In order to identify $E[D_k^{a,\overline{c}=0}]$ from the observed data, we require the following two exchangeability assumptions instead of (\ref{eq:exchangeability_total_i}) and (\ref{eq:exchangeability_total_ii})
\begin{align}
    \overline{D}_{K+1}^{a,\overline{c}=0}   &\independent A | L_0         \label{eq:exchangeability_total_terminating_baseline}  ~,\\
    \underline{D}_{k+1}^{a,\overline{c}=0} &\independent C_{k+1}^{a,\overline{c}=0} | \overline{L}_k^{a,\overline{c}=0}, \overline{Y}_k^{a,\overline{c}=0}, \overline{D}_k^{a,\overline{c}=0},\overline{C}_k^{a,\overline{c}=0}, A ~. \label{eq:exchangeability_total_terminating}
\end{align}
Using analogous arguments as for the recurrent event $Y$, identification of $E[D_k^{a,\overline{c}=0}]$ is achieved under (\ref{eq:exchangeability_total_terminating_baseline})-(\ref{eq:exchangeability_total_terminating}) and (\ref{positivity_total_effect})-(\ref{eq:consistency_total_effect})  by
\begin{align}
    E[D_k^{a,\overline{c}=0}] = \sum_{i=0}^k\prod_{j=0}^{i-1}[1-\Delta A_j^D(a)] \cdot E[W_A W_{C,i}\Delta D_i\mid C_i=0,D_{i-1}=0,A=a] ~. \label{eq:survival_weighted_canonical_total_effect} 
\end{align}
Likewise, in the limit of fine discretizations of time, the cumulative incidence of the competing event is given by
\begin{align}
  \int_0^{t_k}\prodi_{s < u}[1- dA_s^D(a)] \cdot E[W_A \mathcal{W}_{C,u-} dN_u^D \mid C\geq u, T^D\geq u,A=a] ~. \label{eq:survival_total_weighted_form_continuous}
\end{align}
When treatment $A$ is randomly assigned and (\ref{eq:survival_continuous_independent_censoring}) holds with $L(t)= N^c(t)= \emptyset$ (which is the usual independent censoring condition in survival analysis without any covariates~\citep{aalen_survival_2008}), then $W_A=\mathcal{W}_{C,t}=1$ and (\ref{eq:survival_total_weighted_form_continuous}) reduces to
\begin{align*}
     \int_0^{t_k}  \prodi_{u< t} [1-dA^D_u(a)]dA^D_u(a) ~.
\end{align*}
This demonstrates sufficient conditions under which the discrete time identification formula given by Expression (29) in \citet{young_causal_2020} converges to the usual representation of the cumulative incidence function in survival analysis.

\subsection{Controlled direct effect}
The identification  conditions and identification formulas for the controlled direct effect  are a special case of the identification results for total effect, redefining the censoring indicator as $\max (C_i,D_i)$ (i.e. the first occurrence of the competing event and loss to follow-up), and re-defining the competing event as an event that almost surely does not occur. This gives us

\begin{align}
    E[\Delta Y_i^{a,\overline{c}=\overline{d}=0}] &=  
    E\bigg[ \frac{I(A=a)I(C_i=0)I(D_i=0)}{\pi_A(A) \prod_{j=0}^i \pi_{C_j}(C_j)\pi_{D_j}(D_j) } \cdot \Delta Y_i \bigg]~, \label{eq:app_direct_effect_IPCW_discrete}
\end{align}

Next, we define
\begin{align*}
    &\Delta\Uplambda_j^{D}(\bullet)= P(D_j=1\mid \overline{C}_j,\overline{D}_{j-1},A=\bullet) ~,\\
    &\Delta\Uplambda_j^{D\mid\mathcal{F}}(\bullet)= P(D_j=1\mid \overline{C}_j,\overline{L}_{j-1},\overline{Y}_{j-1},\overline{D}_{j-1},A=\bullet)
\end{align*}
to be modified intensities of the competing event process. This allows us to re-write (\ref{eq:app_direct_effect_IPCW_discrete}) as
\begin{align}
    E[Y_k^{a,\overline{c}=\overline{d}=0}] = \sum_{i=0}^k E\left[ W_A W_{C,i}W_{D,i}\Delta Y_i ~\bigg|~ C_i=0,D_{i-1}=0,A=a \right] ~,\label{eq:weighted_canonical_direct_effect}
\end{align}
where
\begin{align}
    W_{D,i} &= \frac{\prod_{j=0}^i [1-\Delta \Uplambda_j^D]}{\prod_{k=0}^i [1-\Delta \Uplambda_k^{D\mid\mathcal{F}}]} ~. \label{eq: controlled direct discrete D weight}
\end{align}

Under randomization of $A$ and under the strong independent censoring assumption (\ref{eq:strong_indep_censoring}), $W_A=W_{C,i}=W_{D,i}=1$ and thus (\ref{eq:weighted_canonical_direct_effect}) converges to
\begin{align*}
  \int_0^{t_k} E\left[ dN_u \mid  T^D \geq u, C \geq u,A=a \right] 
\end{align*}
in the limit of fine discretizations of time.

\subsection{Separable effects}
We begin by assuming the modified treatment assumption (\ref{eq:modified_treatment_assumption}) and the following identification conditions for all $a\in\{0,1\}$, $k\in\{0,\dots,K+1\}$.

\textbf{Exchangeability}
\begin{align*}
    &(\overline{Y}_{K+1}^{a,\overline{c}=0},\overline{D}_{K+1}^{a,\overline{c}=0},\overline{L}_{K+1}^{a,\overline{c}=0}) \independent A\mid L_0 ~,  \\
    &(\underline{Y}_{k+1}^{a,\overline{c}=0},\underline{D}_{k+1}^{a,\overline{c}=0},\underline{L}_{k+1}^{a,\overline{c}=0})\independent C_{k+1}^{a,\overline{c}=0} \mid \overline{Y}_k^{a,\overline{c}=0}, \overline{D}^{a,\overline{c}=0}_k,\overline{C}_k^{a,\overline{c}=0},\overline{L}_k^{a,\overline{c}=0},A ~.
\end{align*}

\textbf{Positivity}
\begin{align}
    &P(L_0=l_0) > 0 \notag\\
    &\quad\implies P(A=a\mid L_0=l_0)>0 ~,\label{eq:app_positivity_separable_i}\\
    &f_{\overline{L}_k,\overline{D}_{k+1},C_{k+1},Y_k}(\overline{l}_k,0,0,\overline{y}_k)>0 \implies\notag\\
    &\quad P(A=a\mid \overline{D}_{k+1}=0,C_{k+1}=0,\overline{Y}_{k}=\overline{y}_{k},\overline{L}_k=\overline{l}_k) >0 \label{eq:app_positivity_separable_ii} \\
     &f_{A,\overline{L}_k,\overline{D}_k,\overline{C}_k,\overline{Y}_k}(a,\overline{l}_k,0,0,\overline{y}_k)>0 \notag\\
     &\quad \implies P(C_{k+1}=0\mid \overline{L}_k=\overline{l}_k,\overline{D}_k=0,\overline{C}_k=0,\overline{Y}_k=\overline{y}_k) ~. \label{eq:app_positivity_separable_iii}
\end{align}

\textbf{Consistency}
\begin{align}
    &\text{If $A=a$ and $\overline{C}_{k+1}=0$,} \notag\\
    &\text{then $\overline{L}_{k+1}=\overline{L}_{k+1}^{a,\overline{c}=0},\overline{D}_{k+1}=\overline{D}_{k+1}^{a,\overline{c}=0}, \overline{Y}_{k+1}=\overline{Y}_{k+1}^{a,\overline{c}=0}$}~.\label{eq:app_consistency_separable}
\end{align}

Consider a four armed trial where the $A_Y$ and $A_D$ are randomly assigned, independently of each other. We require the following dismissible component conditions to hold in the four armed trial
\begin{align*}
    &Y_{k+1}^{\overline{c}=0} \independent A_D \mid A_Y, \overline{D}_{k+1}^{\overline{c}=0} , \overline{Y}_k^{\overline{c}=0} , \overline{L}_k^{\overline{c}=0}  ~,  \\
    &D_{k+1}^{\overline{c}=0}  \independent A_Y  \mid A_D , \overline{D}_k^{\overline{c}=0} ,\overline{Y}_k^{\overline{c}=0} ,\overline{L}_k^{\overline{c}=0}  ~,\\
    &L_{Y,k}^{\overline{c}=0}  \independent A_D  \mid A_Y ,\overline{Y}_k^{\overline{c}=0} ,\overline{D}_k^{\overline{c}=0} ,\overline{L}_{k-1}^{\overline{c}=0} ,L_{D,k}^{\overline{c}=0}  ~,  \\
    & L_{D,k}^{\overline{c}=0}  \independent A_Y \mid A_D , \overline{D}_k^{\overline{c}=0} , \overline{Y}_k^{\overline{c}=0} ,\overline{L}_{k-1}^{\overline{c}=0}  ~.
\end{align*}
 
To proceed, we introduce the following lemmas:
\begin{lemma}\label{lemma:dismiss_counterfactual_component}
Under a FFRCISTG model, the dismissible component conditions (\ref{eq:dismissible_component_i})-(\ref{eq:dismissible_component_iv}) imply the following equalities for $a_Y,a_D\in\{0,1\}$
\begin{align}
    &P(Y_{k+1}^{a_Y,a_D=0,\overline{c}=0}= y_{k+1}\mid \overline{Y}_k^{a_Y,a_D=0,\overline{c}=0}, \overline{D}_{k+1}^{a_Y,a_D=0,\overline{c}=0}, \overline{L}_k^{a_Y,a_D=0,\overline{c}=0} ) \notag\\
   =&P(Y_{k+1}^{a_Y,a_D=1,\overline{c}=0}= y_{k+1}\mid \overline{Y}_k^{a_Y,a_D=1,\overline{c}=0}, \overline{D}_{k+1}^{a_Y,a_D=1,\overline{c}=0}, \overline{L}_k^{a_Y,a_D=1,\overline{c}=0} )
     ~,\label{eq:lemma_dismiss_counterfactual_i}\\ 
     \notag\\
    &P(D_{k+1}^{a_Y,a_D=0,\overline{c}=0}=d_{k=1} \mid \overline{Y}_k^{a_Y,a_D=0,\overline{c}=0},\overline{D}_k^{a_Y,a_D=0,\overline{c}=0},\overline{L}_k^{a_Y,a_D=0,\overline{c}=0} ) \notag\\
    =&P(D_{k+1}^{a_Y,a_D=1,\overline{c}=0}=d_{k=1} \mid \overline{Y}_k^{a_Y,a_D=1,\overline{c}=0},\overline{D}_k^{a_Y,a_D=1,\overline{c}=0},\overline{L}_k^{a_Y,a_D=1,\overline{c}=0} ) ~,\label{eq:lemma_dismiss_counterfactual_ii}\\\notag \\
    &P(L_{Y,k+1}^{a_Y,a_D=0,\overline{c}=0}=l_{Y,k+1}\mid \overline{Y}_{k+1}^{a_Y,a_D=0,\overline{c}=0},\overline{D}_{k+1}^{a_Y,a_D=0,\overline{c}=0},\overline{L}_k^{a_Y,a_D=0,\overline{c}=0},L_{D,k+1}^{a_Y,a_D=0,\overline{c}=0}) \notag\\
    =&P(L_{Y,k+1}^{a_Y,a_D=1,\overline{c}=0}=l_{Y,k+1}\mid \overline{Y}_{k+1}^{a_Y,a_D=1,\overline{c}=0},\overline{D}_{k+1}^{a_Y,a_D=1,\overline{c}=0},\overline{L}_k^{a_Y,a_D=1,\overline{c}=0},L_{D,k+1}^{a_Y,a_D=1,\overline{c}=0}) ~,\label{eq:lemma_dismiss_counterfactual_iii}\\ \notag\\
    &P(L_{D,k+1}^{a_Y,a_D=0,\overline{c}=0}=l_{D,k+1}\mid \overline{Y}_{k+1}^{a_Y,a_D=0,\overline{c}=0},\overline{D}_{k+1}^{a_Y,a_D=0,\overline{c}=0},\overline{L}_k^{a_Y,a_D=0,\overline{c}=0}) \notag\\
    =&P(L_{D,k+1}^{a_Y,a_D=1,\overline{c}=0}=l_{D,k+1}\mid \overline{Y}_{k+1}^{a_Y,a_D=1,\overline{c}=0},\overline{D}_{k+1}^{a_Y,a_D=1,\overline{c}=0},\overline{L}_k^{a_Y,a_D=1,\overline{c}=0}) ~,\label{eq:lemma_dismiss_counterfactual_iv}
\end{align}
\end{lemma}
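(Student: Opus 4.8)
This lemma lives entirely inside the FFRCISTG model for the hypothetical four-arm trial of Fig.~\ref{fig:separable_effect}, in which $(A_Y,A_D)$ is randomized given $L_0$ and the censoring nodes are sequentially randomized given the past; the modified treatment assumption (Eq.~\ref{eq:modified_treatment_assumption}) is not used here. The plan is to prove all four equalities by the same three-move argument, which I spell out for Eq.~\ref{eq:lemma_dismiss_counterfactual_i} and then adapt.

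First, \emph{introduce the baseline indicators}: every conditioning set in Eq.~\ref{eq:lemma_dismiss_counterfactual_i} contains $\overline{L}_k^{a_Y,a_D,\overline{c}=0}$, hence $L_0$, and in an FFRCISTG model with baseline-randomized $(A_Y,A_D)$ the whole vector of counterfactuals under a fixed regime $(a_Y,a_D,\overline{c}=0)$ is jointly independent of $(A_Y,A_D)$ given $L_0$ --- the randomization consequence already used via Theorem~31 of~\citet{richardson_single_2013} for the total effect, now with $A_Y,A_D$ and the censoring nodes as the intervention set. So one may append $A_Y=a_Y$ and $A_D=a_D$ to the conditioning set of the left-hand side of Eq.~\ref{eq:lemma_dismiss_counterfactual_i} without changing its value. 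Second, \emph{apply consistency}: on $\{A_Y=a_Y,A_D=a_D,\overline{C}_{K+1}=0\}$ one has $V^{a_Y,a_D,\overline{c}=0}=V^{\overline{c}=0}$ for each variable in the display, so the left-hand side becomes
\[
P\bigl(Y_{k+1}^{\overline{c}=0}=y_{k+1}\ \big|\ \overline{Y}_k^{\overline{c}=0},\overline{D}_{k+1}^{\overline{c}=0},\overline{L}_k^{\overline{c}=0},A_Y=a_Y,A_D=a_D\bigr),
\]
the censoring nodes being absorbed by conditioning throughout on $\overline{C}=0$ exactly as in Appendix~\ref{sec:app_total}. Third, \emph{apply the matching dismissible component condition}: Eq.~\ref{eq:dismissible_component_i} gives $Y_{k+1}^{\overline{c}=0}\independent A_D\mid A_Y,\overline{D}_{k+1}^{\overline{c}=0},\overline{Y}_k^{\overline{c}=0},\overline{L}_k^{\overline{c}=0}$, so $A_D=a_D$ may be dropped, leaving an expression free of the extraneous treatment component. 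Running the same chain at the other level of that component reaches the identical final expression, so the two sides of Eq.~\ref{eq:lemma_dismiss_counterfactual_i} agree.

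The other three equalities follow by the same three moves with routine substitutions: Eq.~\ref{eq:lemma_dismiss_counterfactual_ii} takes $D_{k+1}$, conditioning set $(\overline{Y}_k,\overline{D}_k,\overline{L}_k)$, and Eq.~\ref{eq:dismissible_component_ii}; Eq.~\ref{eq:lemma_dismiss_counterfactual_iii} takes $L_{Y,k+1}$, conditioning set $(\overline{Y}_{k+1},\overline{D}_{k+1},\overline{L}_k,L_{D,k+1})$, and Eq.~\ref{eq:dismissible_component_iii} read at index $k+1$; Eq.~\ref{eq:lemma_dismiss_counterfactual_iv} takes $L_{D,k+1}$, conditioning set $(\overline{Y}_{k+1},\overline{D}_{k+1},\overline{L}_k)$, and Eq.~\ref{eq:dismissible_component_iv} read at index $k+1$. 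The point to watch is that, after the consistency step, the conditioning set must coincide \emph{verbatim} with that of the dismissible component condition invoked; this holds once one fixes the within-interval topological order $D_k,Y_k,L_{D,k},L_{Y,k}$, so that, for instance, $(\overline{L}_k,L_{D,k+1})$ is exactly the ``past'' of $L_{Y,k+1}$.

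I expect the only real obstacle to be the first move --- justifying that conditioning on $(A_Y,A_D)=(a_Y,a_D)$ leaves the \emph{joint} law of the target counterfactual together with all its conditioning counterfactuals unchanged given $L_0$. This is the standard randomization property of the FFRCISTG model (counterfactuals under any fixed regime are independent of the randomized baseline assignment given $L_0$, with the censoring interventions absorbed as in the total-effect proof); granted it, consistency and Eqs.~\ref{eq:dismissible_component_i}--\ref{eq:dismissible_component_iv} finish everything mechanically. A secondary, purely bookkeeping, nuisance is tracking index shifts and the relative order of $L_{D,k}$ and $L_{Y,k}$ so the dismissible component conditions apply without restatement.
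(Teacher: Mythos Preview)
Your proposal is correct and follows essentially the same approach as the paper's proof: add the treatment indicators $A_Y=a_Y,A_D=a_D$ to the conditioning set using randomization in the four-arm trial (an FFRCISTG consequence), apply consistency to pass from $V^{a_Y,a_D,\overline{c}=0}$ to $V^{\overline{c}=0}$, invoke the relevant dismissible component condition to drop the extraneous component, and reverse the first two moves at the other level. The paper's proof is terser---it compresses your first two moves into a single line justified by ``consistency and randomization of $A_Y$ and $A_D$ in the four armed trial''---but the logical content is identical.
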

\begin{proof}
We show the equality of Expression (\ref{eq:lemma_dismiss_counterfactual_i}), as (\ref{eq:lemma_dismiss_counterfactual_ii})-(\ref{eq:lemma_dismiss_counterfactual_iv}) follow from analogous arguments, using (\ref{eq:dismissible_component_ii})-(\ref{eq:dismissible_component_iv}) instead of (\ref{eq:dismissible_component_i}).
\begin{align*}
    &P(Y_{k+1}^{a_Y,a_D=0,\overline{c}=0}= y_{k+1}\mid \overline{Y}_k^{a_Y,a_D=0,\overline{c}=0}, \overline{D}_{k+1}^{a_Y,a_D=0,\overline{c}=0}, \overline{L}_k^{a_Y,a_D=0,\overline{c}=0} ) \\
    =&P(Y_{k+1}^{\overline{c}=0}= y_{k+1}\mid \overline{Y}_k^{\overline{c}=0}, \overline{D}_{k+1}^{\overline{c}=0}, \overline{L}_k^{\overline{c}=0}, A_Y=a_Y,A_D=0) \\
    \stackrel{\text{(\ref{eq:dismissible_component_i})}}{=}&P(Y_{k+1}^{\overline{c}=0}= y_{k+1}\mid \overline{Y}_k^{\overline{c}=0}, \overline{D}_{k+1}^{\overline{c}=0}, \overline{L}_k^{\overline{c}=0}, A_Y=a_Y,A_D=1) \\
    &P(Y_{k+1}^{a_Y,a_D=1,\overline{c}=0}= y_{k+1}\mid \overline{Y}_k^{a_Y,a_D=1,\overline{c}=0}, \overline{D}_{k+1}^{a_Y,a_D=1,\overline{c}=0}, \overline{L}_k^{a_Y,a_D=1,\overline{c}=0} ) ~.
\end{align*}
The second and fourth line hold by consistency and by randomization of $A_Y$ and $A_D$ in the four armed trial.
\end{proof}

\begin{lemma} \label{lemma:expand_conditioning_set}
 Suppose the exchangeability and positivity conditions  (\ref{eq:exchangeability_separable_i})-(\ref{eq:exchangeability_separable_ii}) and (\ref{eq:app_positivity_separable_i})-(\ref{eq:app_positivity_separable_iii}) hold. Define $\overline{A}=(A_Y,A_D)$ and $\overline{a}=(a_Y,a_D)$. We then have for all $j\in\{0,\dots,K+1\}$ that
 \begin{align}
    &P(\Delta Y_{j}^{\overline{a},\overline{c}=0}=\Delta y_{j}\mid \overline{D}_{j}^{\overline{a},\overline{c}=0},\overline{C}^{\overline{a},\overline{c}=0}_{j}, \overline{L}_{j-1}^{\overline{a},\overline{c}=0}, \overline{Y}_{j-1}^{\overline{a},\overline{c}=0}, \overline{A}) \notag\\
    &\qquad = P(\Delta Y_{j}^{\overline{a},\overline{c}=0}=\Delta y_{j}\mid \overline{D}_{j}^{\overline{a},\overline{c}=0}, \overline{L}_{j-1}^{\overline{a},\overline{c}=0}, \overline{Y}_{j-1}^{\overline{a},\overline{c}=0} ) ~,\label{eq:app_lemma1_i}\\\notag\\
    &P(D_j^{\overline{a},\overline{c}=0}=d_j\mid C_j^{\overline{a},\overline{c}=0},\overline{L}_{j-1}^{\overline{a},\overline{c}=0}, \overline{Y}_{j-1}^{\overline{a},\overline{c}=0},\overline{D}_{j-1}^{\overline{a},\overline{c}=0},\overline{A}) \notag\\
    &\qquad = P(D_j^{\overline{a},\overline{c}=0}=d_j\mid \overline{L}_{j-1}^{\overline{a},\overline{c}=0}, \overline{Y}_{j-1}^{\overline{a},\overline{c}=0},\overline{D}_{j-1}^{\overline{a},\overline{c}=0}) ~,\label{eq:app_lemma1_ii} \\\notag\\
    &P(L_{Y,j-1}^{\overline{a},\overline{c}=0}=l_{Y,j-1}\mid \overline{L}_{D,j-1}^{\overline{a},\overline{c}=0}=l_{D,j-1}, \overline{Y}_{j-1}^{\overline{a},\overline{c}=0},\overline{D}_{j-1}^{\overline{a},\overline{c}=0},\overline{C}_{j-1}^{\overline{a},\overline{c}=0},\overline{L}_{j-2}^{\overline{a},\overline{c}=0},\overline{A}) \notag\\
    &\qquad =P(L_{Y,j-1}^{\overline{a},\overline{c}=0}=l_{Y,j-1}^{\overline{a},\overline{c}=0}\mid \overline{L}_{D,j-1}^{\overline{a},\overline{c}=0}=l_{D,j-1},\overline{Y}_{j-1}^{\overline{a},\overline{c}=0},\overline{D}_{j-1}^{\overline{a},\overline{c}=0},\overline{L}_{j-2}^{\overline{a},\overline{c}=0}) ~,\label{eq:app_lemma1_iii} \\\notag\\
    &P(L_{D,j-1}^{\overline{a},\overline{c}=0}=l_{D,j-1}\mid \overline{Y}_{j-1}^{\overline{a},\overline{c}=0},\overline{D}_{j-1}^{\overline{a},\overline{c}=0},\overline{C}_{j-1}^{\overline{a},\overline{c}=0},\overline{L}_{j-2}^{\overline{a},\overline{c}=0},\overline{A}) \notag\\
    &\qquad =P(L_{D,j-1}^{\overline{a},\overline{c}=0}=l_{D,j-1}\mid \overline{Y}_{j-1}^{\overline{a},\overline{c}=0},\overline{D}_{j-1}^{\overline{a},\overline{c}=0},\overline{L}_{j-2}^{\overline{a},\overline{c}=0}) ~.\label{eq:app_lemma1_iv}
\end{align}
\end{lemma}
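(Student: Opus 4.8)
The plan is to notice that all four displayed equalities are instances of a single fact: a one-step-ahead counterfactual conditional law --- for $\Delta Y_j^{\overline{a},\overline{c}=0}$, for $D_j^{\overline{a},\overline{c}=0}$, for $L_{Y,j-1}^{\overline{a},\overline{c}=0}$, and for $L_{D,j-1}^{\overline{a},\overline{c}=0}$ respectively --- is unchanged when one removes from the conditioning set both the counterfactual censoring history $\overline{C}_{\bullet}^{\overline{a},\overline{c}=0}$ and the factual treatment $\overline{A}$. I would establish this uniformly over the four statements, in two separate steps. Removal of $\overline{C}_{\bullet}^{\overline{a},\overline{c}=0}$ is immediate: under the $\overline{c}=0$ regime the counterfactual censoring process is identically $\overline{0}$ (this is the content of the $\overline{c}=0$ intervention, and is reflected in the consistency convention \eqref{eq:app_consistency_separable}), so conditioning on it is vacuous --- in particular the censoring exchangeability \eqref{eq:exchangeability_separable_ii} is not needed for this lemma. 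The positivity assumptions \eqref{eq:app_positivity_separable_i}--\eqref{eq:app_positivity_separable_iii}, together with consistency, enter only to ensure that every conditioning event appearing on either side of \eqref{eq:app_lemma1_i}--\eqref{eq:app_lemma1_iv} has positive probability, so that all the conditional probabilities are well defined.

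For the removal of $\overline{A}$ I would work, as in the set-up immediately preceding the lemma, inside the hypothetical four-armed trial in which $(A_Y,A_D)$ is randomly assigned. There the randomization of $\overline{A}=(A_Y,A_D)$ makes it independent of the whole counterfactual trajectory $T^{\overline{a}}:=(\overline{Y}_{K+1}^{\overline{a},\overline{c}=0},\overline{D}_{K+1}^{\overline{a},\overline{c}=0},\overline{L}_{K+1}^{\overline{a},\overline{c}=0})$, which is the strong (unconditional) form that the baseline exchangeability \eqref{eq:exchangeability_separable_i} takes in that trial; for each \emph{fixed} $\overline{a}$ this trajectory is the output of a single well-defined intervention, so its joint law with the factual $\overline{A}$ exists and no cross-world quantities are involved. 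In each of \eqref{eq:app_lemma1_i}--\eqref{eq:app_lemma1_iv} the outcome variable together with all of its remaining conditioners is a deterministic function of $T^{\overline{a}}$; hence $\overline{A}$ is independent of this whole vector, and the weak-union property of conditional independence gives that the outcome variable alone is independent of $\overline{A}$ given the remaining conditioners. Unwinding the definition of conditional probability then yields \eqref{eq:app_lemma1_i}--\eqref{eq:app_lemma1_iv}. (If one wishes to weaken four-arm randomization to the conditional exchangeability of \eqref{eq:exchangeability_separable_i} given $L_0$, the same argument runs verbatim whenever the outcome-plus-conditioner vector records $L_0$, which it does for all $j$ except a handful of low-index boundary cases --- $j\in\{0,1\}$, and those in which $L_0$ is split across its $L_Y$- and $L_D$-components; those are then checked by hand using the conventions $\overline{L}_{-1}=\overline{L}_{-2}=\emptyset$, $D_0\equiv Y_0\equiv 0$, $C_0\equiv 0$, together with $\overline{A}\independent L_0$.)

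The part I expect to be most delicate is not the graphoid bookkeeping above but the reconciliation of the hypotheses with the statement: the exchangeability conditions \eqref{eq:exchangeability_separable_i}--\eqref{eq:exchangeability_separable_ii} are phrased for the single baseline treatment $A$, while the lemma concerns counterfactuals indexed by the component intervention $\overline{a}=(a_Y,a_D)$, with possibly $a_Y\neq a_D$. Making the argument airtight therefore requires passing to the four-armed trial and using consistency there, and --- when transporting the conclusion back to the realized two-armed trial, in which only the arms with $a_Y=a_D$ are observed --- the modified treatment assumption \eqref{eq:modified_treatment_assumption}. One must also keep track that $\overline{A}$ denotes $(A_Y,A_D)$ throughout and that, for each fixed $\overline{a}$, only a single-world joint distribution is ever used, so that the argument stays within the FFRCISTG framework adopted in the paper.
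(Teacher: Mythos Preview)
Your removal of $\overline{A}$ is essentially what the paper does, but your treatment of the censoring history rests on a mistaken reading of the notation. In the paper's SWIG convention, $C_{k}^{\overline{a},\overline{c}=0}$ denotes the \emph{natural} value of the censoring indicator at time $k$ under the intervention that fixes $\overline{A}=\overline{a}$ and fixes the \emph{earlier} censoring indicators to zero; it is not the intervened value, and it is not deterministically $0$. This is visible in the SWIGs (the $C_k^{a,\overline{c}=0}$ node is the random half that sits to the left of the fixed half $c_k=0$), in the consistency statement \eqref{eq:consistency_total_effect} (which equates $\overline{C}_{k+1}$ with $\overline{C}_{k+1}^{a,\overline{c}=0}$ only on the event $\overline{C}_{k+1}=0$), and most directly in the very exchangeability condition you dismiss: \eqref{eq:exchangeability_separable_ii} has $C_{k+1}^{a,\overline{c}=0}$ on one side of an independence, which would be vacuous under your reading. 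So conditioning on $\overline{C}_j^{\overline{a},\overline{c}=0}$ is \emph{not} vacuous, and \eqref{eq:exchangeability_separable_ii} is genuinely needed for the lemma.

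The paper's proof proceeds instead by starting from the right-hand side of \eqref{eq:app_lemma1_i}, writing it as a ratio of joint probabilities conditional on $L_0$, inserting $\overline{A}=\overline{a}$ using \eqref{eq:exchangeability_separable_i}, and then iteratively appending $C_0^{\overline{a}}=0$ (trivial since $C_0\equiv 0$), $D_0$, $Y_0$, and then $C_1^{\overline{a},\overline{c}=0}=0$ using \eqref{eq:exchangeability_separable_ii}, and so on through time $j$; a final application of Bayes collapses the ratio into the left-hand side. The positivity conditions \eqref{eq:app_positivity_separable_i}--\eqref{eq:app_positivity_separable_iii} are invoked exactly where you say, to guarantee the conditioning events have positive probability. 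Your argument can be repaired by replacing the ``$\overline{C}^{\overline{a},\overline{c}=0}\equiv 0$'' step with this sequential use of \eqref{eq:exchangeability_separable_ii}; the rest of your outline (weak union to drop $\overline{A}$, plus the boundary checks at $j\in\{0,1\}$) then goes through.
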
 
\begin{proof}
We show the equality for Expression (\ref{eq:app_lemma1_i}), as (\ref{eq:app_lemma1_ii})-(\ref{eq:app_lemma1_iv}) follow from analogous arguments, using (\ref{eq:dismissible_component_ii})-(\ref{eq:dismissible_component_iv}) instead of (\ref{eq:dismissible_component_i}). We have that
\begin{align*}
    &P(\Delta Y_{j}^{\overline{a},\overline{c}=0}=\Delta y_{j}\mid \overline{D}_{j}^{\overline{a},\overline{c}=0}=\overline{d}_j, \overline{L}_{j-1}^{\overline{a},\overline{c}=0}=\overline{l}_{j-1}, \overline{Y}_{j-1}^{\overline{a},\overline{c}=0}=\overline{y}_{j-1}) \\
    =&\frac{P(\Delta \overline{Y}_{j}^{\overline{a},\overline{c}=0}=\Delta \overline{y}_{j}, \overline{D}_{j}^{\overline{a},\overline{c}=0}=\overline{d}_j, \overline{L}_{j-1}^{\overline{a},\overline{c}=0}=\overline{l}_{j-1}  \mid L_0=l_0)}{P(\overline{D}_{j}^{\overline{a},\overline{c}=0}=\overline{d}_j, \overline{L}_{j-1}^{\overline{a},\overline{c}=0}=\overline{l}_{j-1},\Delta \overline{Y}_{j-1}^{\overline{a},\overline{c}=0}=\Delta \overline{y}_{j-1} \mid L_0=l_0)}\\
    =&\frac{P(\Delta \overline{Y}_{j}^{\overline{a},\overline{c}=0}=\Delta \overline{y}_{j}, \overline{D}_{j}^{\overline{a},\overline{c}=0}=\overline{d}_j, \overline{L}_{j-1}^{\overline{a},\overline{c}=0}=\overline{l}_{j-1}  \mid \overline{A}=\overline{a}, L_0=l_0)}{P(\overline{D}_{j}^{\overline{a},\overline{c}=0}=\overline{d}_j, \overline{L}_{j-1}^{\overline{a},\overline{c}=0}=\overline{l}_{j-1},\Delta \overline{Y}_{j-1}^{\overline{a},\overline{c}=0}=\Delta \overline{y}_{j-1} \mid \overline{A}=\overline{a}, L_0=l_0)}\\
    =&\frac{P(\Delta \overline{Y}_{j}^{\overline{a},\overline{c}=0}=\Delta \overline{y}_{j}, \overline{D}_{j}^{\overline{a},\overline{c}=0}=\overline{d}_j, \overline{L}_{j-1}^{\overline{a},\overline{c}=0}=\overline{l}_{j-1}  \mid C_0^{\overline{a}}=0, \overline{A}=\overline{a},L_0=l_0)}{P(\overline{D}_{j}^{\overline{a},\overline{c}=0}=\overline{d}_j, \overline{L}_{j-1}^{\overline{a},\overline{c}=0}=\overline{l}_{j-1},\Delta \overline{Y}_{j-1}^{\overline{a},\overline{c}=0}=\Delta \overline{y}_{j-1} \mid C_0^{\overline{a}}=0, \overline{A}=\overline{a},L_0=l_0)} ~,
\end{align*}
where we have used Bayes' law in the first line,  (\ref{eq:exchangeability_separable_i}) and (\ref{eq:app_positivity_separable_i}) in the second line (expression (\ref{eq:app_positivity_separable_i}) ensures that the conditioning sets have non-zero probability) and the fact that all individuals are uncensored at time $k=0$ in the third line. Next, using Bayes' law again, we have that the above is equal to
\begin{align*}
    \scriptstyle\frac{P(\Delta \overline{Y}_{j}^{\overline{a},\overline{c}=0}=\Delta \overline{y}_{j}, \overline{D}_{j}^{\overline{a},\overline{c}=0}=\overline{d}_j, \overline{L}_{j-1}^{\overline{a},\overline{c}=0}=\overline{l}_{j-1}  \mid  Y_0^{\overline{a},\overline{c}=0}=y_0, D_0^{\overline{a},\overline{c}=0}=d_0,  C_0^{\overline{a}}=0, \overline{A}=\overline{a},L_0=l_0)}{P(\overline{D}_{j}^{\overline{a},\overline{c}=0}=\overline{d}_j, \overline{L}_{j-1}^{\overline{a},\overline{c}=0}=\overline{l}_{j-1},\Delta \overline{Y}_{j-1}^{\overline{a},\overline{c}=0}=\Delta \overline{y}_{j-1} \mid Y_0^{\overline{a},\overline{c}=0}=y_0,D_0^{\overline{a},\overline{c}=0}=d_0,  C_0^{\overline{a}}=0, \overline{A}=\overline{a},L_0=l_0)} ~.
\end{align*}
Using (\ref{eq:exchangeability_separable_ii}), the above is equal to
\begin{align*}
    \scriptstyle\frac{P(\Delta \overline{Y}_{j}^{\overline{a},\overline{c}=0}=\Delta \overline{y}_{j}, \overline{D}_{j}^{\overline{a},\overline{c}=0}=\overline{d}_j, \overline{L}_{j-1}^{\overline{a},\overline{c}=0}=\overline{l}_{j-1}  \mid  \overline{C}_1^{\overline{a},\overline{c}=0}=0, L_0=l_0, Y_0^{\overline{a},\overline{c}=0}=y_0,D_0^{\overline{a},\overline{c}=0}=d_0, \overline{A}=\overline{a})}{P(\overline{D}_{j}^{\overline{a},\overline{c}=0}=\overline{d}_j, \overline{L}_{j-1}^{\overline{a},\overline{c}=0}=\overline{l}_{j-1},\Delta \overline{Y}_{j-1}^{\overline{a},\overline{c}=0}=\Delta \overline{y}_{j-1} \mid   \overline{C}_1^{\overline{a},\overline{c}=0}=0, L_0=l_0, Y_0^{\overline{a},\overline{c}=0}=y_0,D_0^{\overline{a},\overline{c}=0}=d_0, \overline{A}=\overline{a}) 
} ~.
\end{align*}
The conditioning set has non-zero probability by positivity (\ref{eq:app_positivity_separable_iii}). After iterating this procedure, we obtain
\begin{align*}
    \scriptstyle\frac{P(\Delta \overline{Y}_{j}^{\overline{a},\overline{c}=0}=\Delta \overline{y}_{j}, \overline{D}_{j}^{\overline{a},\overline{c}=0}=\overline{d}_j  \mid C_j^{\overline{a},\overline{c}=0}=0, \overline{L}_{j-1}^{\overline{a},\overline{c}=0}=l_{j-1}, \overline{Y}_{j-1}^{\overline{a},\overline{c}=0}=\overline{y}_{j-1},\overline{D}_{j-1}^{\overline{a},\overline{c}=0}=\overline{d}_{j-1},   \overline{A}=\overline{a}
)}{P(\overline{D}_{j}^{\overline{a},\overline{c}=0}=\overline{d}_j \mid C_j^{\overline{a},\overline{c}=0}=0, \overline{L}_{j-1}^{\overline{a},\overline{c}=0}=l_{j-1}, \overline{Y}_{j-1}^{\overline{a},\overline{c}=0}=\overline{y}_{j-1},\overline{D}_{j-1}^{\overline{a},\overline{c}=0}=\overline{d}_{j-1}, \overline{A}=\overline{a})} ~.
\end{align*}
Finally, using Bayes law again, the above is equal to
\begin{align*}
    P(\Delta Y_{j}^{\overline{a},\overline{c}=0}=\Delta y_{j}\mid \overline{D}_{j}^{\overline{a},\overline{c}=0}=\overline{d}_j,
    \overline{C}_{j}^{\overline{a},\overline{c}=0}=0,
    \overline{L}_{j-1}^{\overline{a},\overline{c}=0}=\overline{l}_{j-1}, \overline{Y}_{j-1}^{\overline{a},\overline{c}=0}=\overline{y}_{j-1}, \overline{A}=\overline{a}) ~.
\end{align*}
The final result follows because the above equality holds for any choice of $\Delta \overline{y}_j,\overline{d}_j,\overline{c}_j,\overline{l}_{j-1}, \overline{a}$.
\end{proof}

\begin{lemma} \label{lemma:dismiss_observed_component}
Suppose that the identification conditions for separable effects (\ref{eq:exchangeability_separable_i})-(\ref{eq:exchangeability_separable_ii}), (\ref{eq:dismissible_component_i})-(\ref{eq:dismissible_component_iv}), (\ref{eq:app_positivity_separable_i})-(\ref{eq:app_consistency_separable}) and the modified treatment assumption (\ref{eq:modified_treatment_assumption}) hold. We then have that
\begin{align}
   & P(\Delta Y_j=\Delta y_j\mid \overline{D}_{j}=\overline{d}_j, \overline{C}_j=0,\overline{L}_{j-1}=\overline{l}_{j-1},  \overline{Y}_{j-1}=  \overline{y}_{j-1},A_Y=a_Y,A_D=a_D) \notag\\
    &\qquad =P(  \Delta Y_j=  \Delta y_j\mid \overline{D}_{j}=\overline{d}_j, \overline{C}_j=0,\overline{L}_{j-1}=\overline{l}_{j-1},  \overline{Y}_{j-1}= \overline{y}_{j-1},A=a_Y) \label{eq:app_lemma2_i}~, \\\notag\\
   & P(D_{j}=d_{j}\mid \overline{C}_{j}=0, \overline{L}_{j-1}=\overline{l}_{j-1},  \overline{Y}_{j-1}= \overline{y}_{j-1}, \overline{D}_{j-1}=\overline{d}_{j-1}, A_Y=a_Y,A_D=a_D ) \notag\\
   &\qquad =P(D_{j}=d_{j}\mid \overline{C}_{j}=0, \overline{L}_{j-1}=\overline{l}_{j-1},  \overline{Y}_{j-1}= \overline{y}_{j-1}, \overline{D}_{j-1}=\overline{d}_{j-1},A=a_D )~, \label{eq:app_lemma2_ii}\\\notag\\
    &P(L_{Y,j-1}=l_{Y,j-1}\mid  \overline{Y}_{j-1}= \overline{y}_{j-1}, \overline{D}_{j-1}=\overline{d}_{j-1},\overline{C}_{j-1}=0,\overline{L}_{j-2}=\overline{l}_{j-2}, \notag\\
   & \qquad\qquad\qquad\qquad\qquad\qquad L_{D,j-1}=l_{D,j-1}, A_Y=a_Y,A_D=a_D) \notag\\
   &\qquad =P(L_{Y,j-1}=l_{Y,j-1}\mid  \overline{Y}_{j-1}= \overline{y}_{j-1}, \overline{D}_{j-1}=\overline{d}_{j-1},\overline{C}_{j-1}=0,\overline{L}_{j-2}=\overline{l}_{j-2}, \notag\\
   & \qquad \qquad\qquad\qquad\qquad\qquad\qquad L_{D,j-1}=l_{D,j-1}, A=a_Y) ~,\label{eq:app_lemma2_iii}\\\notag\\ 
    &P(L_{D,j-1}=l_{D,j-1}\mid  \overline{Y}_{j-1}= \overline{y}_{j-1}, \overline{D}_{j-1}=\overline{d}_{j-1},\overline{C}_{j-1}=0,\overline{L}_{j-2}=\overline{l}_{j-2}, \notag\\
    & \qquad\qquad\qquad\qquad\qquad\qquad A_Y=a_Y,A_D=a_D) \notag\\
    &\qquad =P(L_{D,j-1}=l_{D,j-1}\mid  \overline{Y}_{j-1}= \overline{y}_{j-1}, \overline{D}_{j-1}=\overline{d}_{j-1},\overline{C}_{j-1}=0,\overline{L}_{j-2}=\overline{l}_{j-2}, \notag\\
    & \qquad\qquad\qquad\qquad\qquad\qquad\qquad A=a_D) ~.\label{eq:app_lemma2_iv}
\end{align}
\end{lemma}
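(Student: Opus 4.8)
The plan is to prove Eq.~\ref{eq:app_lemma2_i} in full and remark that Eqs.~\ref{eq:app_lemma2_ii}--\ref{eq:app_lemma2_iv} follow by the identical argument with the roles of $A_Y$ and $A_D$ (and the corresponding parts of Lemmas~\ref{lemma:dismiss_counterfactual_component} and~\ref{lemma:expand_conditioning_set}) interchanged. The proof is a chain of identities that transports the left-hand side --- a statement about the hypothetical four-armed trial in which $(A_Y,A_D)$ are separately randomized --- to the right-hand side --- a statement about the actual two-armed trial with the single treatment $A$. Concretely, the chain is: (i) consistency in the four-armed trial; (ii) Lemma~\ref{lemma:expand_conditioning_set}; (iii) Lemma~\ref{lemma:dismiss_counterfactual_component}; (iv) the modified treatment assumption; (v) Lemma~\ref{lemma:expand_conditioning_set} run backwards in the two-armed trial; (vi) consistency in the two-armed trial.

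First I would use consistency for the four-armed trial: on the event $\{A_Y=a_Y,\ A_D=a_D,\ \overline C_j=0\}$ the factual quantities $\Delta Y_j,\overline D_j,\overline C_j,\overline L_{j-1},\overline Y_{j-1}$ coincide with the counterfactuals indexed by $(a_Y,a_D,\overline c=0)$, so the left-hand side of Eq.~\ref{eq:app_lemma2_i} equals $P(\Delta Y_j^{\overline a,\overline c=0}=\Delta y_j\mid \overline D_j^{\overline a,\overline c=0}=\overline d_j,\ \overline C_j^{\overline a,\overline c=0}=0,\ \overline L_{j-1}^{\overline a,\overline c=0}=\overline l_{j-1},\ \overline Y_{j-1}^{\overline a,\overline c=0}=\overline y_{j-1},\ \overline A=\overline a)$. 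Next I would apply Eq.~\ref{eq:app_lemma1_i} of Lemma~\ref{lemma:expand_conditioning_set} to delete $\overline C_j^{\overline a,\overline c=0}=0$ and $\overline A$ from the conditioning set (the positivity conditions Eqs.~\ref{eq:app_positivity_separable_i}--\ref{eq:app_positivity_separable_iii} ensure every event appearing is non-null). I would then invoke Eq.~\ref{eq:lemma_dismiss_counterfactual_i} of Lemma~\ref{lemma:dismiss_counterfactual_component}: that equality states that the conditional law of $Y_j^{a_Y,a_D,\overline c=0}$ given $(\overline Y_{j-1},\overline D_j,\overline L_{j-1})^{a_Y,a_D,\overline c=0}$ is the same for $a_D=0$ and $a_D=1$, hence it does not depend on $a_D$ at all, and conditioning on $\overline Y_{j-1}$ makes the laws of $Y_j$ and $\Delta Y_j$ interchangeable; so I may replace $a_D$ by $a_Y$ throughout.

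Having reached the counterfactual indexed by $(a_Y,a_Y,\overline c=0)$, I would invoke the modified treatment assumption (Eq.~\ref{eq:modified_treatment_assumption}), which says that assigning $A_Y=A_D=a$ yields the same outcomes and covariates as assigning $A=a$; hence $\Delta Y_j^{a_Y,a_Y,\overline c=0},\overline D_j^{a_Y,a_Y,\overline c=0},\overline L_{j-1}^{a_Y,a_Y,\overline c=0},\overline Y_{j-1}^{a_Y,a_Y,\overline c=0}$ equal their single-treatment analogs indexed by $a_Y$. I would then run the first two steps in reverse inside the two-armed trial: apply Lemma~\ref{lemma:expand_conditioning_set} in the direction that re-inserts $\overline C_j^{a_Y,\overline c=0}=0$ and $A$ into the conditioning set (its proof applies verbatim with the single treatment $A$ in place of $\overline A$, since the exchangeability and positivity hypotheses Eqs.~\ref{eq:exchangeability_separable_i}--\ref{eq:exchangeability_separable_ii} and~\ref{eq:app_positivity_separable_i}--\ref{eq:app_positivity_separable_iii} are already stated for $A$), and finally apply consistency (Eq.~\ref{eq:app_consistency_separable}) once more to replace the counterfactuals indexed by $(a_Y,\overline c=0)$ by the factual quantities on the event $\{A=a_Y,\ \overline C_j=0\}$. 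This yields the right-hand side of Eq.~\ref{eq:app_lemma2_i}.

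I expect the main obstacle to be bookkeeping rather than a conceptual difficulty: one must track precisely which variables are in the conditioning set at each stage so that Lemmas~\ref{lemma:dismiss_counterfactual_component} and~\ref{lemma:expand_conditioning_set} apply as stated --- in particular, reconciling the "$Y_j$ versus $\Delta Y_j$" and "$\overline A=(A_Y,A_D)$ versus $A$" mismatches between how those lemmas are phrased and how they are used here --- and one must verify at each substitution that the conditioning event is non-null, which is exactly where Eqs.~\ref{eq:app_positivity_separable_i}--\ref{eq:app_positivity_separable_iii} are needed. The only genuinely substantive step is the single transition crossing from the four-armed to the two-armed experiment, and that is precisely the point at which the modified treatment assumption is indispensable.
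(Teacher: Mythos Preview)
Your proposal is correct and follows essentially the same six-step chain as the paper's own proof: consistency $\to$ Lemma~\ref{lemma:expand_conditioning_set} $\to$ Lemma~\ref{lemma:dismiss_counterfactual_component} $\to$ (modified treatment assumption / reverse Lemma~\ref{lemma:expand_conditioning_set}) $\to$ consistency. The only cosmetic difference is that the paper swaps the order of your steps (iv) and (v) --- it first re-inserts $\overline C_j$ and $\overline A=(a_Y,a_Y)$ via Lemma~\ref{lemma:expand_conditioning_set} and then invokes Eq.~\ref{eq:modified_treatment_assumption} (together with positivity Eq.~\ref{eq:app_positivity_separable_ii}) to pass from $(A_Y,A_D)=(a_Y,a_Y)$ to $A=a_Y$ --- which is logically equivalent to your ordering.
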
 
The quantities on the LHS are identified in the four armed trial, whereas the quantities on the RHS are identified in the two armed trial.
\begin{proof}
We show the equality for (\ref{eq:app_lemma2_i}), as (\ref{eq:app_lemma2_ii})-(\ref{eq:app_lemma2_iv}) follow from analogous arguments using (\ref{eq:dismissible_component_ii})-(\ref{eq:dismissible_component_iv}) instead of (\ref{eq:dismissible_component_i}). We have that
\begin{align*}
    &\quad P(\Delta Y_j=\Delta y_j\mid \overline{D}_{j}=\overline{d}_j, \overline{C}_j=0,\overline{L}_{j-1}=\overline{l}_{j-1},  \overline{Y}_{j-1}=  \overline{y}_{j-1},A_Y=a_Y,A_D=a_D)\\
    &\stackrel{\text{(\ref{eq:app_consistency_separable})}}{=} P(\Delta Y_j^{a_Y,a_D,\overline{c}=0}=\Delta y_j\mid \overline{D}_{j}^{a_Y,a_D,\overline{c}=0}=\overline{d}_j, \overline{C}_j^{a_Y,a_D,\overline{c}=0}=0,\overline{L}_{j-1}^{a_Y,a_D,\overline{c}=0} =\overline{l}_{j-1}, \notag\\
    &\qquad\qquad\qquad\qquad\qquad\qquad\qquad \overline{Y}_{j-1}^{a_Y,a_D,\overline{c}=0}=  \overline{y}_{j-1},A_Y=a_Y,A_D=a_D) \\
    &\stackrel{\text{Lemma~\ref{lemma:expand_conditioning_set}}}{=} P(\Delta Y_j^{a_Y,a_D,\overline{c}=0}=\Delta y_j\mid \overline{D}_{j}^{a_Y,a_D,\overline{c}=0}=\overline{d}_j, \overline{L}_{j-1}^{a_Y,a_D,\overline{c}=0} =\overline{l}_{j-1}, \overline{Y}_{j-1}^{a_Y,a_D,\overline{c}=0}=  \overline{y}_{j-1}) \\
    &\stackrel{\text{Lemma~\ref{lemma:dismiss_counterfactual_component}}}{=} P(\Delta Y_j^{\overline{a}=(a_Y,a_Y),\overline{c}=0}=\Delta y_j\mid \overline{D}_{j}^{\overline{a}=(a_Y,a_Y),\overline{c}=0}=\overline{d}_j, \overline{L}_{j-1}^{\overline{a}=(a_Y,a_Y),\overline{c}=0} =\overline{l}_{j-1}, \\
    &\qquad\qquad\qquad\qquad\qquad\qquad\qquad\qquad \overline{Y}_{j-1}^{\overline{a}=(a_Y,a_Y),\overline{c}=0}=  \overline{y}_{j-1}) \\
    &\stackrel{\text{Lemma~\ref{lemma:expand_conditioning_set}}}{=}
    P(\Delta Y_j^{\overline{a}=(a_Y,a_Y),\overline{c}=0}=\Delta y_j\mid \overline{D}_{j}^{\overline{a}=(a_Y,a_Y),\overline{c}=0}=\overline{d}_j, \overline{C}_j^{\overline{a}=(a_Y,a_Y),\overline{c}=0}=0, \notag\\
    &\qquad\qquad\qquad\qquad\qquad\qquad\qquad\qquad \overline{L}_{j-1}^{\overline{a}=(a_Y,a_Y),\overline{c}=0} =\overline{l}_{j-1}, \overline{Y}_{j-1}^{\overline{a}=(a_Y,a_Y),\overline{c}=0}=  \overline{y}_{j-1}, \\
   &\qquad\qquad\qquad\qquad\qquad\qquad\qquad\qquad\qquad A_Y=a_Y,A_D=a_Y) \\
    &\stackrel{\text{(\ref{eq:modified_treatment_assumption}),(\ref{eq:app_positivity_separable_ii})}}{=} P(\Delta Y_j^{a=a_Y,\overline{c}=0}=\Delta y_j\mid \overline{D}_{j}^{a=a_Y,\overline{c}=0}=\overline{d}_j, \overline{C}_j^{a=a_Y,\overline{c}=0}=0,\overline{L}_{j-1}^{a=a_Y,\overline{c}=0} =\overline{l}_{j-1}, \notag\\
    &\qquad\qquad\qquad\qquad\qquad\qquad\quad  \overline{Y}_{j-1}^{a=a_Y,\overline{c}=0}= \overline{y}_{j-1},A=a_Y) \\
    &\stackrel{\text{(\ref{eq:app_consistency_separable})}}{=} P(\Delta Y_j=\Delta y_j\mid \overline{D}_{j}=\overline{d}_j, \overline{C}_j=0, \overline{L}_{j-1} =\overline{l}_{j-1}, \overline{Y}_{j-1}=\overline{y}_{j-1},A=a_Y) ~.
\end{align*}
\end{proof}
 
To derive the identification formula for separable effects, we proceed by sequential application of Bayes' theorem
\begin{align*}
    &P(\Delta \overline{Y}_k^{a_Y,a_D,\overline{c}=0}=\Delta \overline{y}_k, \overline{D}_k^{a_Y,a_D,\overline{c}=0}=\overline{d}_k,\overline{L}^{a_Y,a_D,\overline{c}=0}_k=\overline{l}_k ) \notag\\
    =& \prod_{j=0}^k P( \Delta Y^{a_Y,a_D,\overline{c}=0}_j=\Delta y_j\mid \overline{D}^{a_Y,a_D,\overline{c}=0}_{j}=\overline{d}_j, \overline{L}^{a_Y,a_D,\overline{c}=0}_{j-1}=\overline{l}_{j-1}, \overline{Y}^{a_Y,a_D,\overline{c}=0}_{j-1}=  \overline{y}_{j-1}) \notag\\
    &\quad \times P(D_{j}^{a_Y,a_D,\overline{c}=0}=d_{j}\mid  \overline{L}_{j-1}^{a_Y,a_D,\overline{c}=0}=\overline{l}_{j-1},  \overline{Y}_{j-1}^{a_Y,a_D,\overline{c}=0}= \overline{y}_{j-1}, \overline{D}_{j-1}^{a_Y,a_D,\overline{c}=0}=\overline{d}_{j-1} ) \notag\\
    &\quad \times P(L_{Y,j-1}^{a_Y,a_D,\overline{c}=0}=l_{Y,j-1}\mid  \overline{Y}_{j-1}^{a_Y,a_D,\overline{c}=0}= \overline{y}_{j-1}, \overline{D}_{j-1}^{a_Y,a_D,\overline{c}=0}=\overline{d}_{j-1},\overline{L}_{j-2}^{a_Y,a_D,\overline{c}=0}=\overline{l}_{j-2}, \notag\\
   & \qquad\qquad\qquad\qquad\qquad\qquad L_{D,j-1}^{a_Y,a_D,\overline{c}=0}=l_{D,j-1}) \notag\\
    &\quad \times P(L_{D,j-1}^{a_Y,a_D,\overline{c}=0}=l_{D,j-1}\mid  \overline{Y}_{j-1}^{a_Y,a_D,\overline{c}=0}= \overline{y}_{j-1}, \overline{D}_{j-1}^{a_Y,a_D,\overline{c}=0}=\overline{d}_{j-1},\overline{L}^{a_Y,a_D,\overline{c}=0}_{j-2}=\overline{l}_{j-2}) ~.
    \end{align*}
Using Lemma~\ref{lemma:expand_conditioning_set} and (\ref{eq:app_consistency_separable}), the above is equal to
    \begin{align}
     &\prod_{j=0}^k P(\Delta Y_j=\Delta y_j\mid \overline{D}_{j}=\overline{d}_j, \overline{C}_j=0,\overline{L}_{j-1}=\overline{l}_{j-1},  \overline{Y}_{j-1}=  \overline{y}_{j-1},A_Y=a_Y,A_D=a_D) \notag\\
    &\quad \times P(D_{j}=d_{j}\mid \overline{C}_{j}=0, \overline{L}_{j-1}=\overline{l}_{j-1},  \overline{Y}_{j-1}= \overline{y}_{j-1}, \overline{D}_{j-1}=\overline{d}_{j-1}, A_Y=a_Y,A_D=a_D ) \notag\\
    &\quad \times P(L_{Y,j-1}=l_{Y,j-1}\mid  \overline{Y}_{j-1}= \overline{y}_{j-1}, \overline{D}_{j-1}=\overline{d}_{j-1},\overline{C}_{j-1}=0,\overline{L}_{j-2}=\overline{l}_{j-2}, \notag\\
   & \qquad\qquad\qquad\qquad\qquad\qquad \overline{L}_{A_D,j-1}=\overline{l}_{A_D,j-1}, A_Y=a_Y,A_D=a_D) \notag\\
    &\quad \times P(L_{D,j-1}=l_{D,j-1}\mid  \overline{Y}_{j-1}= \overline{y}_{j-1}, \overline{D}_{j-1}=\overline{d}_{j-1},\overline{C}_{j-1}=0,\overline{L}_{j-2}=\overline{l}_{j-2}, \notag\\
    & \qquad\qquad\qquad\qquad\qquad\qquad A_Y=a_Y,A_D=a_D) ~.
    \label{eq:app_g-formula_4_levels}
\end{align} 
 
The quantities on RHS of (\ref{eq:app_g-formula_4_levels}) are identified in the four armed trial.  The final identification formula for separable effects, which is a function of observed quantities in the two armed trial, follows directly from application of Lemma~\ref{lemma:dismiss_observed_component}, which gives
\begin{align}
    &P(\Delta \overline{Y}_k^{a_Y,a_D,\overline{c}=0}=\Delta \overline{y}_k, \overline{D}_k^{a_Y,a_D,\overline{c}=0}=\overline{d}_k,\overline{L}^{a_Y,a_D,\overline{c}=0}_{k-1}=\overline{l}_{k-1} )  \notag\\
    =&\prod_{j=0}^{k} P(\Delta Y_j=\Delta y_j\mid \overline{D}_{j}=\overline{d}_j, \overline{C}_j=0,\overline{L}_{j-1}=\overline{l}_{j-1},  \overline{Y}_{j-1}=  \overline{y}_{j-1},A=a_Y) \notag\\
    &\quad \times P(D_{j}=d_{j}\mid C_{j}=0, \overline{L}_{j-1}=\overline{l}_{j-1},  \overline{Y}_{j-1}= \overline{y}_{j-1}, \overline{D}_{j-1}=\overline{d}_{j-1}, A=a_D ) \notag\\
    &\quad \times P(L_{Y,j-1}=l_{Y,j-1}\mid  \overline{L}_{A_D,j-1}=\overline{l}_{A_D,j-1}, \overline{Y}_{j-1}= \overline{y}_{j-1}, \overline{D}_{j-1}=\overline{d}_{j-1},\overline{C}_{j-1}=0, \notag\\
   & \qquad\qquad\qquad\qquad\qquad\qquad  \overline{L}_{j-2}=\overline{l}_{j-2}, A=a_Y) \notag\\
    &\quad \times P(L_{D,j-1}=l_{D,j-1}\mid  \overline{Y}_{j-1}= \overline{y}_{j-1}, \overline{D}_{j-1}=\overline{d}_{j-1},\overline{C}_{j-1}=0,\overline{L}_{j-2}=\overline{l}_{j-2},  A=a_D)~. \notag\\ \label{eq:app_g-formula_separable}
\end{align}

\subsubsection{IPW representation}
Next, we will show that
\begin{align}
    E[\Delta Y^{a_Y,a_D,\overline{c}=0}_i] = E\left[ \frac{I(A=a_Y)}{\pi_A(A)}\cdot\frac{I(C_i=0)}{\prod_{j=0}^i \pi_{C_j}(C_j)}\cdot\frac{\prod_{j=0}^i \pi_{D_j}^{a_D}}{\prod_{j=0}^i \pi_{D_j}^{a_Y}}\cdot\frac{\prod_{j=0}^{i-1} \pi_{L_{D,j}}^{a_D}}{\prod_{j=0}^{i-1} \pi_{L_{D,j}}^{a_Y}} \cdot  \Delta Y_i\right] ~. \label{eq:app_IPW_separable}
\end{align}
To begin, we use Bayes' theorem sequentially to write out the joint density
\begin{align}
    &P(A=a,\overline{C}_i=\overline{c}_i,\Delta \overline{Y}_i=\Delta\overline{y}_i,\overline{D}_i=\overline{d}_i,\overline{L}_{i-1}=\overline{l}_{i-1}) = \notag\\ 
    &\prod_{j=0}^{i} P(\Delta Y_j=\Delta y_j\mid \overline{D}_{j}=\overline{d}_j, \overline{C}_j=\overline{c}_j,\overline{L}_{j-1}=\overline{l}_{j-1},  \overline{Y}_{j-1}=  \overline{y}_{j-1},A=a) \notag\\
    &\quad \times P(D_{j}=d_{j}\mid \overline{C}_{j}=\overline{c}_j, \overline{L}_{j-1}=\overline{l}_{j-1},  \overline{Y}_{j-1}= \overline{y}_{j-1}, \overline{D}_{j-1}=\overline{d}_{j-1}, A=a ) \notag\\
    &\quad \times P(C_{j}=c_{j}\mid \overline{L}_{j-1}=\overline{l}_{j-1},  \overline{Y}_{j-1}= \overline{y}_{j-1}, \overline{D}_{j-1}=\overline{d}_{j-1},\overline{C}_{j-1}=\overline{c}_{j-1}, A=a ) \notag\\
    &\quad \times P(L_{Y,j-1}=l_{Y,j-1}\mid  \overline{Y}_{j-1}= \overline{y}_{j-1}, \overline{D}_{j-1}=\overline{d}_{j-1},\overline{C}_{j-1}=\overline{c}_{j-1},\overline{L}_{j-2}=\overline{l}_{j-2}, \notag\\
   & \qquad\qquad\qquad\qquad\qquad\qquad \overline{L}_{A_D,j-1}=\overline{l}_{A_D,j-1}, A=a) \notag\\
    &\quad \times P(L_{D,j-1}=l_{D,j-1}\mid  \overline{Y}_{j-1}= \overline{y}_{j-1}, \overline{D}_{j-1}=\overline{d}_{j-1},\overline{C}_{j-1}=\overline{c}_{j-1},\overline{L}_{j-2}=\overline{l}_{j-2},  A=a) \notag\\
    &\quad \times P(A=a\mid L_0=l_0) ~. \notag\\\label{eq:app_bayes_joint}
\end{align}
Writing out RHS of (\ref{eq:app_IPW_separable}) as a discrete sum over the density in (\ref{eq:app_bayes_joint}), we have that RHS of (\ref{eq:app_IPW_separable}) is equal to
\begin{align*}
    &\sum_a\sum_{\overline{c}_i}\sum_{\Delta\overline{y}_i}\sum_{\overline{d}_i}\sum_{\overline{l}_{i-1}} P(A=a,\overline{C}_i=\overline{c}_i,\Delta \overline{Y}_i=\Delta\overline{y}_i,\overline{D}_i=\overline{d}_i,\overline{L}_{i-1}=\overline{l}_{i-1})\cdot \Delta y_i \notag\\
    &\quad \times \frac{I(a=a_Y)}{P(A=a_Y\mid L_0=l_0)} \\
    &\quad\times \frac{I(\overline{c}_i=0)}{\prod_{j=0}^i P(C_j=0\mid C_{j-1}=0,\overline{D}_{j-1}=\overline{d}_{j-1},\overline{L}_{j-1}=\overline{l}_{j-1}, \overline{Y}_{j-1}=\overline{y}_{j-1},A=a_Y)} \\
    & \quad\times \frac{\prod_{j=0}^i P(D_j= d_j \mid \overline{C}_j=0,\overline{L}_{j-1}=\overline{l}_{j-1},\overline{Y}_{j-1}=\overline{y}_{j-1},\overline{D}_{j-1}=\overline{d}_{j-1},A=a_D)}{\prod_{j=0}^i P(D_j= d_j \mid \overline{C}_j=0,\overline{L}_{j-1}=\overline{l}_{j-1},\overline{Y}_{j-1}=\overline{y}_{j-1},\overline{D}_{j-1}=\overline{d}_{j-1},A=a_Y)} \\
    &\quad\times \frac{\prod_{j=0}^{i-1} P(L_{D,j}=l_{D,j}\mid \overline{L}_{j-1}=\overline{l}_{j-1},\overline{Y}_{j}=\overline{y}_j,\overline{D}_{j}=\overline{d}_j,\overline{C}_{j}=\overline{c}_j,A=a_D)}{\prod_{j=0}^{i-1} P(L_{D,j}=l_{D,j}\mid \overline{L}_{j-1}=\overline{l}_{j-1},\overline{Y}_{j}=\overline{y}_j,\overline{D}_{j}=\overline{d}_j,\overline{C}_{j}=\overline{c}_j,A=a_Y)} \\
    \stackrel{\text{(\ref{eq:app_g-formula_separable})}}{=}& \sum_{\Delta\overline{y}_i}\sum_{\overline{d}_i}\sum_{\overline{l}_{i-1}} P(\Delta \overline{Y}_i^{a_Y,a_D,\overline{c}=0}=\Delta \overline{y}_i, \overline{D}_i^{a_Y,a_D,\overline{c}=0}=\overline{d}_i,\overline{L}^{a_Y,a_D,\overline{c}=0}_{i-1}=\overline{l}_{i-1} )  \cdot \Delta y_i \\
    =&E[\Delta Y_i^{a_Y,a_D,\overline{c}=0}] ~.
\end{align*}

\subsubsection{Limit of fine discretizations}
To proceed, we define the weights
\begin{align}
    W_{L_{D},i}(a_Y,a_D)=\frac{\prod_{j=0}^i \pi_{L_{D,j}}^{a_D}}{\prod_{k=0}^i \pi_{L_{D,k}}^{a_Y}}, \label{eq: separable LAD discrete weights}
\end{align}
and 
\begin{align}
    W_{D,i}(a_Y,a_D) &= \frac{\prod_{j=0}^{i}[1-\Delta \Uplambda_j^{D\mid\mathcal{F}}(a_D)]^{1-D_i}}{\prod_{j=0}^i [1-\Delta \Uplambda_j^{D\mid\mathcal{F}}(a_Y)]^{1-D_i}} \bigg(\frac{\Delta \Uplambda_j^{D\mid\mathcal{F}}(a_D)}{\Delta \Uplambda_j^{D\mid\mathcal{F}}(a_Y)} \bigg)^{D_i} ~. \label{eq: separable D discrete weights}
\end{align}

Using the laws of probability, we may write (\ref{eq:app_IPW_separable}) as
\begin{align}
   &E[Y_k^{a_Y,a_D,\overline{c}=0}] =\notag\\
   &\sum_{i=0}^k\prod_{j=0}^{i-1}[1-\Delta A_j^D(a_Y)] \notag\\
   &\quad \times E\left[W_A  W_{C,i} W_{D,i}(a_Y,a_D)W_{L_{D},i-1}(a_Y,a_D)  \Delta Y_i ~\bigg|~ C_i=0,D_{i-1}=0,A=a_Y\right] ~. \label{eq:app_separable_weighted_form} 
\end{align}

Using analogous arguments where we start from (\ref{eq:app_bayes_joint}) and replace $\Delta Y_k$ by $\Delta D_k$, we have that $E[D_k^{a_Y,a_D,\overline{c}=0}]$ is identified by
\begin{align}
    &E[D_k^{a_Y,a_D,\overline{c}=0}]= \notag\\
    &\sum_{i=0}^k\prod_{j=0}^{i-1}[1-\Delta A_j^D(a_Y)] \notag\\
    &\quad \times E\left[W_A  W_{C,i} W_{D,i}(a_Y,a_D)W_{L_{D},i-1}(a_Y,a_D)  \Delta D_i ~\bigg|~ C_i=0,D_{i-1}=0,A=a_Y\right] ~. \label{eq:survival_separable_weighted_form}
\end{align}

As expected, we recover the identification formulas for total effect when choosing $a_Y=a_D=a$ in (\ref{eq:app_separable_weighted_form}) and (\ref{eq:survival_separable_weighted_form}).

\subsubsection{An alternative IPW representation \label{sec:alternative_IPW_separable}}
Through analogous arguments that we used to derive (\ref{eq:app_IPW_separable}), it follows that
\begin{align*}
    E[\Delta Y^{a_Y,a_D,\overline{c}=0}_i] = E\left[ \frac{I(A=a_D)}{\pi_A(A)}\cdot\frac{I(C_i=0)}{\prod_{j=0}^i \pi_{C_j}(C_j)}\cdot\frac{\prod_{j=0}^i \pi_{Y_j}^{a_Y}}{\prod_{j=0}^i \pi_{Y_j}^{a_D}}\cdot\frac{\prod_{j=0}^{i-1} \pi_{L_{Y,j}}^{a_Y}}{\prod_{j=0}^{i-1} \pi_{L_{Y,j}}^{a_D}} \cdot  \Delta Y_i\right] ~.
\end{align*}
Next, we define the weights
\begin{align*}
    W_{L_{Y},i}(a_Y,a_D)=\frac{\prod_{j=0}^i \pi_{L_{Y,j}}^{a_Y}}{\prod_{k=0}^i \pi_{L_{Y,k}}^{a_D}} ~.
\end{align*}
By close analogy with the argument found in Appendix D of \citet{stensrud_generalized_2021}, we can re-express the covariate weights as
\begin{align*}
    W_{L_{D},i}(a_Y,a_D) = \begin{split} &\frac{\prod_{j=0}^i P(A=a_D\mid L_{D,j},\overline{L}_{j-1},\overline{Y}_j,\overline{D}_j,\overline{C}_j)}{\prod_{j=0}^i P(A=a_Y\mid L_{D,j},\overline{L}_{j-1},\overline{Y}_j,\overline{D}_j,\overline{C}_j)} \\
    &\quad\times \frac{\prod_{j=0}^i P(A=a_Y\mid \overline{L}_{j-1},\overline{Y}_j,\overline{D}_j,\overline{C}_j)}{\prod_{j=0}^i P(A=a_D\mid \overline{L}_{j-1},\overline{Y}_j,\overline{D}_j,\overline{C}_j)} ~, 
    \end{split} 
\end{align*}
and
\begin{align*}
    W_{L_{Y},i}(a_Y,a_D) = \begin{split} &\frac{\prod_{j=0}^i P(A=a_Y\mid \overline{L}_j,\overline{Y}_j,\overline{D}_j,\overline{C}_j)}{\prod_{j=0}^i P(A=a_D\mid \overline{L}_j,\overline{Y}_j,\overline{D}_j,\overline{C}_j)} \\
    &\quad\times \frac{\prod_{j=0}^i P(A=a_D\mid L_{D,j},\overline{L}_{j-1},\overline{Y}_j,\overline{D}_j,\overline{C}_j)}{\prod_{j=0}^i P(A=a_Y\mid L_{D,j},\overline{L}_{j-1},\overline{Y}_j,\overline{D}_j,\overline{C}_j)} ~. 
    \end{split} 
\end{align*}
Thus, when the dismissible component conditions holds with $L_{Y,k}=L_k$ and $L_{D,k}=\emptyset$, it follows that $W_{L_{D},k}(a_Y,a_D)=1$. Conversely, when the dismissible component conditions hold with $L_{D,k}=L_k$ and $L_{Y,k}=\emptyset$, it follows that $W_{L_{Y},k}(a_Y,a_D)=1$.

\section{Correspondence of the independent censoring assumption\label{sec:app_correspondence_censoring}}
We begin by introducing the notion of faithfulness, following~\citet{spirtes_causation_2000}.
\begin{definition}\label{def:faithfulness}
A law $P$ is faithful to a causal directed acyclic graph $\mathcal{G}$ if for any disjoint set of nodes $A,B,C$ we have that $A\independent C\mid B$ under $P$ implies $(A\independent C\mid B)_{\mathcal{G}}$, where $(\bullet)_\mathcal{G}$ is used to denote graphical d-separation.
\end{definition}

In the following result, we establish a correspondence between the exchangeability assumption and the classical independent censoring assumption in event history analysis.

\begin{proposition}\label{prop:equivalent_independent_censoring}
Let the factual data in Sec.~\ref{sec:observed_data} be generated by an FFRCISTG model, and assume consistency (\ref{eq:consistency_total_effect}) and faithfulness (Definition~\ref{def:faithfulness}) hold. Then, (\ref{eq:discrete_independent_censoring}) implies exchangeability with respect to censoring (\ref{eq:exchangeability_total_ii}).

\end{proposition}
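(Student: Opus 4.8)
The plan is to route through the causal DAG $\mathcal{G}$: convert the factual conditional independence asserted by Eq.~\ref{eq:discrete_independent_censoring} into a d-separation statement in $\mathcal{G}$ via faithfulness, and then convert that into the counterfactual independence Eq.~\ref{eq:exchangeability_total_ii} via the single world intervention graph $\mathcal{G}(a,\overline{c}=0)$ together with the d-separation criterion for FFRCISTG models \citep{richardson_single_2013}. First I would cancel the $1/\Delta t$ factors and observe that Eq.~\ref{eq:discrete_independent_censoring}, holding for every value $\Delta y_j$ and every $j$, is exactly the factual statement $\Delta Y_j \independent \overline{C}_j \mid \overline{D}_j,\overline{L}_{j-1},\overline{Y}_{j-1},A$ -- equivalently $Y_j\independent C_i\mid \overline{D}_j,\overline{L}_{j-1},\overline{Y}_{j-1},A$ for every $i\le j$ -- where the conditioning set is precisely the collection of factual variables preceding $Y_j$ other than the censoring indicators. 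By faithfulness (Definition~\ref{def:faithfulness}) each such independence lifts to a d-separation in $\mathcal{G}$; since no censoring node sits in this conditioning set while every non-censoring predecessor of $Y_j$ does, this rules out in $\mathcal{G}$ both any directed edge $C_i\to Y_j$ and any back-door path between $C_i$ and $Y_j$ that is left open by that conditioning set -- in particular any unmeasured common cause of a censoring node and a future recurrent-event node of the type depicted as $U_{CY}$ in Fig.~\ref{fig:total_effect}.

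Next I would pass to the SWIG $\mathcal{G}(a,\overline{c}=0)$, using that the ``random half'' of a split node retains only its incoming edges: thus the natural-value node $C_{k+1}^{a,\overline{c}=0}$ has no children in the SWIG, so every path from it to a future outcome $Y_j^{a,\overline{c}=0}$ with $j\ge k+1$ is a back-door path that begins with an edge from a parent of $C_{k+1}$. A parent that is a measured past variable lies in the conditioning set of Eq.~\ref{eq:exchangeability_total_ii} and blocks the path there as a non-collider; an unmeasured parent would generate, back in $\mathcal{G}$, precisely a path of the kind excluded in the previous step. Hence $C_{k+1}^{a,\overline{c}=0}$ is d-separated from $\underline{Y}_{k+1}^{a,\overline{c}=0}$ given $\overline{L}_k^{a,\overline{c}=0},\overline{Y}_k^{a,\overline{c}=0},\overline{D}_k^{a,\overline{c}=0},\overline{C}_k^{a,\overline{c}=0},A$ in the SWIG, and the FFRCISTG d-separation criterion delivers Eq.~\ref{eq:exchangeability_total_ii}; consistency (Eq.~\ref{eq:consistency_total_effect}) is invoked along the way to identify factual and counterfactual variables on the event $\{A=a,\overline{C}=0\}$ so that the d-separations established factually are the relevant ones.

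The hard part will be the bookkeeping in the last step: the factual d-separations are conditioned on the \emph{entire} history through interval $j$, whereas Eq.~\ref{eq:exchangeability_total_ii} conditions only on history through interval $k$, so I must argue -- using the temporal ordering (the parents of $C_{k+1}$ occur no later than interval $k+1$) and the fact that enlarging the factual conditioning set can only open collider paths that are themselves already blocked -- that the back-door paths excluded factually are exactly those that could invalidate the SWIG d-separation. Handling the node-splitting cleanly, and the ``natural value'' convention for $C_{k+1}^{a,\overline{c}=0}$ (one intervenes on $\overline{C}_k$ but reads off the natural value of $C_{k+1}$), is the other place requiring care; I would state these claims uniformly in $k$ and carry out the path analysis by induction on the outcome index $j$.
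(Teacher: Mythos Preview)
Your proposal is correct and follows essentially the same route as the paper's proof: both use faithfulness to pass from the factual conditional independence to a d-separation statement in the underlying DAG, then exploit the SWIG node-splitting property (the natural-value half of $C_{k+1}$ has no children) together with the FFRCISTG Markov property to obtain the counterfactual independence. The only cosmetic difference is that the paper argues by contrapositive --- classifying the possible violating paths as types (1)/(2)/(3) in $\mathcal{G}$ and (1')/(2') in the SWIG, then showing (1')$\Rightarrow$(1) and (2')$\Rightarrow$(2) --- whereas you argue directly; your version is arguably more explicit about the bookkeeping (the mismatch between conditioning on history through $j$ versus through $k$) that the paper's path enumeration leaves implicit.
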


\begin{proof}[Proof]
Expression (\ref{eq:discrete_independent_censoring}) is equivalent to the statement
\begin{align}
    \Delta Y_{j} \independent \overline{C}_j \mid \overline{D}_{j}, \overline{L}_{j-1}, \overline{Y}_{j-1}, A \quad\text{for}\quad j\in\{1,\dots,K+1\} ~. \label{eq:discrete_independence}
\end{align}
Under faithfulness, a violation of (\ref{eq:discrete_independence}) is equivalent to the existence of one of the three paths
\begin{enumerate}
    \item $\overline{C}_k \leftarrow U \rightarrow Y_k$
    \item $\overline{C}_k \leftarrow U_1 \rightarrow X \leftarrow U_2 \rightarrow Y_k$
    \item $\overline{C}_k\rightarrow Y_k$
\end{enumerate}
for some $k\in\{1,\dots,K+1\}$, where $X\in\{\overline{L}_k,\overline{Y}_{k-1},\overline{D}_k,A\}$.
Likewise, the violation of \eqref{eq:exchangeability_total_ii} is equivalent to the existence of one of the paths 
\begin{enumerate}[(1')]
    \item $\overline{C}_k^{a,\overline{c}=0} \leftarrow U \rightarrow Y_k^{a,\overline{c}=0}$
    \item $\overline{C}_k^{a,\overline{c}=0} \leftarrow U_1 \rightarrow X^\ast \leftarrow U_2 \rightarrow Y_k^{a,\overline{c}=0}$
\end{enumerate}
for some $k\in\{1,\dots,K+1\}$, where $X^\ast\in\{\overline{L}_{k-1}^{a,\overline{c}=0},\overline{Y}_{k-1}^{a,\overline{c}=0},\overline{D}_k^{a,\overline{c}=0},A\}$.

By the properties of transforming a DAG into a SWIG~\citep{richardson_primer_2013}, and by consistency (\ref{eq:consistency_total_effect}), the existence of (1') implies the existence of (1), and the existence of (2') implies the existence of (2).\footnote{The reverse implications do not hold. A counterexample is given by partial exchangeability~\citep{sarvet_graphical_2020}.} It follows that violation of (\ref{eq:exchangeability_total_ii}) implies violation of (\ref{eq:discrete_independence}), and consequently of (\ref{eq:discrete_independent_censoring}).
\end{proof}

An analogous relation exists between identification conditions for the competing event. The classical independent censoring assumption for the competing event takes the form
\begin{align}
    \lambda_{D,t}^{\mathcal{F}^c} = \lambda_{D,t}^\mathcal{G} ~, \label{eq:survival_continuous_independent_censoring}
\end{align}
where
\begin{align*}
    \lambda_{D,t}^{\mathcal{F}^c}dt &= E[dN_t^D\mid \mathcal{F}^c_{t^-}]~,  \\
    \lambda_{D,t}^{\mathcal{G}}dt &= E[dN_t^D\mid \mathcal{G}_{t^-}] ~.
\end{align*}
A corresponding relation to (\ref{eq:survival_continuous_independent_censoring}) in discrete time is
\begin{align*}
    \frac{1}{\Delta t} \cdot 
    E[\Delta D_{j} \mid  \overline{L}_{j-1}, \overline{Y}_{j-1}, \overline{D}_{j-1}, A] = 
    \frac{1}{\Delta t} \cdot 
    E[\Delta D_{j} \mid \overline{C}_{j}, \overline{L}_{j-1}, \overline{Y}_{j-1},\overline{D}_{j-1}, A] ~. 
\end{align*}
Since $\Delta D_j\in\{0,1\}$, this can be written as
\begin{align}
    \frac{1}{\Delta t} \cdot 
    P(\Delta D_{j}=1 \mid  \overline{L}_{j-1}, \overline{Y}_{j-1}, \overline{D}_{j-1}, A) = 
    \frac{1}{\Delta t} \cdot 
    P(\Delta D_{j}=1 \mid \overline{C}_{j}, \overline{L}_{j-1}, \overline{Y}_{j-1},\overline{D}_{j-1}, A) ~. \label{eq:survival_discrete_independent_censoring}
\end{align}
Under faithfulness, exchangeability (\ref{eq:exchangeability_total_terminating}) is implied by (\ref{eq:survival_discrete_independent_censoring}). The contrast of Expressions (2) and (4) in~\citet{robins_correcting_2000} is similar to this correspondence.

\section{Estimation \label{sec:appendix estimation}}

\begin{theorem}\label{thm:convergence}
Suppose $P(Z_{t_{K+1}}=1) >0$. We let $\hat R^{(n,i)}$ be as in (\ref{eq: continuous time weight estimator}) (originally defined in \citet{ryalen_additive_2019}); an estimator of the weights $R^i$  based on additive hazard models with finite third moments on the covariates (see \citet[Theorem 2]{ryalen_additive_2019}). Suppose $R^{(n,i)}, R^i$ are uniformly bounded 
and that the $\mathcal F^{N,N^D,L,A}$-intensity of $N^i$ satisfies $E[\int_0^{t_{K+1}}\lambda_s ds]<\infty$. Then,
\begin{itemize}
    \item the estimator defined by (\ref{eq: weighted risk set estimator})-(\ref{eq: weighted rate estimator Hajek and H-T}) are consistent and predictably uniformly tight (P-UT).
    \item the estimator defined by the system (\ref{eq: mean frequency estimator}) is consistent.
\end{itemize}
\end{theorem}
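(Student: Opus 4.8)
The plan is to reduce the two bullets to a single analytic fact about the integrator processes $\hat B^Y_n,\hat B^D_n,\hat B^{D,w}_n$ appearing on the right-hand side of Eq.~\ref{eq: mean frequency estimator}: that they converge in probability, uniformly on $[0,t_{K+1}]$, to the deterministic continuous-time integrators identified in Sec.~\ref{section:correspondence of identification formulas}, and that they are P-UT (predictably uniformly tight). The first bullet is then precisely this statement for the estimators in Eqs.~\ref{eq: weighted risk set estimator}--\ref{eq: weighted rate estimator Hajek and H-T}, and the second bullet follows because Eq.~\ref{eq: mean frequency estimator} is a linear (hence globally Lipschitz) stochastic differential equation $dX_s=F(X_{s-})\,d\hat B_s$ with a unique solution given its integrators; the solution map of such an equation is continuous with respect to the driving semimartingales on the set where they are P-UT, so consistency of $(\hat Y_t,\hat S_t,\hat D_t)$ follows from the continuous mapping theorem.

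To prove convergence and the P-UT property of the integrators, I would first handle the case of \emph{known} weights. After the $1/n$ normalisation, each integrator in Eqs.~\ref{eq: weighted risk set estimator}--\ref{eq: weighted rate estimator Hajek and H-T} is an average of i.i.d.\ summand processes whose risk-set denominators $\tfrac1n\sum_j I(A_j=a)Z^j_s$ (respectively $H_s$) converge uniformly on $[0,t_{K+1}]$ to $E[I(A=a)Z_s]$ by a Glivenko--Cantelli argument, made elementary by monotonicity of the at-risk process $Z$; the hypothesis $P(Z_{t_{K+1}}=1)>0$ together with positivity of treatment keeps this limit bounded away from zero on the whole interval, so the known-weight integrators converge uniformly in probability to the continuous-time functionals of Sec.~\ref{section:correspondence of identification formulas}. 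The P-UT property holds because each known-weight integrator is a stochastic integral of a uniformly bounded predictable process against the observed counting processes $N^i,N^{D,i}$, whose normalised sums have predictable compensators of total expected mass bounded by $E[\int_0^{t_{K+1}}\lambda_s\,ds]<\infty$; the martingale parts are therefore tight and P-UT, and integration against bounded predictable integrands preserves this.

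Next I would pass from known to estimated weights. By \cite[Theorem~2]{ryalen_additive_2019} and the stated moment conditions, the additive-hazard cumulative-hazard plug-ins $\hat A^i,\hat A^{*,i}$ converge to $A^i,A^{*,i}$ uniformly on $[0,t_{K+1}]$ in probability, and the bandwidth term in $\hat\theta$ is controlled by the usual bias--variance argument. Since the weights solve the Dol\'eans--Dade equation Eq.~\ref{eq: stochastic exponential}, whose solution is a continuous functional of its bounded-variation driver via continuity of the product integral, the estimated weights $\hat R^{(n,i)}$, $\hat R^{(n,i),D}$ and $\hat{\overline R}^{(n,i)}$ converge to the true weights uniformly on $[0,t_{K+1}]$ in probability. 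Combining this uniform convergence with the assumed uniform boundedness, the estimated-weight integrators differ from their known-weight counterparts by stochastic integrals of predictable integrands that vanish uniformly in probability, integrated against P-UT processes, hence tend to zero. This yields convergence and the P-UT property for $\hat B^Y_n,\hat B^D_n,\hat B^{D,w}_n$, establishing the first bullet; feeding these into the continuity of the solution map of Eq.~\ref{eq: mean frequency estimator} (with $\hat S$ obtained first as the Dol\'eans--Dade exponential $\mathcal{E}(-\hat B^D_n)$ and $\hat Y,\hat D$ then as linear functionals of $\hat S$ and the remaining integrators) shows that $(\hat Y_t,\hat S_t,\hat D_t)$ converges uniformly in probability to the solution driven by the deterministic limit integrators, which by Sec.~\ref{section:correspondence of identification formulas} is the targeted counterfactual mean-frequency (and cumulative-incidence) curve; this is the second bullet.

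The main obstacle is the passage from the pointwise-in-$t$ weight convergence that holds in general to the uniform-in-$t$ convergence actually needed to carry the weights through stochastic integrals and to verify P-UT. This is overcome precisely by exploiting the Dol\'eans--Dade representation Eq.~\ref{eq: stochastic exponential}: uniform consistency of the Aalen-type additive-hazard cumulative hazards, which holds for the estimator of Sec.~\ref{subsection: estimating the weights}, is transported to uniform consistency of the weight processes by continuity of the product integral, after which the bounded-integrand stochastic-integral and tightness arguments described above apply.
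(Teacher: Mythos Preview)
Your high-level strategy---show the integrators $\hat B^Y,\hat B^D,\hat B^{D,w}$ converge and are P-UT, then invoke continuity of the solution map of the linear SDE in Eq.~\ref{eq: mean frequency estimator}---is the same as the paper's, and the paper likewise appeals to \cite{ryalen2018transforming} for the second bullet once the first is in hand. Where your argument diverges, and where it has a gap, is in how you establish convergence of the integrators.

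You build the argument on \emph{uniform} convergence of the estimated weights $\hat R^{(n,i)}$ to $R^i$ on $[0,t_{K+1}]$, claiming this follows from uniform consistency of the Aalen cumulative-hazard estimators and continuity of the Dol\'eans--Dade exponential. But \cite[Theorem~2]{ryalen_additive_2019}---the result actually invoked in the theorem statement---delivers only \emph{pointwise} convergence $\hat R^{(n,i)}_t\to R^i_t$ in probability for each fixed $t$. Upgrading this to uniform convergence is not automatic: the driver of Eq.~\ref{eq: continuous time weight estimator} contains the jump term $(\hat\theta^i_{s-}-1)\,d\bar N^i_s$ with the bandwidth ratio $\hat\theta^i_t=(\hat A^{*,i}_t-\hat A^{*,i}_{t-b})/(\hat A^{i}_t-\hat A^{i}_{t-b})$, and you would need the driver to converge in a topology (total variation or Skorohod) strong enough for the product-integral map to be continuous. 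Your ``usual bias--variance argument'' handwave does not supply this, and for the separable-effects weights $\mathcal W_D(a_Y,a_D)$ and $\mathcal W_{L_D}$, where $\bar N^i$ genuinely jumps at times that matter for the integrator, the issue is real.

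The paper sidesteps this entirely. It writes the relevant integrator as $\int Y^{(n)}_{s-}\,d\tilde X^{(n)}_s$ with $X^{(n)}=\tfrac1n\sum_i R^{(n,i)}N^i$, and uses integration by parts to isolate the piece $\tfrac1n\sum_i\int N^i_{s-}\,dR^{(n,i)}_s$. That piece is shown to vanish by decomposing it further into (i) a term handled by \cite[Lemma~2]{ryalen_additive_2019}, (ii) a mean-zero martingale average killed by the law of large numbers, and (iii) the crucial term $\tfrac1n\sum_i\int(R^{(n,i)}_{s-}-R^i_{s-})\lambda^i_s\,ds$, which goes to zero by dominated convergence using only the \emph{pointwise} convergence $R^{(n,i)}_t-R^i_t\to 0$ together with the uniform bound on the weights and $E[\int\lambda_s\,ds]<\infty$. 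P-UT is obtained not from a generic ``bounded integrand against P-UT integrator'' claim but by exhibiting $X^{(n)}$ as a sum of two processes each P-UT via \cite[Proposition~1]{ryalen_additive_2019} and \cite[Corollary~VI.6.20]{JacodShiryaev}. The non-orthogonal case (where $R^i$ and $N^i$ share jumps, as for the separable weights) is then handled by the same machinery with $R^i$ replaced by $\theta^i R^i$. In short, the paper's decomposition is engineered precisely so that pointwise weight consistency suffices; your route demands a uniform upgrade that you have not justified.
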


\begin{proof}
We show the result for the integrators in (\ref{eq: weighted risk set estimator}); the result for (\ref{eq: weighted rate estimator Hajek and H-T}) follows by similar arguments. Assume first that $R^i, R^{(n,i)}$ are orthogonal to $N^i$. Define $X^{(n)} = \frac{1}{n} \sum_{i=1}^n R^{(n,i)} N^i, Y^{(n)} = \frac{n}{\sum_{j=1}^n Z^j}$. Then,  \citet[Lemma 2]{ryalen_additive_2019} and the law of large numbers imply that $X^{(n)}$ converges to $E[ R^i N^i ] = E[\int R_{s-}^i dN_s^i]$. $Y^{(n)}$ converges to $1/E[Z^1]$ by the law of large numbers. Furthermore, $$X^{(n)} = \frac{1}{n} \sum_{i=1}^n \int_0^\cdot R_{s-}^{(n,i)} dN_s^i + \frac{1}{n} \sum_{i=1}^n \int_0^\cdot N_{s-}^i dR_{s}^{(n,i)}$$ is PUT (see \citet[VI, 6.6.a]{JacodShiryaev}), as it is a sum of two processes that are P-UT. The first term is P-UT from \citet[Proposition 1]{ryalen_additive_2019}.  The latter term is P-UT because $R^{(n,i)}$ is driven by a P-UT process  $K^{(n,i)}$ (see \citet{ryalen_additive_2019}) and $N_{-}^i$ is predictable, by  \citet[Corollary VI.6.20]{JacodShiryaev}. 
The estimator of interest is $ \int_0^\cdot Y^{(n)}_{s-} d\tilde X^{(n)}_s$, where $\tilde X^{(n)} = X^{(n)}- \frac{1}{n}\sum_{i=1}^n \int_0^t N_{s-}^i dR_{s}^{(n,i)}$. Now, the last term on the right hand side has the following decomposition:
\begin{align*}
    \frac{1}{n}\sum_{i=1}^n \int_0^{\cdot} N_{s-}^i dR_{s}^{(n,i)} &= \frac{1}{n}\sum_{i=1}^n N^i (R^{(n,i)} - R^{i}) - \frac{1}{n}\sum_{i=1}^n\int_0^{\cdot} (R_{s-}^{(n,i)} - R_{s-}^{i} ) dN_s^i \\
    &\qquad + \frac{1}{n}\sum_{i=1}^n\int_0^\cdot N_{s-}^i dR_s^i ~.
\end{align*}
By \citet[Lemma 2]{ryalen_additive_2019}, the first term on the right hand side converges to zero, while the third term converges to zero by the law of large numbers as it is a mean zero martingale. Finally,
\begin{align*}
    \frac{1}{n}\sum_{i=1}^n\int_0^{\cdot} (R_{s-}^{(n,i)} - R_{s-}^{i} ) dN_s^i &=  \frac{1}{n}\sum_{i=1}^n\int_0^{\cdot} (R_{s-}^{(n,i)} - R_{s-}^{i} ) dM_s^i \\
    &\qquad +  \frac{1}{n}\sum_{i=1}^n\int_0^{\cdot} (R_{s-}^{(n,i)} - R_{s-}^{i} ) \lambda_s^i ds ~.
\end{align*}
The first term on the right hand side converges to zero by the law of large numbers, while the second term converges to zero by dominated convergence, as $R^{(n,i)}_t - R_t^i$ converges in probability to zero for each $t$.

Thus, $\tilde X^{(n)}$ and $X^{(n)}$ converge to the same limit. Because $E[Z^1]$ is continuous, \citet[Corollary VI 3.33]{JacodShiryaev} implies that $(\tilde X^{(n)}, Y^{(n)})$ converges weakly (with respect to the Skorohod metric),  and \citet[Theorem VI 6.22]{JacodShiryaev} implies that also $ \int_0^\cdot Y_{s-}^{(n)} d\tilde X_s^{(n)} $ converges weakly to the deterministic limit $\int_0^\cdot  E[R_s^1 dN_s^1| Z_s^1 > 0]$.

If $N^i$ is not orthogonal to $R^i$ and $R^{(n,i)}$, we have
$$ R^i N^i = 
\int_0^\cdot \theta_s^i R_{s-}^i dN_s^i + \int_0^\cdot N_{s-}^i dR_s^i ~,$$
and likewise 
\begin{align*}
    \frac{1}{n} \sum_{i=1}^n R^{(n,i)} N^i &= \frac{1}{n} \sum_{i=1}^n \int_0^\cdot \hat \theta_{s-}^{i} R_{s-}^{(n,i)} dN_s^i + \frac{1}{n} \sum_{i=1}^n \int_0^\cdot N_{s-}^i dR_{s}^{(n,i)} \\
    &\qquad + \frac{1}{n}\sum_{i=1}^n \int_0^\cdot R_{s-}^{(n,i)}Z_s^i d[N^i,\hat A^i-\hat A^{*,i}]_s~,
\end{align*}
where the last term on the right-hand side can be neglected. We can then build on the argument above, replacing $R^i$ and $R^{(n,i)}$ with $\theta^i R^i$ and $\hat \theta^{i} R^{(n,i)}$ when necessary, to show the convergence.

The consistency of (\ref{eq: mean frequency estimator}) follows from \citet[Theorem 1]{ryalen2018transforming} because $\int_0^{\cdot} Y_{s-}^{(n)} d\tilde X_s^{(n)}$ is consistent and P-UT.
\end{proof}

\end{document}